\documentclass[11pt]{article}


\usepackage{mathrsfs}
\usepackage{bbm}
\usepackage{color}
\usepackage[top=1in, bottom=1in, left=1in, right=1in]{geometry}
\usepackage{amsfonts}
\usepackage{verbatim}
\usepackage{dsfont}
\usepackage{mathrsfs}
\usepackage{graphicx,amscd,xypic}
\usepackage{amsthm}
\usepackage{amssymb,amsmath,epsfig,natbib}
\usepackage{enumerate,enumitem}
\usepackage[colorlinks=true,citecolor=blue]{hyperref}
\usepackage{tikz}
\usepackage{ragged2e}



\numberwithin{equation}{section}

\newtheorem{theorem}{Theorem}[section]
\newtheorem{lemma}[theorem]{Lemma}
\newtheorem{proposition}[theorem]{Proposition}
\newtheorem{corollary}[theorem]{Corollary}

\newtheorem{Assumption}{Assumption}[section]

\renewcommand{\raggedright}{\leftskip=12pt \rightskip=12pt plus 0cm}

\newcommand{\EE}{{\mathbb E}}
\newcommand{\II}{{\mathbb I}}
\newcommand{\RR}{{\mathbb R}}
\newcommand{\PP}{{\mathbb P}}

\newcommand{\dd}{{\mathrm d}}

\allowdisplaybreaks


\title{\bf Variance Contracts}
\author{Yichun Chi\thanks{China Institute for Actuarial Science, Central University of Finance and Economics, Beijing 102206, China. Email: \url{yichun@cufe.edu.cn}.} \and Xun Yu Zhou\thanks{Department of Industrial Engineering and Operations Research and The Data Science Institute, Columbia University, New York, NY 10027, USA. Email: \url{xz2574@columbia.edu}.}
	\and Sheng Chao Zhuang\thanks{Department of Finance, University of Nebraska-Lincoln, NE, USA. Email: \url{szhuang3@unl.edu}.}}
\date{}

\begin{document}
\maketitle \vspace*{0.5cm}

\begin{abstract}
\normalsize

We study the design of an optimal insurance contract in which the insured maximizes her expected utility and the insurer limits the variance of his risk exposure while maintaining the principle of indemnity and charging the premium according to the expected value principle.  We derive the optimal policy semi-analytically, which is coinsurance above a deductible when the variance bound is binding. This policy automatically satisfies the incentive-compatible condition, which is crucial to rule out  ex post moral hazard.
We also find that the deductible is absent if and only if the contract pricing is actuarially fair.
Focusing on the actuarially fair case, we carry out comparative statics on the effects of the insured's initial wealth and the variance bound on insurance demand. Our results indicate that the expected coverage
is always larger for a wealthier insured, implying that the underlying  insurance is a {\it normal} good, which supports certain recent empirical findings.
Moreover, as the variance constraint tightens, the insured who is prudent cedes less losses, while the insurer is exposed to less tail risk.

\end{abstract}

\vspace*{0.5cm}

\begin{bfseries}Key-words\end{bfseries}: Insurance design; expected value principle; variance; incentive compatibility; comparative statics.
\newpage

\section{Introduction}

Insurance is an efficient mechanism to facilitate risk reallocation between two parties.
\citet{Borch1960} was the first to study the
insurance contract design problem and to prove
that given a fixed
premium, a stop-loss (or deductible) insurance policy (i.e., full coverage above a deductible) achieves the smallest variance of the insured's share of  payment.\footnote{\citet{Borch1960} presents the problem in a reinsurance setting, in which the ceding insurer corresponds to the insured in an insurance setting.}
\citet{Arrow1963} assumes that the premium is calculated by the {\it expected value principle} (i.e., the insurance cost is proportional to the expected indemnity) and
imposes the \emph{principle of indemnity} (i.e., the insurer's reimbursement  is non-negative and smaller than the loss). Under these specifics, \citet{Arrow1963} shows that the stop-loss insurance is Pareto optimal between a risk-neutral insurer and a risk-averse insured. This is a foundational result that has earned the name of {\it Arrow's theorem of the deductible} in the literature. \citet{Mossin1968} further proves that the deductible is strictly positive if and only if the insurance price is actuarially unfair (i.e., the safety loading is strictly positive).

In \citet{Arrow1963}, the insurer is assumed to be risk-neutral. This is based on the assumption that the insurer has a sufficiently large number of independent and homogenous insureds, such that his risk, by the law of large numbers,  is sufficiently diversified to be nearly zero. This kind of theoretically
ideal situation hardly occurs in practice, even if the insurer does indeed have
a huge number of clients. Moreover, it does not apply to tailor-made contracts for insuring one-off events (e.g., the shipment of a highly valuable painting). Theorem 2 in \citet{Arrow1971} stipulates  that, when the insurer is risk-averse and  insurance cost is absent, an optimal contract must involve coinsurance.   \citet{Raviv} extends this result to include nonlinear insurance costs and shows that an optimal policy involves
both a deductible and coinsurance. Much of the recent research in this area has focused on the insurer's tail risk exposure. \citet{Cummins2004} and \citet{Zhou2010} extend Arrow's model by introducing an {\it exogenous} upper bound on indemnity and thereby limiting the insurer's liability with respect to catastrophic losses. From a regulatory perspective, \citet{ZhouWu2008} propose a model in which the insurer's expected loss
above a prescribed level
is controlled, and they conclude that an optimal policy is generally piecewise linear.
\citet{Doherty2015} investigate the case in which losses are nonverifiable and deduce that a contract with a deductible and an {\it endogenous} upper limit is optimal.

As important as tail risks are for both parties in insurance contracts,  in practice insurers are also concerned with other parts of the loss distribution. \citet{Kaye2005}, in a Casualty Actuarial Society report, writes \vspace{0.4cm}
\begin{center}
\parbox{\textwidth}{\raggedright{\it ``Different stakeholders have different levels of interest in different parts of the distribution - the perspective of the decision-maker is important. Regulators and rating agencies will be focused on the extreme downside where the very existence of the company is in doubt. On the other hand, management and investors will have a greater interest in more near-term scenarios towards the middle of the distribution and will focus on the likelihood of making a profit as well as a loss" (p. 4). }}
\end{center}
\vspace{0.4cm}

Assuming the insurer to be risk-averse with a concave utility function  indeed takes the whole risk distribution into consideration, as studied in  \citet{Arrow1971} and
\citet{Raviv}. However, there are notable drawbacks to the utility function
approach. The notion of utility is opaque for many non-specialists, and the benefit--risk tradeoff is only implied, {\it implicitly}, through a utility function. Moreover, one can rarely obtain,
analytically, optimal policies with a general utility, hindering post-optimality analyses such as comparative statics. For instance,  \citet{Raviv} derives a differential equation satisfied by the optimal indemnity, which takes a rather complex form depending on the utility function used.

By contrast, variance, as a measure of  risk originally put forth in Markowitz's pioneering work \citep[]{Markowitz1952}, is also related to the whole distribution, yet it is more intuitive and transparent. \citet{Borch1960} designs a contract that aims to minimize the variance of the {\it insured's} liability. \citet{Kaluszka2001} extends \citet{Borch1960}'s work by incorporating a variance-related premium principle and shows that the optimal contract to minimize the variance of the insured's payment can be stop-loss, quota share
(i.e., the insurer covers a {\it constant} proportion of the loss) or a combination of the two.
\citet{Vajda1962} studies the problem from the {\it insurer's} perspective, and shows that a quota share policy minimizes the variance of indemnity in an actuarially fair contract. However, his result depends critically on limiting the admissible contracts to be such that the {\it ratio} between indemnity and loss increases as the loss increases, a feature that enables the derivation of a solution through rather simple calculus.\footnote{\citet{Vajda1962} claims that this feature ``agrees with the spirit of (re)insurance, at least in
most cases" (p. 259). However, for a larger loss, it is indeed in the spirit of insurance that the insurer should pay more, but it is not clear why he should be responsible for a higher {\it proportion}. Interestingly, our results will show that the optimal policies of our model possess this property {\it if} the insured is prudent;
see Corollary \ref{Vajda}.}

%
%

In this paper, we revisit the work of \citet{Arrow1963} by imposing a variance constraint on the {\it insurer}'s risk exposure. Unlike \citet{Vajda1962}, we consider the general actuarially unfair case and remove the restriction that the proportion of the insurer's payment increases with the size of the loss. The presence of the variance constraint causes substantial technical challenges in solving the problem. In the literature, there are  generally two approaches used to study variants of  Arrow's model: those involving sample-wise optimization and stochastic orders. However, the former fails to work for our problem due to the nonlinearity of the variance constraint, and the latter is not readily applicable either because the presence of the variance constraint invalidates the claim that any admissible contract is dominated by a stop-loss one.  The first contribution of this paper is methodological: we develop a new approach by {\it combining} the techniques of stochastic orders, calculus of variations and the Lagrangian duality to derive optimal insurance policies.
The solutions are semi-analytical in the sense that they can be computed by solving some {\it algebraic} equations (as opposed to {\it differential} equations in \citealt{Raviv}).

Because the expected value premium principle ensures the expected profit of the insurer, our model is essentially a  {\it mean--variance} model {\it \`a la} Markowitz for the {\it insurer}. 
Our second contribution is actuarial: we show that the optimal contract  is 
coinsurance above a deductible when the  variance constraint is binding. Moreover, the deductible disappears if and only if the insurance price is actuarially fair, consistent with  Mossin's Theorem \citep[]{Mossin1968}. These results are {\it qualitatively} similar to those of \citet{Raviv}, who uses a concave utility function
for the insurer. A natural question is why one would bother to study the mean--variance version of a problem that would generate contracts with similar characteristics  to its expected utility counterpart.
This question can be answered in  the same way as in the field of financial portfolio selection, where
there is an enormously large body  of study on the Markowitz mean--variance model along with its popularity in practice, {\it despite} the existence of the equally well-studied expected utility maximization models. In other words, expected utility  and mean--variance are two {\it different} frameworks, and, as argued earlier, the latter underlines a more transparent and explicit return--risk tradeoff, which usually leads to explicit solutions.

%
%
Our optimal policies involve coinsurance, which is widely utilized in the insurance industry. As pointed out by \citet{Raviv}, risk aversion on the part of the insurer could be a cause for coinsurance, but other attributes such as the nonlinearity of the insurance cost function could also lead to coinsurance.
Another explanation for coinsurance is to mitigate the  moral hazard risk; see \citet{Holmstrom1979} and \citet{Dionne1991}.   From the insured's perspective, \citet{Doherty1990} argue that default risk of the insurer can motivate the insured to choose coinsurance. \citet{Picard2000} also shows that coinsurance is optimal in order to reduce the risk premium paid to the auditor. In this paper, we prove that optimal policies  can turn from full insurance to coinsurance as the variance bound tightens, thereby
providing  a novel yet simple reason for the prevalent feature of coinsurance in insurance theory and practice: a variance bound on the insurer's risk exposure.

Intriguingly,  our optimal insurance polices {\it automatically}  satisfy the so-called \emph{incentive-compatible} condition that {\it both} the insured and the insurer pay more for a larger loss (or, equivalently,
the marginal indemnity is between 0 and 1).\footnote{The incentive-compatible condition is termed the {\it no-sabotage} condition in \citet{Carlier2003}.}
In \cite{Arrow1963}'s setting, the optimal contract -- the stop-loss one -- turns out to be incentive-compatible; however, this is generally untrue.
%
%
\citet{Gollier1996} considers an insured facing an additional background risk that is not insurable. Under the expected value principle, he discovers that the optimal insurance, which relies heavily on the dependence between the background risk and the loss, may render the marginal indemnity strictly larger than 1. \citet{Bernard:MF} generalize the insured's risk preference from expected utility to rank-dependent utility involving probability distortion (weighting), and also find that the optimal indemnity may  decrease when the loss increases. In both of these papers, the derived optimal contracts would  incentivize the insured to misreport the actual losses,
leading to ex post moral hazard. Equally absurd would be the case in which the insurer pays less for a larger loss. To address this issue, \citet{Huberman1983} propose the incentive-compatible condition as a {\it hard} constraint on  admissible insurance policies, in addition to the principle of indemnity. \citet{Xu2019} add this constraint to the model of \citet{Bernard:MF}, painstakingly developing a completely different approach in order to
overcome the difficulty arising out of this additional constraint and deriving qualitatively very different contracts. On the other hand,
\citet{Raviv} discovers that his optimal solution is incentive-compatible, assuming that the loss has a strictly positive probability density function. \citet{CarlierD2005} use a Hardy--Littlewood  rearrangement argument to prove that any optimal contract is dominated by an incentive-compatible contract, establishing the
optimality of the latter. However, their approach relies heavily on the assumption that the loss  is non-atomic.
Both of these studies rule out the important and practical case in which the loss
is atomic at $0$. By contrast, in the presence of the variance constraint, we show that the optimal policy is naturally  incentive-compatible even without the corresponding hard constraint or the assumption of an atom-less loss.

Our final contribution is a comparative statics analysis  examining the respective effects of the insured's wealth and the variance constraint on insurance demand under actuarially fair pricing. Our results indicate that the presence of a
variance bound fundamentally changes the insurance strategy -- it makes the insured's wealth relevant and it changes the way in which the two parties share the risk.
In particular, the expected coverage is
always {\it larger} for a wealthier insured who has strictly decreasing absolute prudence (DAP), rendering the insurance product a {\it normal good}. This finding provides some theoretical foundation  for the empirical observations of \citet{Millo2016} and \citet{Armantier2018}. Moreover, we show that the insurer has less downside risk  when contracting with a  wealthier insured with strictly DAP.\footnote{\citet{Menezes1980} introduce the notion of downside risk to compare two risks with the same mean and variance. The formal definition is given in Section~\ref{section:2:2}.} This result reconciles with the well-documented
phenomenon that more economically advanced regions or countries have higher insurance densities and penetrations.
On the other hand,  we establish that the variance bound significantly changes a prudent insured's risk transfer decision -- she would consistently transfer  more losses  as the variance bound loosens. A corollary of this result is, rather surprisingly, that the insurer can reduce the tail risk  by simply tightening the variance constraint. This suggests that our variance contracts do, {\it after all}, address the issue of
tail exposure.

The rest of the paper proceeds as follows. In Section \ref{modelseeting} we formulate the problem and present some preliminaries about risk preferences. In Section \ref{section:3} we develop the solution approach and present  the optimal insurance contracts. In Section \ref{section:4} we conduct a comparative analysis by examining the effects of the insured's initial wealth and the variance constraint on insurance demand. Section \ref{section:5} concludes the paper. Some auxiliary results and all proofs are relegated to the appendices.


\section{Problem Formulation and Preliminaries}\label{modelseeting}

\subsection{Variance contracts formulation}
An insured endowed with an initial wealth $w_0$ faces an insurable loss $X$, which is a non-negative, essentially bounded random variable defined on a probability space $(\Omega,\mathscr{F}, \PP)$ with  the cumulative distribution function (c.d.f.) $F_X(x):=\PP\{X\leqslant x\}$ and the essential supremum $\mathcal{M}<\infty$. An insurance contract design problem is to partition $X$ into two parts, $I(X)$ and $X-I(X)$, where $I(X)$ (the {\it indemnity}) is the portion of the loss that is ceded to the insurer (``he") and $R_I(X):=X-I(X)$ (the {\it retention}) is the portion borne by the insured (``she"). $I$ and $R_I$ are also called the insured's ceded and  retained loss functions, respectively. It is natural to require a contract to satisfy the {\it principle of indemnity}, namely the indemnity is non-negative and less than the amount of loss. Thus, the feasible set of indemnity functions is $$\mathfrak{C}:= \left\{I: 0\leqslant I(x)\leqslant x,\, \forall x\in[0,\mathcal{M}]\right\}.$$	

As the insurer covers part of the loss for the insured, he is compensated by collecting the premium from her.
Following many studies in the literature, we assume that the insurer calculates the  premium using the {\it expected value principle}. Specifically, the premium on making a non-negative random payment $Y$ is
charged as
$$
\pi(Y)=(1+\rho)\EE[Y]
$$
where $\rho\geqslant 0$ is the so-called {\it safety loading} coefficient.
His risk exposure under a contract $I$ for a loss $X$ is hence
$$
e_I(X)=I(X)-\pi(I(X)).
$$
The insurer may evaluate  this risk using  different measures  for different purposes, as \citet{Kaye2005} notes. In this paper, we assume that the insurer has sufficient regulatory capital and therefore focuses on the volatility of the underwriting risk. Specifically, he uses the variance  to measure the risk and requires
\begin{align*}
var[e_I(X)]\equiv var[I(X)]\leqslant \nu
\end{align*}
for some prescribed $\nu>0$.

On the other hand, denote by $W_I(X)$ the insured's final wealth under  contract $I$ upon its expiration, namely
$$
W_I(X)=w_0-X+I(X)-\pi(I(X)).
$$
The insured's risk preference is  characterized by a von Neumann--Morgenstern utility function $U$ satisfying $U'>0$ and $U''<0$.


Our optimal contracting problem is, therefore,
\begin{equation}\label{eqn:simplifiedP}
  \begin{array}{ll}
  \underset{I\in \mathfrak{C}}{\text{max}} & \ \ \ \EE[U(W_I(X))] \\
     \mbox{subject to} & \ \ \ var\left[e_I(X)\right]\leqslant \nu.
 \end{array}
\end{equation}
Note that this model reduces to \citet{Arrow1963}'s model
\begin{equation*} 
	\max_{I\in \mathfrak{C} } \ \ \ \EE[U(W_I(X))]
\end{equation*}
by 
setting the upper bound $\nu$ to be $\EE[X^2]$. This is because $var[e_I(X)]= var\left[I(X)\right]\leqslant \EE[X^2]$ for all $ I\in \mathfrak{C}$.

In Problem (\ref{eqn:simplifiedP}), the insured's benefit--risk consideration is captured by the utility function $U$, whereas the insurer's return--risk tradeoff is reflected by the ``mean" (the expected value principle) and the ``variance" (the variance bound). One may interpret the
problem as one faced by an insurer who likes to  design a contract with the best interest of a representative insured in mind, so as to remain marketable and competitive, while maintaining the desired profitability and variance control in the mean--variance sense.\footnote{Representative insureds in different wealth classes or different regions may have different ``typical" levels of initial wealth. Moreover, when the economy grows, a representative insured's initial wealth may change substantially. As shown in Subsection \ref{section:41}, the change in the insured's initial wealth may affect her demand for insurance.   }
Problem (\ref{eqn:simplifiedP}) can also model a tailor-made contract design for insuring a one-off event from an insured's perspective. The insured aims to maximize her expected utility while accommodating the insurer's participation constraint reflected by the mean and variance specifications.

\subsection{Absolute risk aversion and prudence}\label{section:2:2}

The Arrow--Pratt measure of absolute risk aversion \citep{Pratt1964,Arrow1965}, defined as
\begin{align*}
\mathcal{A}(x): =-\frac{U''(x)}{U'(x)},
\end{align*}
captures the dependence of the level of risk aversion on the agent's wealth $x$.
If $\mathcal{A}(x)$ is decreasing\footnote{Throughout the paper, the terms ``increasing" and ``decreasing" mean ``non-decreasing" and ``non-increasing," respectively.} in $x$, then the insured's risk preference is said to exhibit {\it decreasing absolute risk aversion} (DARA). The effect of an insured's initial wealth on the insurance demand under \cite{Arrow1963}'s model has been widely studied in the literature. It is found that a wealthier DARA insured purchases a deductible  insurance with a higher deductible. For a survey on how insureds' wealth impacts insurance, see e.g., \cite{Gollier2001,Gollier2013}.




While risk aversion ($U''<0$) captures an insured's propensity for avoiding risk, {\it prudence} (i.e., $U'''>0$) reflects her tendency to take precautions against future risk.  Many commonly used utility functions, including those with  hyperbolic absolute risk aversion (HARA) and mixed risk aversion, are prudent.\footnote{A utility function is called HARA if the reciprocal of the Arrow--Pratt measure of absolute risk aversion is a linear function, i.e., $$\mathcal{A}_U(x)=-\frac{U''(x)}{U'(x)}=\frac{1}{px+q}$$ for some $p\geqslant 0$ and $q$. It includes exponential, logarithmic and  power utility functions as special cases. For further discussion of HARA, see \citet{Gollier2001}. A utility function is said to be of mixed risk aversion if $(-1)^n U^{(n)}(x)\leqslant 0$ for all $x$ and $n=1, 2, 3, \cdots$, where $U^{(n)}$ denotes the $n$th derivative of $U$.} Based on an experiment with a large number of subjects, \citet{Noussair2014}  observe  that the majority of individuals' decisions are consistent with prudence.
\cite{Eeckhoudt1992} and \cite{Gollier1996} take into account the insured's prudence in designing optimal insurance policies. The degree of absolute prudence is defined as
\begin{align}\label{def:DAP}
	\mathcal{P}(x):=-\frac{U{'''}(x)}{U{''}(x)}
\end{align}
for a three-time  differentiable utility function $U$. If $\mathcal{P}(x)$ is strictly decreasing in $x$, then the insured is said to exhibit strictly {\it decreasing absolute prudence} (DAP).
\citet{Kimball1990} shows that DAP characterizes the notion that wealthier people are less sensitive to future risks.
Moreover, DAP implies DARA, as noted in Proposition 21 of \citet{Gollier2001}.

A  term related to prudence is {\it third-degree stochastic dominance} (TSD), which was introduced by \citet{Whitmore1970}. A non-negative random variable $Z_1$ is said to dominate another non-negative random variable $Z_2$ in TSD if
		$$
		\EE[Z_1]\geqslant \EE[Z_2]\quad\text{and}\quad\int_0^x\int_0^yF_{Z_2}(z)-F_{Z_1}(z)\dd z\dd y\geqslant 0\,\ \text{for all}\, x\geqslant 0.
		$$
		Equivalently, $Z_1$ dominates $Z_2$ in TSD if and only if $\EE[u(Z_1)]\geqslant \EE[u(Z_2)]$ for all functions $u$ satisfying $u'>0,\;u''<0$ and $u'''>0$. TSD has been widely employed  for decision making in finance and insurance. For instance,  \citet{Gotoh2000} use it to study mean-variance optimal portfolio problems.  If $Z_1$ dominates $Z_2$ in TSD and they have the same mean and variance, then  $Z_1$ is said to have less downside risk than   $Z_2$. In fact, the latter is equivalent to $\EE[u(Z_1)] \geqslant \EE[u(Z_2)]$ for any function $u$ with $u'''>0$;  see \citet{Menezes1980}.

\section{Optimal Contracts}\label{section:3}

In this section, we present our approach to solving Problem \eqref{eqn:simplifiedP}.

\medskip

First, consider Problem \eqref{eqn:simplifiedP} {\it without} the variance constraint:
\begin{equation}\label{eqn:Prob-step1}
\max_{{ I\in \mathfrak{C}}}\ \ \ \EE[U(W_I(X))].
\end{equation}
This is the classical  \citet{Arrow1963}'s model, for which the optimal contract is
a deductible one of the form $(x-d^*)_+$ for some non-negative deductible $d^*$, where $(x)_+:=\max\{x,0\}$. This contract {\it automatically} satisfies the incentive-compatible condition. Moreover, \citet{Chi2019}
(see Theorem 4.2 therein) was the first to derive an {\it analytical} form of the
 optimal deductible level $d^*$. More precisely,  define
$$
VaR_{\frac{1}{1+\rho}}(X):=\inf\left\{x\in[0,\mathcal{M}]: F_X(x)\geqslant \frac{\rho}{1+\rho}\right\}
$$
and
$$
\varphi(d):=\frac{\EE[U'(W_{(x-d)_+}(X))]}{U'(w_0-d-\pi((X-d)_+))},\quad 0\leqslant d<\mathcal{M},
$$
where 
$\inf\emptyset:=\mathcal{M}$ by convention. Then the optimal  $d^*$ is
\begin{equation}\label{eqn:d-star}
d^*=\sup\left\{VaR_{\frac{1}{1+\rho}}(X)\leqslant d<\mathcal{M}:\, \varphi(d)\geqslant  \frac{1}{1+\rho}\right\}\vee VaR_{\frac{1}{1+\rho}}(X),
\end{equation}
where $\sup\emptyset:=0$ and $x\vee y:=\max\{x,y\}$.\footnote{The number
$d^*$ can be numerically computed easily, because $\varphi(d)$ is decreasing over $[VaR_{\frac{1}{1+\rho}}(X), \mathcal{M})$; see \citet{Chi2019}.}
This leads immediately to the following proposition.
\begin{proposition}\label{proposition:31}
If $\nu\geqslant var[(X-d^*)_+]$, then $I(x)=(x-d^*)_+$ is the optimal solution to Problem \eqref{eqn:simplifiedP}.
\end{proposition}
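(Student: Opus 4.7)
The plan is to exploit the straightforward observation that Problem \eqref{eqn:simplifiedP} is simply Problem \eqref{eqn:Prob-step1} augmented with the additional constraint $var[I(X)] \leqslant \nu$. Consequently, the feasible set of Problem \eqref{eqn:simplifiedP} is contained in $\mathfrak{C}$ (the feasible set of Problem \eqref{eqn:Prob-step1}), which immediately implies that the optimal value of Problem \eqref{eqn:simplifiedP} is no larger than the optimal value of Problem \eqref{eqn:Prob-step1}.

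Next, I would invoke the \citet{Arrow1963} result together with the explicit characterization of $d^{*}$ via \eqref{eqn:d-star} from \citet{Chi2019} recalled just above, to conclude that $I^{*}(x) := (x-d^{*})_{+}$ attains the supremum of $\EE[U(W_{I}(X))]$ over all $I \in \mathfrak{C}$. Since $I^{*} \in \mathfrak{C}$ by construction, the proof reduces to checking that $I^{*}$ is feasible for Problem \eqref{eqn:simplifiedP}. This is precisely the content of the hypothesis: $var[I^{*}(X)] = var[(X-d^{*})_{+}] \leqslant \nu$.

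Combining these two observations, $I^{*}$ is feasible for the constrained Problem \eqref{eqn:simplifiedP} and attains the unconstrained upper bound on the objective, so it must be optimal for \eqref{eqn:simplifiedP} as well. There is no genuine obstacle here; this proposition is a routine relaxation argument that simply records the case in which the variance constraint is inactive, thereby isolating the interesting regime $\nu < var[(X-d^{*})_{+}]$ (the binding case) for the analysis that follows.
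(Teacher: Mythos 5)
Your argument is correct and is precisely the reasoning the paper relies on when it states that the proposition follows "immediately" from Arrow's theorem and the characterization of $d^*$ in \eqref{eqn:d-star}: the unconstrained optimizer $(x-d^*)_+$ is feasible for the variance-constrained problem under the hypothesis, hence optimal. No difference in approach.
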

Intuitively, if the variance bound $\nu$ is set sufficiently high, then the variance
constraint in Problem \eqref{eqn:simplifiedP} is redundant and the problem reduces to
the classical \citet{Arrow1963}'s problem. Proposition \ref{proposition:31} tells exactly and explicitly what
the bound should be for the variance constraint to be binding.

\medskip

Therefore, it suffices to solve Problem \eqref{eqn:simplifiedP} for the case in which $\nu< var[(X-d^*)_+]$, which we now set as an assumption.
\begin{Assumption}\label{assump:smallnu}
	The variance bound $\nu$ satisfies $\nu< var[(X-d^*)_+]$.
\end{Assumption}

%
%
The main thrust for finding the solution is to first restrict the analysis with a fixed level of expected indemnity and then find the optimal level of expected indemnity. To this end, we need to first identify the range in which  the optimal expected indemnity possibly lies. Noting that $var[(X-d)_+]$ is strictly decreasing and continuous in $d$ over $[{\rm ess\  inf}\;X,\mathcal{M})$, we define
$$
d_L:=\inf\{d\geqslant d^*: var[(X-d)_+]\leqslant \nu\}\quad\text{and}\quad m_L:=\EE[(X-d_L)_+].
$$
Intuitively, the insurer would  demand a deductible higher than Arrow's level $d^*$ due to the additional risk control reflected by the variance constraint, and $d_L$ is the smallest deductible that makes this
constraint binding.
\begin{lemma}\label{Lemma:m-L}
Under Assumption \ref{assump:smallnu}, for Problem \eqref{eqn:simplifiedP}, any admissible insurance policy $I$ with $\EE[I(X)]\leqslant m_L$ is no better than the deductible contract $I_L$ where $I_L(x)=(x-d_L)_+$.
\end{lemma}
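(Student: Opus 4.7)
The plan is to reduce the claim to two classical ingredients: Arrow's theorem of the deductible applied at fixed expected indemnity, and the monotonicity of the expected utility along the family of stop-loss contracts. For any admissible $I$ with $\EE[I(X)]\leqslant m_L$, I would first replace $I$ by the stop-loss contract with the same expected indemnity (invoking Arrow), and then use the monotonicity to compare that stop-loss with $I_L$.

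For the first step, fix $I\in\mathfrak{C}$ with $\EE[I(X)]=m\leqslant m_L$. Since $d\mapsto\EE[(X-d)_+]$ is continuous and strictly decreasing on $[0,\mathcal{M}]$, there exists a unique $d(m)\in[d_L,\mathcal{M}]$ with $\EE[(X-d(m))_+]=m$; the inequality $d(m)\geqslant d_L$ is forced by $m\leqslant m_L$. The contracts $I$ and $(x-d(m))_+$ carry the same premium under the expected value principle, and their retained losses share the same mean. The classical stop-loss dominance $\min\{X,d(m)\}\leqslant_{cx} X-I(X)$, a consequence of $0\leqslant I(x)\leqslant x$ via the stop-loss transform, combined with the concavity of $U$, then gives
\begin{equation*}
\EE[U(W_I(X))]\;\leqslant\;\EE[U(W_{(x-d(m))_+}(X))]\;=:\;g(d(m)).
\end{equation*}

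For the second step, I would show that $g$ is decreasing on $[d^*,\mathcal{M})$. Writing $P(d):=(1+\rho)\EE[(X-d)_+]$ so that $W_{(x-d)_+}(X)=w_0-\min\{X,d\}-P(d)$, and noting that this equals the constant $w_0-d-P(d)$ on $\{X>d\}$, a short differentiation under the expectation (using $P'(d)=-(1+\rho)\PP(X>d)$) yields
\begin{equation*}
g'(d)=\PP(X>d)\,U'\bigl(w_0-d-P(d)\bigr)\,\bigl((1+\rho)\varphi(d)-1\bigr).
\end{equation*}
By the footnote following \eqref{eqn:d-star}, $\varphi$ is decreasing on $[VaR_{\frac{1}{1+\rho}}(X),\mathcal{M})$, and the definition of $d^*$ forces $\varphi(d)\leqslant 1/(1+\rho)$ for all $d\in(d^*,\mathcal{M})$. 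Hence $g'\leqslant 0$ on $[d^*,\mathcal{M})$. Because $d(m)\geqslant d_L\geqslant d^*$, we obtain $g(d(m))\leqslant g(d_L)=\EE[U(W_{I_L}(X))]$, which chains into the claimed inequality.

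The main obstacle is the convex-order step $\min\{X,d(m)\}\leqslant_{cx} X-I(X)$---the technical heart of stop-loss optimality at fixed mean. It can be verified by checking $\EE[(\min\{X,d(m)\}-k)_+]\leqslant\EE[(X-I(X)-k)_+]$ for every $k\geqslant 0$ using only $0\leqslant I(x)\leqslant x$ and the equal-mean constraint; the two remaining pieces---the explicit form of $g'(d)$ and the monotonicity of $\varphi$---are mechanical consequences of the setup and of the analysis of $d^*$ already recorded in \citet{Chi2019}.
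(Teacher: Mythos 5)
Your argument is correct and follows essentially the same route as the paper: reduce $I$ to the stop-loss contract $(x-d(m))_+$ with the same expected indemnity via the convex-order comparison $X\wedge d(m)\leqslant_{cx}R_I(X)$ (the paper's Lemma \ref{lemma:convex:d}), then use the monotonicity of $d\mapsto\EE[U(W_{(x-d)_+}(X))]$ on $[d^*,\mathcal{M})$ together with $d(m)\geqslant d_L\geqslant d^*$. The only difference is that you re-derive that monotonicity by differentiating and invoking the behavior of $\varphi$, whereas the paper simply cites the proof of Theorem 4.2 in \citet{Chi2019}; your computation of $g'(d)$ is consistent with that source.
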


Therefore, we can rule out any contract whose expected indemnity is strictly smaller than $m_L$; in other words, $m_L$ is a {\it lower} bound of the optimal expected indemnity.
In particular, no-insurance (i.e., $I^*(x)\equiv 0$) is {\it never} optimal under Assumption \ref{assump:smallnu}.

Next, we are to derive an {\it upper} bound of the optimal expected indemnity.
Consider a loss-capped contract $X\wedge k$, where $x\wedge y:=\min\{x, y\}$ and $k\geqslant 0$, which
pays the actual loss up to the cap $k$.\footnote{A loss-capped contract is also called ``full insurance up to a (policy) limit" or ``full insurance with a cap."} Define
$$
K_U:=\inf\{k\geqslant 0: var[X\wedge k]\geqslant \nu\}\quad\text{and}\quad m_U:=\EE[X\wedge K_U].
$$
In the above, $K_U$ is well-defined because $X\wedge k-\EE[X\wedge k]$ is increasing in $k$ in the sense of convex order,
according to Lemma A.2 in \citet{Chi2012}.\footnote{A random variable $Y$ is said to be greater than a random variable $Z$ in the sense of convex order, denoted as $Z\leqslant_{cx} Y$, if
$$\EE[Y]=\EE[Z]\ \text{and}\ \ \EE[(Z-d)_+] \leqslant \EE[(Y-d)_+]\;\;  \forall  d\in\mathbb{R}, $$
provided that the expectations exist. Obviously, $Z\leqslant_{cx} Y$ implies $var[Z]\leqslant var[Y]$.}
Clearly, both $K_U$ and $m_U$ depend on the variance bound $\nu$. Since $var[X]\geqslant var[(X-d^*)_+]>\nu$, we have $$K_U<\mathcal{M},\quad m_U<\EE[X]\quad\text{and}\quad var\left[X\wedge K_U\right]= \nu.$$

\begin{lemma}\label{lemma:m_U}
For any $I\in \mathfrak{C}$ with $var[I(X)]\leqslant \nu$, we must have $\EE[I(X)]\leqslant m_U$. Moreover, if $I\in \mathfrak{C}$  satisfies $var[I(X)]\leqslant \nu$ and $\EE[I(X)]=m_U$, then $I(X)=X\wedge K_U$ almost surely.
\end{lemma}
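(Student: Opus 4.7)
The plan is to reduce both claims to a pointwise $L^2$-projection argument combined with the strict monotonicity of $k\mapsto\EE[X\wedge k]$ and $k\mapsto var[X\wedge k]$ just above $K_U$.

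The key elementary fact is that for any $\lambda\geqslant 0$, the convex quadratic $y\mapsto(y-\lambda)^2$ restricted to $y\in[0,x]$ attains its unique minimum at $y=x\wedge\lambda$. Hence for every $I\in\mathfrak{C}$,
\[
\EE\bigl[(I(X)-\lambda)^2\bigr]\;\geqslant\;\EE\bigl[(X\wedge\lambda-\lambda)^2\bigr],
\]
with equality if and only if $I(X)=X\wedge\lambda$ almost surely. When $\EE[I(X)]=\EE[X\wedge\lambda]$, expanding the squares and cancelling the identical linear and constant terms in $\lambda$ converts this into $var[I(X)]\geqslant var[X\wedge\lambda]$, with the same rigidity in the equality case.

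For the first assertion I argue by contradiction. Suppose $I\in\mathfrak{C}$ satisfies $var[I(X)]\leqslant\nu$ and $m:=\EE[I(X)]>m_U$. If $m=\EE[X]$ then $I(X)=X$ a.s., whence $var[I(X)]=var[X]\geqslant var[(X-d^*)_+]>\nu$ by Assumption~\ref{assump:smallnu}, a contradiction. So $m<\EE[X]$. Since $K_U<\mathcal{M}$ forces $\PP(X>K_U)>0$, and $K_U$ strictly exceeds the essential infimum of $X$ (otherwise $X\wedge K_U$ would be constant and $var[X\wedge K_U]=0\neq\nu$), the map $k\mapsto\EE[X\wedge k]$ is continuous and strictly increasing on a right neighborhood of $K_U$, so the intermediate value theorem yields some $\lambda\in(K_U,\mathcal{M})$ with $\EE[X\wedge\lambda]=m$. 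The sub-inequality above then gives $var[I(X)]\geqslant var[X\wedge\lambda]>var[X\wedge K_U]=\nu$, the strict inequality being the monotonicity upgrade discussed below, a contradiction.

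For the second assertion, take $\lambda=K_U$ in the sub-inequality: any admissible $I$ with $\EE[I(X)]=m_U$ and $var[I(X)]\leqslant\nu$ must satisfy $\nu\geqslant var[I(X)]\geqslant var[X\wedge K_U]=\nu$, so equality holds throughout, and the uniqueness clause forces $I(X)=X\wedge K_U$ almost surely. The main obstacle is the strict monotonicity $var[X\wedge\lambda]>var[X\wedge K_U]$ for $\lambda>K_U$, since Lemma~A.2 of \citet{Chi2012} only supplies weak convex order. I would upgrade it by differentiating, $\tfrac{d}{dk}var[X\wedge k]=2(k-\EE[X\wedge k])\PP(X>k)$, which is strictly positive on a right neighborhood of $K_U$ because both factors are strictly positive there under the observations above.
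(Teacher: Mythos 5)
Your proof is correct, and it takes a genuinely different route from the paper's. The paper invokes the Gollier--Schlesinger convex-order result (Lemma \ref{lemma:convex:d}): from $\EE[I(X)]>m_U$ it extracts $K>K_U$ with $X\wedge K\leqslant_{cx}I(X)$ and then contradicts the (asserted) strict monotonicity of $k\mapsto var[X\wedge k]$; for the equality case it uses the identity $var[Z]=2\int\{\EE[(Z-t)_+]-(\EE[Z]-t)_+\}\dd t$ to upgrade equality of variances to equality of distributions, and only then combines the distributional identity with the pointwise bound $I(X)\leqslant X\wedge K_U$ to get $I(X)=X\wedge K_U$ a.s. You replace all of this with a single pointwise observation: $X\wedge\lambda$ is the $L^2$-projection of the constant $\lambda$ onto the order interval $[0,X]$, so $\EE[(I(X)-\lambda)^2]\geqslant\EE[(X\wedge\lambda-\lambda)^2]$ with rigidity built in by strict convexity, and matching the means turns this into the variance comparison directly. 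This is more elementary (no stochastic-order machinery, no stop-loss-transform identity) and delivers the a.s.\ identification in one stroke; the paper's convex-order route is the one that would generalize to arbitrary convex functionals of $I(X)$ rather than just the second moment. You also supply a detail the paper merely asserts, namely the \emph{strict} increase of $var[X\wedge k]$ above $K_U$, via the derivative $2(k-\EE[X\wedge k])\PP\{X>k\}$ together with the observations $K_U>\operatorname{ess\,inf}X$ and $K_U<\mathcal{M}$; your separate treatment of the boundary case $\EE[I(X)]=\EE[X]$ via Assumption \ref{assump:smallnu} is consistent with the paper, which establishes $m_U<\EE[X]$ under the same standing assumption. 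No gaps.
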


This lemma stipulates that $m_U$ is an upper bound of the optimal expected indemnity. Moreover, any admissible contract achieving this upper bound is equivalent to the
loss-capped contract $X\wedge K_U$. An immediate corollary of the lemma is
$m_L\leqslant m_U$, noting that $var[(X-d_L)_+]=\nu$.

The following result identifies the case $m_L=m_U$ as a trivial one.

\begin{proposition}\label{prop:two-point}
If $m_L=m_U$, then the loss $X$ must follow a Bernoulli distribution with values $0$ and $d_L+K_U$. Moreover, under Assumption \ref{assump:smallnu}, the optimal contract of Problem \eqref{eqn:simplifiedP} is $$I^*(0)=0\quad\text{and}\quad I^*(d_L+K_U)=K_U.$$
\end{proposition}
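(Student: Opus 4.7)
My strategy is to exploit that, under the hypothesis $m_L=m_U$, the deductible contract $(X-d_L)_+$ meets the extremal conditions of Lemma \ref{lemma:m_U}, force it to coincide a.s.\ with $X\wedge K_U$, and then read off the support of $X$ from this coincidence. Since $d\mapsto var[(X-d)_+]$ is continuous and strictly decreasing and $\nu<var[(X-d^*)_+]$ by Assumption \ref{assump:smallnu}, we have $d_L>d^*\geqslant 0$ and $var[(X-d_L)_+]=\nu$. Likewise $var[X\wedge K_U]=\nu>0$ gives $K_U>0$. The contract $I_L(x)=(x-d_L)_+$ is therefore admissible with $var[I_L(X)]=\nu$ and $\EE[I_L(X)]=m_L=m_U$, so the uniqueness clause of Lemma \ref{lemma:m_U} yields
$$
(X-d_L)_+=X\wedge K_U \quad \text{almost surely}.
$$

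The next step is an elementary case analysis of the deterministic function $\phi(x):=(x-d_L)_+-(x\wedge K_U)$ on $[0,\mathcal{M}]$. Using $d_L,K_U>0$, one checks that $\phi(0)=0$; that $\phi(x)=-(x\wedge K_U)<0$ on $(0,d_L]$; that $\phi(x)<0$ throughout $(d_L,d_L+K_U)$ in both subcases $x\leqslant K_U$ and $x>K_U$; that $\phi(d_L+K_U)=0$; and that $\phi(x)>0$ on $(d_L+K_U,\mathcal{M}]$. Thus $\{\phi=0\}=\{0,\,d_L+K_U\}$, and the a.s.\ identity above forces $\PP\{X\in\{0,\,d_L+K_U\}\}=1$. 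Both atoms must carry strictly positive mass, for otherwise $X$ would be constant and we would obtain $var[(X-d^*)_+]\leqslant var[X]=0<\nu$, contradicting Assumption \ref{assump:smallnu}. Hence $X$ is Bernoulli with the two values $0$ and $d_L+K_U$.

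For the optimality claim I would combine Lemmas \ref{Lemma:m-L} and \ref{lemma:m_U}. Any admissible $I$ with $var[I(X)]\leqslant\nu$ satisfies $\EE[I(X)]\leqslant m_U$, while every such $I$ with $\EE[I(X)]\leqslant m_L$ is dominated by $I_L$, which we have just shown equals $X\wedge K_U$ a.s. Because $m_L=m_U$, any optimum must occur at $\EE[I(X)]=m_U$, and the uniqueness part of Lemma \ref{lemma:m_U} then pins down $I^*(X)=X\wedge K_U$ almost surely; on the two-point support this is precisely $I^*(0)=0$ and $I^*(d_L+K_U)=K_U$. The only delicate bookkeeping is in verifying $d_L>0$ and $K_U>0$ so that the two zeros of $\phi$ are genuinely distinct; beyond that no serious obstacle arises, as the two preceding lemmas already encode the heavy lifting.
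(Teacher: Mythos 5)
Your proposal is correct and follows essentially the same route as the paper: the uniqueness clause of Lemma \ref{lemma:m_U} applied to $I_L$ gives $(X-d_L)_+=X\wedge K_U$ almost surely, which forces the two-point support $\{0,d_L+K_U\}$, and Lemmas \ref{Lemma:m-L} and \ref{lemma:m_U} together show $I_L$ dominates every admissible contract. Your explicit case analysis of where $(x-d_L)_+$ and $x\wedge K_U$ coincide, and your check that both atoms carry positive mass, merely spell out steps the paper leaves implicit.
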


In what follows, we consider the general and interesting case in which $m_L< m_U$.
For $m\in(m_L, m_U)$, define
$$
\mathfrak{C}_m:=\left\{I\in \mathfrak{C}: var[I(X)]\leqslant \nu,\, \EE[I(X)]=m \right\}.
$$
We now focus on the following optimization problem
\begin{equation}\label{eqn:optimizationP}
	\max_{I\in\mathfrak{C}_m}\ \ \EE[U(W_I(X))],
\end{equation}
which is a ``cross section" of the original problem \eqref{eqn:simplifiedP} where the expected indemnity is fixed as $m$.

%

For $\lambda\in\RR$ and $\beta\geqslant 0$,
denote
\begin{align}\label{relaxed:solution}
I_{\lambda,\beta}(x):=\sup\Big\{y\in[0,x]: U'\left(w_0-x+y-(1+\rho)m\right)-\lambda-2\beta y\geqslant 0\Big\},\;x\in[0,\mathcal{M}].
\end{align}
Actually, $I_{\lambda,\beta}$ is a contract that coinsures above a deductible or coinsures following full insurance, depending on the relative values between $\lambda$ and $U'(w_0-(1+\rho)m)$. To see this, when $\lambda\geqslant U'(w_0-(1+\rho)m)$, we have
\begin{equation}\label{eqn:f-2}
I_{\lambda,\beta}(x)=\left\{
\begin{array}{ll}
0, &0\leqslant x\leqslant w_0-(1+\rho)m-(U')^{-1}(\lambda),\\
f_{\lambda,\beta}(x),& w_0-(1+\rho)m-(U')^{-1}(\lambda)< x\leqslant  \mathcal{M} ,
\end{array}
\right.
\end{equation}
and when $\lambda< U'(w_0-(1+\rho)m)$, we have
\begin{equation}\label{eqn:f-1}
I_{\lambda,\beta}(x)=\left\{
\begin{array}{ll}
x, &0\leqslant x\leqslant \frac{U'(w_0-(1+\rho)m)-\lambda}{2\beta},\\
f_{\lambda,\beta}(x),&\frac{U'(w_0-(1+\rho)m)-\lambda}{2\beta}< x\leqslant \mathcal{M},
\end{array}
\right.\end{equation}
where $f_{\lambda,\beta}(x)$ satisfies the following equation
in $y$:\footnote{It can be shown easily that this equation has a unique solution.
}
\begin{equation}\label{solution:equation}
U'(w_0-x+y-(1+\rho)m)-\lambda-2\beta y=0.
\end{equation}
Moreover, it is easy to see that $0\leqslant f_{\lambda,\beta}(x)\leqslant x$ either when 
$\lambda\geqslant U'(w_0-(1+\rho)m)$ and $w_0-(1+\rho)m-(U')^{-1}(\lambda)< x\leqslant  \mathcal{M}$, or when $\lambda< U'(w_0-(1+\rho)m)$ and $\frac{U'(w_0-(1+\rho)m)-\lambda}{2\beta}< x\leqslant \mathcal{M}$. 
Furthermore, 
\begin{align}\label{derivative:f} f'_{\lambda,\beta}(x)=\frac{-U''(w_0-x+f_{\lambda,\beta}(x)-(1+\rho)m)}{2\beta-U''(w_0-x+f_{\lambda,\beta}(x)-(1+\rho)m)}\in (0,1].
\end{align}
The following result indicates that there exists an optimal solution to Problem \eqref{eqn:optimizationP} that is in the form of $I_{\lambda,\beta}(x)$ and binds both the mean and variance constraints.

\begin{proposition}\label{Pro:optimalS}
Suppose Assumption \ref{assump:smallnu} holds and $m_L< m_U$. Then there exist $\lambda^*_m\in\RR$ and $\beta^*_m>0$ such that $I_{\lambda^*_m,\beta^*_m}$ satisfies
\begin{equation}\label{eqn11:equality-M-V}
\EE[I_{\lambda^*_m,\beta^*_m}(X)]=m\qquad\text{and}\qquad var[I_{\lambda^*_m,\beta^*_m}(X)]=\nu,
\end{equation}
and is an optimal solution to Problem \eqref{eqn:optimizationP}.
\end{proposition}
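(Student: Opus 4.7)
The plan is to solve Problem \eqref{eqn:optimizationP} via Lagrangian duality, exploiting the fact that on $\mathfrak{C}_m$ the variance constraint $var[I(X)]\leqslant \nu$ is equivalent to the quadratic moment bound $\EE[I(X)^2]\leqslant \nu+m^2$. Introducing multipliers $\lambda\in\RR$ for the mean equality and $\beta\geqslant 0$ for the second-moment inequality, form the Lagrangian
\[
\mathcal{L}(I;\lambda,\beta)=\EE[U(W_I(X))]-\lambda\bigl(\EE[I(X)]-m\bigr)-\beta\bigl(\EE[I(X)^2]-\nu-m^2\bigr).
\]
For fixed $(\lambda,\beta)$, the problem $\max_{I\in\mathfrak{C}}\mathcal{L}(I;\lambda,\beta)$ decouples pointwise: for each $x\in[0,\mathcal{M}]$, one maximizes $y\mapsto U(w_0-x+y-(1+\rho)m)-\lambda y-\beta y^2$ over $y\in[0,x]$. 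This integrand is strictly concave in $y$ (since $U''<0$ and $\beta\geqslant 0$), with strictly decreasing derivative $U'(w_0-x+y-(1+\rho)m)-\lambda-2\beta y$, so its maximizer on $[0,x]$ is precisely the supremum in \eqref{relaxed:solution}. Hence $I_{\lambda,\beta}$ is the unique unconstrained Lagrangian maximizer.

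The central task is to produce $(\lambda^*_m,\beta^*_m)$ with $\beta^*_m>0$ for which both equalities in \eqref{eqn11:equality-M-V} hold. For each fixed $\beta\geqslant 0$, the map $\lambda\mapsto I_{\lambda,\beta}(x)$ is pointwise continuous and strictly decreasing, so by dominated convergence $\lambda\mapsto \EE[I_{\lambda,\beta}(X)]$ is continuous and strictly decreasing from $\EE[X]$ (at $\lambda\to -\infty$) to $0$ (at $\lambda\to +\infty$); since $0<m_L<m<m_U<\EE[X]$, a unique $\lambda(\beta)$ solves $\EE[I_{\lambda(\beta),\beta}(X)]=m$, and the strict monotonicity forces $\beta\mapsto \lambda(\beta)$ to be continuous. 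Define $\phi(\beta):=var[I_{\lambda(\beta),\beta}(X)]$. At $\beta=0$, $I_{\lambda(0),0}(X)=(X-d_m)_+$, where $d_m$ solves $\EE[(X-d_m)_+]=m$; since $m>m_L=\EE[(X-d_L)_+]$ and $d\mapsto \EE[(X-d)_+]$ is strictly decreasing, $d_m<d_L$, and the strict monotonicity of $d\mapsto var[(X-d)_+]$ yields $\phi(0)>var[(X-d_L)_+]=\nu$. As $\beta\to \infty$, the quadratic penalty dominates, forcing $I_{\lambda(\beta),\beta}$ to converge in $L^2$ to the minimizer of $\EE[I(X)^2]$ among $I\in\mathfrak{C}$ with $\EE[I(X)]=m$; by the convex-order machinery underlying Lemma \ref{lemma:m_U}, this minimizer is the loss-capped contract $X\wedge c_m$ with $\EE[X\wedge c_m]=m$, and since $m<m_U$ we have $c_m<K_U$, giving $\lim_{\beta\to\infty}\phi(\beta)=var[X\wedge c_m]<var[X\wedge K_U]=\nu$. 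Continuity of $\phi$ and the intermediate value theorem then deliver $\beta^*_m>0$ with $\phi(\beta^*_m)=\nu$, and we set $\lambda^*_m:=\lambda(\beta^*_m)$.

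Optimality follows from a standard weak-duality argument. For any $I\in\mathfrak{C}_m$, since $\EE[I(X)]=m$, $\beta^*_m\geqslant 0$, and $\EE[I(X)^2]\leqslant \nu+m^2$, one has $\mathcal{L}(I;\lambda^*_m,\beta^*_m)=\EE[U(W_I(X))]-\beta^*_m(\EE[I(X)^2]-\nu-m^2)\geqslant \EE[U(W_I(X))]$. Pointwise optimality of $I_{\lambda^*_m,\beta^*_m}$ gives $\mathcal{L}(I;\lambda^*_m,\beta^*_m)\leqslant \mathcal{L}(I_{\lambda^*_m,\beta^*_m};\lambda^*_m,\beta^*_m)$, and the binding constraints in \eqref{eqn11:equality-M-V} reduce the right-hand side to $\EE[U(W_{I_{\lambda^*_m,\beta^*_m}}(X))]$. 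Chaining these inequalities yields $\EE[U(W_I(X))]\leqslant \EE[U(W_{I_{\lambda^*_m,\beta^*_m}}(X))]$, the desired optimality.

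The principal obstacle is the existence of $\beta^*_m$ in the second paragraph, particularly the $\beta\to\infty$ limit. While the pointwise description and monotone dependence of $I_{\lambda,\beta}$ on its parameters are routine, rigorously identifying the $L^2$ limit of $I_{\lambda(\beta),\beta}$ with the capped contract $X\wedge c_m$ requires tracking how $\lambda(\beta)$ must diverge (indeed, $\lambda(\beta)/\beta\to -2c_m$) so as to balance the vanishing slope imposed by the quadratic penalty against the fixed mean, combined with the convex-order extremal properties behind Lemmas \ref{Lemma:m-L}--\ref{lemma:m_U}.
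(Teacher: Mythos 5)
Your proof is correct in substance but takes a genuinely different route to the key existence step. Both you and the paper set up the same Lagrangian with the second-moment penalty $\beta(\EE[I(X)^2]-\nu-m^2)$, identify $I_{\lambda,\beta}$ as the pointwise maximizer, and close with the identical weak-duality/complementary-slackness argument; the difference is how the multipliers $(\lambda^*_m,\beta^*_m)$ are produced. The paper works on the incentive-compatible class $\mathcal{IC}$, which is compact by Arz\'{e}la--Ascoli (Lemma \ref{compact}), invokes Sion's minimax theorem (Lemma \ref{sion}) to get strong duality for the auxiliary problem \eqref{eqn:optimizationP:auxiliary}, derives explicit a priori bounds $[\lambda_L,\lambda_U]\times[0,\beta_U]$ confining the dual minimizer, and obtains $(\lambda^*_m,\beta^*_m)$ from convexity of the dual function plus Weierstrass; the binding of the variance constraint is established separately by a convex-combination argument with $(x-d_m)_+$. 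You instead run a two-parameter shooting argument: pin the mean by choosing $\lambda(\beta)$, then apply the intermediate value theorem to $\phi(\beta)=var[I_{\lambda(\beta),\beta}(X)]$ between the endpoints $\phi(0)=var[(X-d_m)_+]>\nu$ and $\lim_{\beta\to\infty}\phi(\beta)=var[X\wedge c_m]<\nu$. Your route is more elementary and constructive (no minimax theorem, no compactness of the function class), and it makes the semi-analytic computability of $(\lambda^*_m,\beta^*_m)$ transparent; its cost is that several continuity and monotonicity assertions need more care than you give them. In particular, Proposition \ref{Pro:optimalS} does \emph{not} assume $F_X$ is strictly increasing, so $\lambda\mapsto\EE[I_{\lambda,\beta}(X)]$ need only be non-strictly decreasing when $X$ has atoms; $\lambda(\beta)$ is then an interval rather than a point, and you must argue that $I_{\lambda,\beta}(X)$ (hence $\phi(\beta)$) is a.s.\ independent of the choice within that interval (which follows since one indemnity dominates the other pointwise while having the same mean) and that $\phi$ remains continuous via joint continuity of $I_{\lambda,\beta}(x)$ in $(\lambda,\beta)$ together with a compactness argument for $\lambda(\beta_n)$. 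The $\beta\to\infty$ limit you flag is indeed the delicate step, but it does go through: on the coinsurance region $y=(U'(w_0-x+y-(1+\rho)m)-\lambda)/(2\beta)$, and boundedness of $U'$ on the relevant compact wealth range gives $\|I_{\lambda(\beta),\beta}-(\cdot\wedge c_\beta)\|_\infty=O(1/\beta)$ with $c_\beta=-\lambda(\beta)/(2\beta)$, after which the mean constraint forces $c_\beta\to c_m$ and Lemma \ref{lemma:convex:d} identifies $var[X\wedge c_m]<\nu$. The paper's heavier machinery avoids all of these asymptotics and degeneracy issues at once, which is what it buys in exchange for the abstraction.
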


Combining Lemma~\ref{Lemma:m-L}, Lemma~\ref{lemma:m_U} and Proposition \ref{Pro:optimalS} yields that we can always find an optimal contract in one of the following three types: a deductible one of the form
$I_L(x)=(x-d_L)_+$, a loss-capped one of the form $I_U(x)=x\wedge K_U$ and a general one of the form
$I_{\lambda^*_m, \beta^*_m}(x)$. In other words, the optimal solutions of the following maximization problem
\begin{equation}\label{eqn:three-case-op}
\max_{I\in\big\{I_L, \ I_U, \ I_{\lambda^*_m, \beta^*_m}\,\text{for}\, m\in(m_L, m_U)\big\}}\EE[U(W_I(X))],
\end{equation}
where $m_L<m_U$, also solve Problem \eqref{eqn:simplifiedP}.

Note that $I_L$, $I_U$ and $I_{\lambda^*_m, \beta^*_m}$ all satisfy the incentive-compatible condition (see (\ref{derivative:f})); hence, so does at least one of the optimal contracts $I^*$ of \eqref{eqn:simplifiedP}. That is, $I^*(0)=0$ and $0 \leqslant {I^*}'(x)\leqslant 1$ almost everywhere.\footnote{As will be evident in the sequel, the values of $I'$ on a set with zero Lebesgue measure have no impact on $I$. Therefore, we will often omit  the phrase ``almost everywhere" in statements regarding the marginal indemnity function ${I}'$ throughout this paper.} Therefore, it suffices to solve the following maximization problem
\begin{equation}\label{eqn:optimal-IC}
\begin{array}{ll}
\underset{I\in \mathcal{IC}}{\text{max}} & \ \ \ \EE[U(W_I(X))] \\
\mbox{subject to} & \ \ \ var\left[e_I(X)\right]\leqslant \nu,
\end{array}
\end{equation}
where
\begin{equation}\label{eqn:IC}
\mathcal{IC}:= \left\{I: I(0)=0,\, 0 \leqslant I'(x)\leqslant 1,\,\forall x\in[0,\mathcal{M}] \right\}\subsetneq \mathfrak{C},
\end{equation}
to obtain an optimal contract for Problem \eqref{eqn:simplifiedP}.

Notice that $\mathcal{IC}$ is convex on which $\EE[U(W_I(X))]$ is  {\it strictly} concave. Using  the convex property of variance and applying arguments similar to those in the proof of Proposition 3.1 in \citet{ChiW2020}, we obtain the following proposition:

\begin{proposition}\label{exis:uniqueness}	
	\begin{itemize}
		\item[{\rm (i)}] There exist optimal solutions to Problem \eqref{eqn:optimal-IC}.
		\item[{\rm (ii)}] Assume either $\rho>0$ or $\mathbb{P}\left\{X<\epsilon\right\}>0$ for all $\epsilon>0$. Then there exists a unique solution to Problem \eqref{eqn:optimal-IC} in the sense that $I_1(X)=I_2(X)$ almost surely for any two solutions $I_1$ and $I_2$.
	\end{itemize}
\end{proposition}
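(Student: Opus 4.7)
The plan is to prove part (i) by a standard compactness-plus-continuity argument and part (ii) by combining strict concavity of the objective with convexity of the feasible set, supplemented by a small separate argument to handle the case $\rho=0$.

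For (i), the key observation is that every $I\in\mathcal{IC}$ is non-decreasing, $1$-Lipschitz, bounded by $\mathcal{M}$ and satisfies $I(0)=0$. Hence $\mathcal{IC}$ is equicontinuous and uniformly bounded, so by Arzel\`a--Ascoli it is precompact in $C([0,\mathcal{M}])$ with the uniform topology; the defining conditions (value at $0$, monotonicity, Lipschitz slope bounds) all pass to uniform limits, so $\mathcal{IC}$ is also closed, hence compact. The functionals $I\mapsto\EE[I(X)]$, $I\mapsto\EE[I(X)^2]$ and $I\mapsto\EE[U(W_I(X))]$ are all continuous on $\mathcal{IC}$ by bounded convergence, using that $X$ is essentially bounded and $U$ is continuous. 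Consequently the feasible set $\{I\in\mathcal{IC}:var[I(X)]\leqslant\nu\}$ is a closed subset of a compact set, hence compact, and the continuous objective attains its supremum on it.

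For (ii), suppose $I_1,I_2\in\mathcal{IC}$ are two optimal solutions and set $I=\tfrac12(I_1+I_2)$. Convexity of $\mathcal{IC}$ and convexity of the variance functional on $L^2$ ensure $I$ is feasible. Since $W_I(X)$ is affine in $I$, we have $W_I(X)=\tfrac12(W_{I_1}(X)+W_{I_2}(X))$, and strict concavity of $U$ gives
\begin{equation*}
\EE\bigl[U(W_I(X))\bigr]\geqslant \tfrac12\EE\bigl[U(W_{I_1}(X))\bigr]+\tfrac12\EE\bigl[U(W_{I_2}(X))\bigr],
\end{equation*}
with strict inequality on $\{W_{I_1}(X)\neq W_{I_2}(X)\}$. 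Optimality of $I_1$ and $I_2$ forces this event to have probability zero, which unfolds to
\begin{equation*}
I_1(X)-I_2(X)=(1+\rho)\bigl(\EE[I_1(X)]-\EE[I_2(X)]\bigr)\quad\text{a.s.}
\end{equation*}
Taking expectations on both sides yields $\rho\bigl(\EE[I_1(X)]-\EE[I_2(X)]\bigr)=0$. When $\rho>0$, this immediately gives $\EE[I_1(X)]=\EE[I_2(X)]$ and therefore $I_1(X)=I_2(X)$ a.s. When $\rho=0$, the displayed identity only says $I_1(X)-I_2(X)\equiv c$ a.s. for some constant $c$; invoking the hypothesis $\PP\{X<\epsilon\}>0$ for every $\epsilon>0$ together with $I_i(0)=0$ and $0\leqslant I_i(x)\leqslant x$ yields $|c|\leqslant 2\epsilon$ on an event of positive probability, hence $c=0$.

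The main obstacle is the degenerate case $\rho=0$ in the uniqueness step, because there the objective depends on $I$ only through the centred variable $I(X)-\EE[I(X)]$, so strict concavity of $U$ alone cannot separate solutions that differ by an additive constant in wealth; the support-at-zero hypothesis is what pins down that constant. A small but essential point along the way is verifying that the midpoint $\tfrac12(I_1+I_2)$ remains in $\mathcal{IC}$ rather than drifting into the larger class $\mathfrak{C}$, which holds because the slope bounds $0\leqslant I'\leqslant 1$ are preserved under convex combinations.
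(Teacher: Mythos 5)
Your proof is correct and follows essentially the same route as the paper: existence via Arzel\`a--Ascoli compactness of $\mathcal{IC}$ (the paper's Lemma on compactness) together with continuity of the objective and of the variance functional, and uniqueness via convexity of $\mathcal{IC}$, convexity of the variance, and strict concavity of $U$, with the extra hypothesis used precisely to pin down the additive constant that strict concavity cannot resolve when $\rho=0$. The paper delegates these details to the argument of Proposition 3.1 in Chi and Wei (2020), so your write-up is in fact more self-contained, but not substantively different.
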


Note that the assumptions in Proposition \ref{exis:uniqueness}-(ii) are satisfied in most situations of practical interest because either an insurer naturally sets a positive safety loading, or a loss actually never occurs with a positive probability, or both happen.
On the other hand, since any optimal solution to Problem \eqref{eqn:optimal-IC} also
solves Problem \eqref{eqn:simplifiedP}, Proposition \ref{exis:uniqueness}-(i) establishes the existence of optimal solutions to the latter.\footnote{It is difficult to prove the existence of solutions to Problem \eqref{eqn:simplifiedP} {\it directly}
because its feasible set is not compact only under the principle of indemnity.}
Moreover, the argument proving Proposition 3.1 in \citet{ChiW2020} can be used to
show that Proposition \ref{exis:uniqueness}-(ii) holds true for Problem \eqref{eqn:simplifiedP} as well. Finally, now that we have the existence and uniqueness of the optimal solutions for both
Problems \eqref{eqn:optimal-IC} and \eqref{eqn:simplifiedP}, we conclude that these two problems are indeed {\it equivalent} under the assumptions of
Proposition \ref{exis:uniqueness}-(ii).


While the analysis of Problem \eqref{eqn:simplifiedP} is simplified to Problem \eqref{eqn:three-case-op}, it remains challenging to solve this problem because $\lambda^*_m$ and $\beta^*_m$ are implicit functions of $m$. Before attacking this problem, we introduce a useful result that provides a general qualitative structure for the optimal indemnity  function in Problem \eqref{eqn:optimal-IC} or, equivalently, Problem \eqref{eqn:simplifiedP}.
\begin{proposition}\label{lemma:derivative}
Under Assumption \ref{assump:smallnu}, if $I^*$ is a solution to Problem \eqref{eqn:optimal-IC}, then there exists $\beta^*>0$ such that
\begin{equation}\label{NSC} {I^*}'(x)= \left\{
\begin{array}{ll}
1,&\Phi_{I^*}(x)>0,\\
c_{I^*}(x),& \Phi_{I^*}(x) = 0,\\
0,&\Phi_{I^*}(x)<0,
\end{array}\right.\end{equation}
for some function $c_{I^*}$ bounded on $[0,1]$, where
\begin{equation} \label{Phifun}
\Phi_{I}(x):=\EE\big[U'(W_{I}(X))-2\beta^*I(X)|X>x\big]-
\big((1+\rho)\EE\big[U'(W_{I}(X))\big]-2\beta^*I(X)\big),\; x\in[0,\mathcal{M})
\end{equation}
for $I\in\mathcal{IC}$.
\end{proposition}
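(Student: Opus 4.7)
The plan is to combine infinite-dimensional Lagrangian duality with a pointwise (bang-bang) analysis using the absolutely continuous parametrization $I(x)=\int_0^x I'(t)\dd t$ available for $I\in\mathcal{IC}$. First, I would form the Lagrangian
$$L(I,\beta):=\EE[U(W_I(X))]-\beta\bigl(var[I(X)]-\nu\bigr),$$
which is concave in $I$ over the convex set $\mathcal{IC}$ (since $U$ is concave, $W_I$ is affine in $I$, and $I\mapsto var[I(X)]$ is convex). Because $I\equiv 0$ is strictly feasible ($var[0]=0<\nu$), Slater's condition holds and a saddle-point argument for concave programs in a normed space yields a multiplier $\beta^*\geqslant 0$ such that $I^*$ maximizes $L(\cdot,\beta^*)$ over $\mathcal{IC}$. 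To upgrade to $\beta^*>0$, I would argue by contradiction: if $\beta^*=0$, then $I^*$ would be an optimum of the unconstrained Arrow problem \eqref{eqn:Prob-step1}, hence $I^*(X)=(X-d^*)_+$ almost surely, but then $var[I^*(X)]=var[(X-d^*)_+]>\nu$ by Assumption \ref{assump:smallnu}, violating feasibility.

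Second, I would compute the Gateaux derivative of $L(\cdot,\beta^*)$ at $I^*$ in an admissible direction $\delta I$. Using $\delta W_I(X)=\delta I(X)-(1+\rho)\EE[\delta I(X)]$ and $\delta var[I(X)]=2\EE[(I(X)-\EE[I(X)])\delta I(X)]$, I obtain
$$\langle \nabla L,\delta I\rangle=\EE\bigl[G_{I^*}(X)\,\delta I(X)\bigr],$$
where $G_I(X):=U'(W_I(X))-(1+\rho)\EE[U'(W_I(X))]-2\beta^*(I(X)-\EE[I(X)])$. Since every $I\in\mathcal{IC}$ is Lipschitz and hence absolutely continuous, $\delta I(X)=\int_0^X\delta I'(t)\dd t$; swapping the order of integration via Fubini then gives
$$\EE\bigl[G_{I^*}(X)\,\delta I(X)\bigr]=\int_0^{\mathcal{M}}\delta I'(t)\,\PP(X>t)\,\EE\bigl[G_{I^*}(X)\mid X>t\bigr]\dd t.$$
Expanding the conditional expectation identifies the inner bracket with $\PP(X>t)\,\Phi_{I^*}(t)$ (reading the term $2\beta^* I(X)$ outside the conditional expectation in the definition of $\Phi_I$ as $2\beta^*\EE[I(X)]$, which is the only interpretation making the expression a function of $x$ alone).

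Third, I would impose $\langle \nabla L,\delta I\rangle\leqslant 0$ for all admissible variations: $\delta I'(t)\geqslant 0$ where $I^{*\prime}(t)=0$, $\delta I'(t)\leqslant 0$ where $I^{*\prime}(t)=1$, and arbitrary sign where $0<I^{*\prime}(t)<1$. A standard localization argument using indicator-type perturbations on small intervals on which $\PP(X>t)>0$ then delivers the pointwise implications
$$\Phi_{I^*}(t)>0\ \Longrightarrow\ I^{*\prime}(t)=1,\qquad \Phi_{I^*}(t)<0\ \Longrightarrow\ I^{*\prime}(t)=0,\qquad 0<I^{*\prime}(t)<1\ \Longrightarrow\ \Phi_{I^*}(t)=0.$$
Defining $c_{I^*}(t):=I^{*\prime}(t)$ on $\{\Phi_{I^*}(t)=0\}$ produces a function valued in $[0,1]$ and completes the characterization \eqref{NSC}.

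The main obstacle is the Lagrangian step: the feasible set $\mathcal{IC}$ is convex but not compact in any natural topology, so the existence of $\beta^*\geqslant 0$ is not entirely routine and requires invoking a Kuhn--Tucker theorem for concave programs on a suitable normed linear space, with Slater's condition supplied by the null contract; ruling out $\beta^*=0$ then depends crucially on Assumption \ref{assump:smallnu}. Once $\beta^*>0$ is secured, the variational computation and the bang-bang pointwise conclusion are standard modulo careful bookkeeping of the null-measure indeterminacy in $I^{*\prime}$.
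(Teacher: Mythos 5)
Your proposal is correct in substance and follows the same overall strategy as the paper's proof: Lagrangian relaxation of the variance constraint, a one-sided Gateaux derivative of $L_{\beta^*}$ at $I^*$, Fubini to convert the condition on $\delta I(X)$ into $\int_0^{\mathcal{M}}\Phi_{I^*}(t)\,\PP\{X>t\}\,\delta I'(t)\,\dd t\leqslant 0$, and localization to extract the bang-bang form \eqref{NSC}. (Your reading of the last term in \eqref{Phifun} as $2\beta^*\EE[I(X)]$ is the intended one; the paper's own computation confirms it.) The only genuine divergence is the duality step. You invoke Slater's condition via the null contract and an abstract Kuhn--Tucker theorem on a normed space; the paper instead proves strong duality with Sion's minimax theorem, which is available precisely because $\mathcal{IC}$ \emph{is} compact in $(C[0,\mathcal{M}],\|\cdot\|_\infty)$ --- its elements are uniformly bounded by $\mathcal{M}$ and $1$-Lipschitz, so Arzel\`a--Ascoli applies (Lemma \ref{compact}). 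Your assertion that $\mathcal{IC}$ is not compact in any natural topology is therefore incorrect, although your route does not actually need compactness, since the constraint is a single scalar convex functional with a strict interior point; what your saddle-point formulation buys is complementary slackness for free, whereas the paper must separately invoke the fact (from Lemmas \ref{Lemma:m-L}, \ref{lemma:m_U} and Proposition \ref{Pro:optimalS}) that the optimal $I^*$ binds the variance constraint in order to conclude that $I^*$ maximizes $L_{\beta^*}$. One small repair: when ruling out $\beta^*=0$ you assert $I^*(X)=(X-d^*)_+$ almost surely, which presupposes uniqueness of the Arrow optimum over $\mathcal{IC}$; this is not needed --- it suffices to note that every maximizer of the unconstrained problem over $\mathcal{IC}$ has variance equal to $var[(X-d^*)_+]>\nu$ (any two such maximizers differ by an a.s.\ constant by strict concavity of $U$), or, as the paper argues, that the constrained and unconstrained optimal values cannot coincide under Assumption \ref{assump:smallnu}.
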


Note that \eqref{NSC} does not entail an {\it explicit} expression of ${I^*}'$ because
its right hand side also depends on $I^*$ as well as on an unknown parameter $\beta^*$.
While deriving the optimal solution $I^*$ {\it directly} from \eqref{NSC} seems challenging,
the equation reveals the important property that  ${I^*}'$ must take a value of either 0 or 1, except at point(s) $x$ where  $\Phi_{I^*}(x) = 0$.\footnote{From the control theory perspective, \eqref{NSC} corresponds to an optimal control problem in which ${I}'$ is taken as the control variable.
Moreover, the optimal control turns out to be of the so-called ``bang-bang" type, whose values
depend on the sign of the discriminant function $\Phi_{I}$. This type of optimal control problem arises when the Hamiltonian depends linearly on control and the control is constrained between an upper bound and a lower bound. It is usually hard to  solve for optimal control when the discriminant function is complex, which is the case here.}
This property will in turn help us to decide whether the optimal contract is of the form $(x-d_L)_+, \ x\wedge K_U$, or $I_{\lambda^*_m, \beta^*_m}$.

The following theorem presents a complete solution to Problem \eqref{eqn:simplifiedP}.
\begin{theorem}\label{main:theorem}
Under  Assumption \ref{assump:smallnu} and assume that the c.d.f. $F_X$ is strictly increasing on $(0,\mathcal{M})$. We have the following conclusions:
\begin{itemize}
\item[{\rm (i)}] If $\rho=0$, then the optimal  indemnity   function is $I^*$, where  $I^*(x)$ solves the following equation in $y$
    for all $x\in (0,\mathcal{M}]$:
\begin{equation}\label{eqn:zero-rho}
U'(w_0-x+y-m^*)-2\beta^*y-U'(w_0-m^*)=0,\,\;y\in(0,x),
\end{equation}
with the parameters $m^*\in(m_L,m_U)$ and $\beta^*>0$  determined by
\begin{equation}\label{eqn:Two-moment}
\EE[I^*(X)]=m^*\quad\text{and}\quad var[I^*(X)]=\nu.
\end{equation}
\item[(ii)] If $\rho>0$, then the optimal  indemnity function is
\begin{equation}\label{eqn:I-star}
I^*(x)=\left\{
\begin{array}{ll}
0,& 0 \leqslant x\leqslant \tilde{d}, \\
f^*(x), & \tilde{d}<x\leqslant \mathcal{M},
\end{array}
\right.
\end{equation}
where $f^*(x)$ satisfies $f^*(\tilde{d})=0$ and solves the following equation in $y$:
\begin{eqnarray}\label{eqn:f-star}
&&U'(w_0-(1+\rho)m^*-x+y)-U'(w_0-(1+\rho)m^*-\tilde{d})\\
&&\quad=\frac{y}{m^*\rho}\left(U'(w_0-(1+\rho)m^*-\tilde{d})-(1+\rho)
\EE[U'(w_0-(1+\rho)m^*-X\wedge\tilde{d})]\right),\,\;y\in(0,x),\nonumber
\end{eqnarray}
and  $\tilde{d}\in (VaR_{\frac{1}{1+\rho}}(X),\mathcal{M})$ and $m^*\in(m_L, m_U)$ are determined by \eqref{eqn:Two-moment}.
 \end{itemize}
\end{theorem}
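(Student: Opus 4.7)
The plan is to combine the FOCs from Proposition~\ref{Pro:optimalS}, the bang-bang structure from Proposition~\ref{lemma:derivative}, and the explicit formulas for the candidate contracts in~\eqref{eqn:three-case-op} to arrive at the announced form. First, by Proposition~\ref{Pro:optimalS} and the reduction~\eqref{eqn:three-case-op}, a global optimum of~\eqref{eqn:simplifiedP} belongs to $\{I_L,I_U\}\cup\{I_{\lambda^*_m,\beta^*_m}:m\in(m_L,m_U)\}$, and strict monotonicity of $F_X$ precludes the Bernoulli degeneracy of Proposition~\ref{prop:two-point} so that $m_L<m_U$. Applying Proposition~\ref{lemma:derivative} to $I_U$ shows $\Phi_{I_U}(0)=-\rho\EE[U'(W_{I_U})]-2\beta^*m_U<0$, which contradicts $I'_U(0^+)=1$; hence $I_U$ is never optimal. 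A marginal-utility check at $m=m_L^+$ (equivalently a Mossin-type perturbation that replaces a thin slice of slope-$1$ coverage just above $d_L$ by coinsurance while keeping the variance at $\nu$) rules out $I_L$. Consequently $I^*=I_{\lambda^*,\beta^*}$ for some $\beta^*>0$ and $m^*\in(m_L,m_U)$, with both constraints binding as in~\eqref{eqn11:equality-M-V} and with the Lagrangian identity $\lambda^*=(1+\rho)\EE[U'(W_{I^*})]-2\beta^*m^*$ available.

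For case~(i) with $\rho=0$, a Mossin-type perturbation (which remains feasible since a mean-preserving shift of coverage toward smaller losses also reduces $\mathrm{var}[I]$) excludes any strictly positive deductible under actuarially fair pricing. Thus the breakpoint in~\eqref{eqn:f-1} vanishes, which forces $\lambda^*=U'(w_0-m^*)$. Substituting into~\eqref{solution:equation} produces exactly~\eqref{eqn:zero-rho}, and the pair $(m^*,\beta^*)$ is uniquely pinned down by~\eqref{eqn:Two-moment}.

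For case~(ii) with $\rho>0$, the mirror-image perturbation shows that the optimal contract must carry a strictly positive deductible $\tilde d$, placing us in scenario~\eqref{eqn:f-2}; continuity at $\tilde d$ identifies $\lambda^*=U'(w_0-(1+\rho)m^*-\tilde d)$. The interior FOC~\eqref{solution:equation} becomes
\begin{equation*}
U'(W_{I^*}(x))-U'(w_0-(1+\rho)m^*-\tilde d)=2\beta^*I^*(x),\qquad x>\tilde d.
\end{equation*}
Taking expectations over $\{X>\tilde d\}$, using $I^*\equiv 0$ on $\{X\leqslant\tilde d\}$ and splitting $\EE[U'(W_{I^*})]$ across the two regions, yields
\begin{equation*}
\EE[U'(W_{I^*})]=\EE\!\left[U'(w_0-(1+\rho)m^*-X\wedge\tilde d)\right]+2\beta^*m^*.
\end{equation*}
Combining this with the Lagrangian identity $\lambda^*=(1+\rho)\EE[U'(W_{I^*})]-2\beta^*m^*$ and solving for $2\beta^*$ reproduces the coefficient on the right-hand side of~\eqref{eqn:f-star}. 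The strict inequality $\tilde d>VaR_{\frac{1}{1+\rho}}(X)$ follows from $\tilde d\geqslant d_L>d^*\geqslant VaR_{\frac{1}{1+\rho}}(X)$ under Assumption~\ref{assump:smallnu}.

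The hardest step will be the algebraic elimination producing~\eqref{eqn:f-star}, which requires a careful decomposition of $\EE[U'(W_{I^*})]$ across $\{X\leqslant\tilde d\}$ and $\{X>\tilde d\}$ and a clean reconciliation of the interior FOC, the boundary condition $I^*(\tilde d)=0$, and the Lagrangian identity. A secondary difficulty is rigorously excluding the pure-deductible candidate $I_L$ as a global optimum; this rests on the strict monotonicity of $F_X$ and a Mossin-type marginal argument at $m=m_L^+$ showing that a coinsurance perturbation yields a strictly higher expected utility while keeping the variance constraint binding.
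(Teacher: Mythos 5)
Your overall architecture matches the paper's (reduce to the three candidates of \eqref{eqn:three-case-op}, eliminate the pure deductible $I_L$ and pure loss-cap $I_U$, then pin down $\lambda^*_m,\beta^*_m$ via the necessary condition of Proposition \ref{lemma:derivative} and eliminate algebraically to reach \eqref{eqn:zero-rho} and \eqref{eqn:f-star}); your final algebra for case (ii), including the decomposition of $\EE[U'(W_{I^*}(X))]$ across $\{X\leqslant\tilde d\}$ and $\{X>\tilde d\}$, is correct. However, the elimination step --- the heart of the proof --- has concrete errors. First, your computation of $\Phi_{I_U}(0)$ is wrong: in \eqref{Phifun} the two $2\beta^*$ terms cancel, so (absent an atom at $0$) $\Phi_{I_U}(0)=-\rho\,\EE[U'(W_{I_U}(X))]$, not $-\rho\,\EE[U'(W_{I_U}(X))]-2\beta^*m_U$. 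With the correct value $\Phi_{I_U}(0)=0$ when $\rho=0$, and your contradiction with $I_U'(0^+)=1$ evaporates, so $I_U$ is not excluded in the actuarially fair case. The paper instead analyzes the sign of $\psi(x):=U'(W_{I_U}(x))-2\beta^*I_U(x)-\lambda^*$ locally at $K_U$: $\psi$ is strictly decreasing then strictly increasing with its minimum at $K_U$, which forces $\psi\leqslant 0$ on $[K_U,\mathcal{M})$ and $\psi(K_U)<0$, hence $\Phi_{I_U}<0$ just below $K_U$ where $I_U'=1$. Second, your exclusion of $I_L$ is only asserted (``a Mossin-type perturbation \ldots keeping the variance at $\nu$''); since $var[(X-d_L)_+]=\nu$ exactly, any admissible perturbation must not increase the variance, and constructing one that strictly improves expected utility is precisely the nontrivial content that the paper's sign analysis of $\Phi_{I_L}$ at $d_L$ supplies. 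Third, in case (i) you exclude a positive deductible but never exclude the full-insurance-then-coinsurance form \eqref{eqn:f-1} with a nontrivial full-coverage region; the paper rules this out (for all $\rho\geqslant 0$, so it is also needed in case (ii)) by taking expectations of the first-order quantity, which yields the contradiction $0<-\rho\,\EE[U'(W_{I^*}(X))]\leqslant 0$.

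Finally, your justification of $\tilde d>VaR_{\frac{1}{1+\rho}}(X)$ rests on the inequality $\tilde d\geqslant d_L$, which is false: since $I^*$ has slope strictly less than $1$ above $\tilde d$, the inequality $\tilde d\geqslant d_L$ would give $I^*(x)\leqslant (x-d_L)_+$ and hence $m^*\leqslant m_L$, contradicting $m^*\in(m_L,m_U)$. The paper instead derives
$$
\EE[U'(W_{I^*}(X))]=\frac{\EE[U'(w_0-X-(1+\rho)m^*)\II_{\{X\leqslant \tilde{d}\}}]+2\beta^*m^*F_X(\tilde{d})}{1-(1+\rho)\PP\{X>\tilde{d}\}}
$$
and reads off the positivity of the denominator, which is exactly the statement $\tilde d>VaR_{\frac{1}{1+\rho}}(X)$.
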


Theorem \ref{main:theorem} provides a complete solution  to Problem \eqref{eqn:simplifiedP}.
It indicates that the optimal contract can {\it not} be a pure deductible of the form $(x-d_L)_+$, nor a pure loss-capped of the form $x\wedge K_U$. It can only be in the form $I_{\lambda^*_m, \beta^*_m}$ of \eqref{eqn:f-2} (rather than \eqref{eqn:f-1}). The optimal policies can be computed by solving a system of three {\it algebraic} equations; so the result is semi-analytic.


Actuarially, Theorem \ref{main:theorem} reveals how the variance bound  impacts
the contract. When the bound $\nu$ is sufficiently low so that it is binding (hence the model does not degenerate into
the classical \citealt{Arrow1963}'s model), the optimal policy is always genuine coinsurance
if there is no safety loading. Here, by ``genuine" we mean the {\it strict} inequalities
$0<I^*(x)<x$ for all $x\in (0,\mathcal{M}]$, namely
both the insurer and the insured pay positive portions  of the loss incurred. When the safety loading coefficient is positive, the optimal contract demands genuine coinsurance above a positive deductible. So the variance bound translates into a change from the part of the full insurance in Arrow's contract to coinsurance.
Our contracts are similar {\it qualitatively} to those of \citet{Raviv}, in which a utility function is in the place of a variance bound; however, ours  are {\it quantitatively} different from \citet{Raviv}'s.

On the other hand, the deductible $\tilde{d}$ is positive if and only if the safety loading coefficient is positive. So the existence of the deductible is completely determined by the  loading coefficient in the insurance premium. This result is consistent with Mossin's Theorem (\citealt{Mossin1968}).

\begin{corollary}\label{Vajda}
Under the assumptions of Theorem \ref{main:theorem}, if the insured is prudent, then the proportion between optimal indemnity and loss increases as loss increases.
\end{corollary}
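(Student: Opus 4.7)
The key observation is that $I^*(x)/x$ is increasing if and only if $h(x) := x{I^*}'(x) - I^*(x) \geq 0$ on the relevant interval. Since $h(0) = 0$ in case (i) and $h(\tilde d) = \tilde d\,{I^*}'(\tilde d^+) \geq 0$ in case (ii), and since $h'(x) = x{I^*}''(x)$, it suffices to establish that $I^*$ is \emph{convex} on its support. Thus the plan is: (a) obtain an implicit formula for ${I^*}'(x)$ by differentiating the defining equation from Theorem~\ref{main:theorem}; (b) use prudence ($U''' > 0$) to show ${I^*}'$ is increasing; (c) translate convexity into monotonicity of the ratio.

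For step (a)--(b) in case (i) ($\rho = 0$), differentiate \eqref{eqn:zero-rho} implicitly with $y = I^*(x)$ to get
\begin{equation*}
{I^*}'(x) = \frac{A(x)}{A(x) + 2\beta^*}, \qquad A(x) := -U''(w_0 - x + I^*(x) - m^*) > 0.
\end{equation*}
Differentiating $A$ gives $A'(x) = U'''(w_0 - x + I^*(x) - m^*)\,(1 - {I^*}'(x))$. Prudence ($U''' > 0$) together with ${I^*}'(x) \in (0,1)$ forces $A'(x) > 0$; since $A \mapsto A/(A+2\beta^*)$ is strictly increasing for $\beta^* > 0$, ${I^*}'$ is increasing in $x$, i.e., $I^*$ is convex on $[0,\mathcal M]$. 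For case (ii) ($\rho > 0$), the same computation applies to $f^*$ on $(\tilde d, \mathcal M]$: comparing \eqref{eqn:f-star} with the representation \eqref{eqn:f-2} used in Proposition~\ref{Pro:optimalS} identifies $\lambda^* = U'(w_0-(1+\rho)m^* - \tilde d)$ and $2\beta^* m^*\rho = K$, where $K := U'(w_0-(1+\rho)m^* - \tilde d) - (1+\rho)\mathbb{E}[U'(w_0-(1+\rho)m^* - X\wedge \tilde d)]$. Since Proposition~\ref{Pro:optimalS} ensures $\beta^* > 0$, we have $K > 0$, and implicit differentiation yields
\begin{equation*}
{f^*}'(x) = \frac{\widetilde A(x)}{\widetilde A(x) + K/(m^*\rho)}, \qquad \widetilde A(x) := -U''(w_0 - (1+\rho)m^* - x + f^*(x)) > 0,
\end{equation*}
and the same prudence argument gives $\widetilde A$ increasing in $x$, hence ${f^*}'$ increasing, i.e., $f^*$ convex on $[\tilde d, \mathcal M]$.

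For step (c), in case (i) the boundary condition $I^*(0) = 0$ combined with $h'(x) = x{I^*}''(x) \geq 0$ yields $h(x) \geq h(0) = 0$ on $[0,\mathcal M]$, so $(I^*(x)/x)' = h(x)/x^2 \geq 0$ on $(0,\mathcal M]$. In case (ii), the ratio is identically $0$ on $(0,\tilde d]$; on $(\tilde d,\mathcal M]$ the same argument combined with $f^*(\tilde d) = 0$ and ${f^*}'(\tilde d^+) > 0$ yields $h(x) > 0$, so $f^*(x)/x$ is strictly increasing there. Continuity across $\tilde d$ (where both $I^*(\tilde d)$ and $f^*(\tilde d)/\tilde d$ equal $0$) then gives that $I^*(x)/x$ is non-decreasing on $(0,\mathcal M]$.

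The main technical point to verify carefully is the smoothness required for $I^*$ (the implicit function theorem applies cleanly because $U'' < 0$ keeps the Jacobian of the defining equation bounded away from zero) and the positivity of the constants $\beta^*$ and $K$ in case (ii); both are already delivered by Proposition~\ref{Pro:optimalS} and the characterization inside Theorem~\ref{main:theorem}, so no additional estimate is needed. The only place where prudence is actually invoked is in pinning down the sign of $A'$ (equivalently $\widetilde A'$); without $U''' > 0$ one could not conclude that ${I^*}'$ is monotone.
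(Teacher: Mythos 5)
Your proof is correct and follows essentially the same route as the paper: both arguments use prudence together with the monotonicity of the retained loss $x-I^*(x)$ to show that the marginal indemnity ${I^*}'=A/(A+2\beta^*)$ is increasing, and then convert this into monotonicity of $I^*(x)/x$ via the sign of $x{I^*}'(x)-I^*(x)$. The only (immaterial) difference is in the last step, where the paper bounds $f'(x)x-f(x)$ by the integral $\int_{\tilde d}^x\bigl(f'(x)-f'(y)\bigr)\,\dd y\geqslant 0$ instead of differentiating $h(x)=x{I^*}'(x)-I^*(x)$, thereby avoiding any appeal to second differentiability of $I^*$.
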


So, with a prudent insured, the insurer pays more not only absolutely  but also relatively as loss increases. \citet{Vajda1962} restricts his study on a variance contracting problem to policies that have this feature of the insurer covering proportionally more for larger losses. Corollary \ref{Vajda} uncovers ex post this feature in our optimal policies, {\it provided} that the insured is prudent.

\section{Comparative Statics}\label{section:4}

Thanks to the semi-analytic results derived in the previous section, we are able to
analyze the impacts of the insured's initial wealth and the variance bound on the insurance demand.

We make the following assumptions for our comparative statics analysis:
\begin{Assumption}\label{Assumption}
\begin{itemize}
\item[{\rm (i)}] $F_X$ is strictly increasing on $(0,\mathcal{M})$.
\item[{\rm (ii)}] The insurance is fairly priced, i.e., $\rho=0$.
\end{itemize}
\end{Assumption}

Assumption \ref{Assumption}-(i) is standard in the literature that accommodates most of the used distributions by actuaries, such as exponential, lognormal, gamma, and Pareto distributions. Assumption \ref{Assumption}-(ii) is not necessarily plausible in practice,  but it is
 meaningful in theory, as it describes a state in competitive equilibrium, in which insurers break even and  insurance policies are actuarially fair for representative insureds \citep[see e.g.,][]{Rothschild1976,Viscusi1979}. It is important to carry out comparative statics
analyses in such a ``fair" state in order to rule out any impact emanating  from an unfair
price.
Such an assumption is indeed often imposed when conducting comparative statics in the literature of insurance economics. For example, the comparative statics results of \citet{Ehrlich1972} and \citet{Viscusi1979} deal exclusively with actuarially fair situations. Many recent studies, such as \citet{Eeckhoudt2003}, \citet{Huang2006} and \citet{Teh2017}, also impose this assumption for their comparative statics analyses.

Finally, we will assume $\nu<var[X]$ throughout this section, as otherwise the variance constraint is redundant and  the optimal solution is trivially full insurance.

\subsection{Impact of the insured's initial wealth}\label{section:41}

In this subsection we examine the impact of the insured's initial wealth on insurance demand. We first recall the notion of one function up-crossing another.
A function $g_1$ is said to {\it up-cross} a function $g_2$, both defined on $\RR$,  if there exists $z_0\in\RR$ such that
			\[\left\{\begin{array}{ll}
			g_1(x)\leqslant g_2(x),& \ x<z_0,\\
			g_1(x)\geqslant g_2(x),& \ x\geqslant z_0.\end{array}\right.\]
Moreover,  $g_1$ is said to up-cross $g_2$ twice if there exist $z_0<z_1$ such that
		\[\left\{\begin{array}{ll}
		g_1(x)\leqslant g_2(x),& \ x< z_0,\\
		g_1(x)\geqslant g_2(x),& \ z_0\leqslant x<z_1,\\
		g_1(x)\leqslant g_2(x),& \ x\geqslant z_1.\end{array}\right.\]

Consider two initial wealth levels $w_1<w_2$ and denote the corresponding optimal contracts by $I_1^*$ and $I_2^*$ and the associated parameters by $\beta^*_1$ and $\beta_2^*$, respectively, which are determined by Theorem \ref{main:theorem}. Recall that $\rho=0$; so the insurer's risk exposure functions are
\begin{equation}\label{eqn:e-I}
e_{I_i^*}(x)=I_i^*(x)-\EE[I_i^*(X)]=I_i^*(x)-m_i^*,\;\;i=1,2,
\end{equation}
where $m_i^*:=\EE[I_i^*(X)]$. Taking  expectations on \eqref{eqn:zero-rho} yields $$U'(w_i-m_i^*)=\EE[U'(w_i-X+e_{I_i^*}(X))]-2\beta_i^*m_i^*,$$ which in turn implies,  for $i=1,2$,
\begin{align}\label{tilde:equation}
U'(w_i-x+e_{I_i^*}(x))-2\beta_{i}^*e_{I_i^*}(x)-\EE[U'(w_i-X+e_{I_i^*}(X))]=0,
\end{align}
\begin{align}\label{mean:variance}
\EE[e_{I_i^*}(X)]=0 \quad\text{and} \ \ \ \EE[(e_{I_i^*}(X))^2]=var[I_i^*(X)]=\nu.
\end{align}
Note that the insurer's profit with the contract $I_i$ is
$\EE[I_i^*(X)]-I_i^*(X)\equiv -e_{I_i^*}(X)$, $i=1,2$.
The following theorem establishes the impact of the initial wealth on the  insurance contract.

\begin{theorem}\label{CS:initialwealth}
In addition to Assumption~\ref{Assumption}, we assume that $\nu<var[X]$ and the insured's utility function $U$  exhibits strictly DAP. Then, the insurer's risk exposure function with the larger initial wealth, $e_{I_2^*}(x)$, up-crosses the risk exposure function with the smaller initial wealth, $e_{I_1^*}(x)$, twice.
 Moreover,  the insurer's profit, $-e_{I_2^*}(X)$,  has less downside risk
when contracting with the wealthier insured.
\end{theorem}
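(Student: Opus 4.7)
The plan is to prove the two assertions separately: (I) the up-crossing twice of $e_{I_2^*}$ over $e_{I_1^*}$; and (II) the TSD--dominance of $-e_{I_2^*}(X)$ over $-e_{I_1^*}(X)$, which, combined with the equalities $\mathbb{E}[e_{I_1^*}(X)] = \mathbb{E}[e_{I_2^*}(X)] = 0$ and $\mathrm{var}[e_{I_1^*}(X)] = \mathrm{var}[e_{I_2^*}(X)] = \nu$, is exactly the less-downside-risk property. Write $r_i := e_{I_i^*}$ and $K_i := \mathbb{E}[U'(w_i - X + r_i(X))]$ for $i=1,2$. By \eqref{tilde:equation}--\eqref{mean:variance}, $U'(w_i - x + r_i(x)) - 2\beta_i^* r_i(x) = K_i$, so $r_i$ is $C^1$ with slope $r_i'(x) = U''(W_i(x))/(U''(W_i(x)) - 2\beta_i^*) \in (0,1)$, where $W_i(x) := w_i - x + r_i(x)$. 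In particular, $r_1$ and $r_2$ are both strictly increasing on $[0,\mathcal{M}]$.

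For (I), set $g := r_2 - r_1$ and exploit the two moment identities $\mathbb{E}[g(X)] = 0$ and $\mathbb{E}[g(X)(r_1(X) + r_2(X))] = 0$, the latter coming from equal variances. Uniqueness (Proposition \ref{exis:uniqueness}(ii)) rules out $g \equiv 0$ on the support of $X$. Ruling out a single sign change of $g$ is the easy step: if $g$ had pattern $(-,+)$ with crossing $z$, then $g(x)$ has the same sign as $x-z$, and since $r_1 + r_2$ is strictly increasing on $[0,\mathcal{M}]$ (and $F_X$ is strictly increasing on $(0,\mathcal{M})$), $r_1(x)+r_2(x) - (r_1(z)+r_2(z))$ has the same sign as $x-z$; multiplying and taking expectations gives $\mathbb{E}[g(X)(r_1+r_2)(X)] > 0$, a contradiction (the pattern $(+,-)$ is symmetric).

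The hard part is to rule out three or more sign changes using strict DAP. I would argue as follows: at any crossing point $z$ with $r_1(z) = r_2(z) =: p$, one has $W_2(z) - W_1(z) = w_2 - w_1 > 0$, and the slope difference $r_2'(z) - r_1'(z)$ is an explicit function of $U''$ evaluated at $W_1(z)$ and $W_2(z)$, along with $\beta_2^* - \beta_1^*$. Strict DAP (which implies strict DARA plus a strict convexity/monotonicity property of $-U'$) should force $r_2'(z) - r_1'(z)$ to depend monotonically on $(z,p)$ along the one-parameter curve of crossing pairs, producing a Sturm-type comparison: between two consecutive zeros of $g$, $g$ changes the sign of its derivative in a definite direction, so at most two sign changes of $g$ are possible. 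Once at most two sign changes are established, the pattern must be $(-,+,-)$ rather than $(+,-,+)$: this reduces to showing $g(0) = m_1^* - m_2^* < 0$, i.e.\ $m_2^* > m_1^*$, which I would obtain by combining the defining equation evaluated at $x=0$ with the TSD/DAP characterization of how $K_i$ responds to the wealth shift, arguing by contradiction that $m_2^* \leqslant m_1^*$ together with pattern $(+,-,+)$ would violate one of the two moment equalities.

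For (II), given Claim (I) and the matching first two moments, I would apply a Karlin--Novikoff-type variation-diminishing argument to the CDFs of $Z_i := r_i(X)$. Because each $r_i$ is strictly increasing and continuous, $F_{Z_i}(y) = F_X(r_i^{-1}(y))$, and the sign pattern $(-,+,-)$ of $g$ translates into a corresponding sign pattern, with exactly two sign changes, for $F_{Z_1} - F_{Z_2}$; the equal-mean and equal-variance conditions force the once- and twice-integrated differences $\int_{-\infty}^y(F_{Z_1}-F_{Z_2})$ and $\int\int(F_{Z_1}-F_{Z_2})$ to vanish at the right endpoint and, through two integrations by parts, to be non-negative everywhere. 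This is precisely the TSD condition $Z_1 \geqslant_{\mathrm{TSD}} Z_2$ after a sign flip, giving $-Z_2 \geqslant_{\mathrm{TSD}} -Z_1$. By the equivalence quoted after \eqref{def:DAP}, $-e_{I_2^*}(X)$ has less downside risk than $-e_{I_1^*}(X)$. The main obstacle I anticipate is the DAP-based Sturm comparison that caps the number of sign changes of $g$ at two—it requires a delicate book-keeping of how $\beta_i^*$ and $K_i$ co-vary with $w_i$ under the binding variance constraint, and is the step where the global (not pointwise) effect of strict DAP really enters.
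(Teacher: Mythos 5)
Your skeleton is the right one — analyze crossing points of $g=e_{I_2^*}-e_{I_1^*}$, compare slopes there via the first-order condition \eqref{tilde:equation}, use strict DAP to order the crossing types, and finish with a Karlin--Novikoff/TSD argument — but the proposal has two genuine gaps.

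First, a concrete error: you invoke Proposition \ref{exis:uniqueness}(ii) to rule out $g\equiv 0$. That proposition gives uniqueness of the solution to a \emph{single} problem (one fixed initial wealth); it says nothing about the solutions of the two problems with $w_1\neq w_2$ being distinct, so it cannot deliver $g\not\equiv 0$. Since both your no-crossing case and your single-crossing case (the covariance argument, which forces $g(x)\left\{(r_1+r_2)(x)-(r_1+r_2)(z)\right\}=0$ everywhere and hence $g\equiv 0$) reduce to exactly this point, the flaw propagates. The correct way to exclude $g\equiv 0$ is where strict DAP first enters: if $e_{I_1^*}=e_{I_2^*}$ pointwise, then equating the slopes from \eqref{tilde:equation} gives $\phi\bigl(x-e_{I_1^*}(x)\bigr)=\beta_1^*/\beta_2^*$ for all $x$, where $\phi(z)=\frac{-U''(w_1-z)}{-U''(w_2-z)}$; strict DAP makes $\phi$ strictly increasing while $x-e_{I_1^*}(x)$ is strictly increasing, a contradiction (this is the paper's Lemma \ref{distribution:same}).

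Second, and more importantly, the step you yourself flag as the ``hard part'' — capping the number of sign changes at two and getting the orientation $(-,+,-)$ — is only conjectured (``should force \ldots a Sturm-type comparison''), and your anticipated route is both unexecuted and harder than necessary. No ``book-keeping of how $\beta_i^*$ and $K_i$ co-vary with $w_i$'' is needed: $\beta_1^*/\beta_2^*$ is simply a fixed constant threshold, and at any crossing $z$ with common value $p$ the sign of $g'(z)$ is the sign of $\beta_1^*/\beta_2^* - \phi(z-p)$. Since $z\mapsto \phi\bigl(z-e_{I_1^*}(z)\bigr)$ is strictly increasing along the crossing set (strict DAP again), every $-\!\to\!+$ crossing must precede every $+\!\to\!-$ crossing; combined with the exclusion of zero and one crossings (via Lemma \ref{lemma:upcross:1} in the paper, or your covariance trick once $g\not\equiv 0$ is properly established), the pattern is forced to be $(-,+,-)$. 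This also makes your proposed detour through $m_2^*>m_1^*$ unnecessary and saves you from circularity: in the paper that inequality is a \emph{consequence} of the theorem (Corollary \ref{initialwealth:betaandm}(i)), not an input. Your part (II) — two sign changes plus matched first two moments yields the downside-risk comparison — is standard and essentially what the paper does.
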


\begin{figure}[htp]\label{Theo41}
	\centering
	\includegraphics[width=5 in]{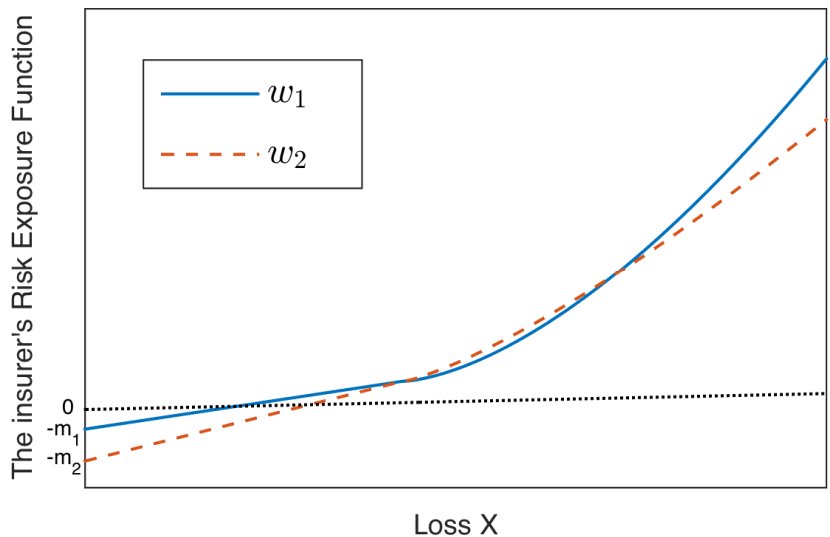}
	\caption{Comparison of two insurer's risk exposure functions with $w_1<w_2$.}\label{figure:Theo41}
\end{figure}

Figure \ref{figure:Theo41} illustrates graphically the first part of  Theorem \ref{CS:initialwealth}. The actuarial implication is that when the insured becomes wealthier, the insurer's risk exposure is lower  for  large or  small losses and is higher   for  moderate losses. This can be explained  intuitively  as follows. Even if  the insurance pricing is actuarially fair, the  insureds are unable to transfer all the risk to the insurer due to the variance bound. However,  the wealthier insured is more tolerant with large losses due to the DAP; hence, the insurer's risk exposure is lower for large losses when contracting with the wealthier insured. Due to the requirement that the insurer's {\it expected} risk exposure be always zero, the insurer's risk exposure with the wealthier insured must be higher for moderate losses. Now, should  the insurer's risk exposure with the wealthier insured also be higher for small losses, then overall the insurer's risk exposure with the wealthier insured would be strictly more spread out than that with the less wealthy one, leading to a smaller variance of the former, which would be a contradiction. Hence, the insurer's risk exposure must be {\it lower} for small  losses  with the wealthier insured.

The second part of the theorem, on the other hand, suggests that  a variance minding  insurer prefers to provide insurance to a wealthier insured due to the smaller downside risk. Such a finding may shed light on why insurers underwrite relatively more business in developed countries or, in a same country,
engage more business when the economy improves.\footnote{For example, \cite{Hofmann2015}, an industry report from the   insurance company Zurich,
shows that both {\it insurance densities} (premiums per capita) and {\it insurance penetrations} (premiums as a percent of GDP) of advanced economies are much higher than those of emerging economies. This report also demonstrates that insurance markets in both advanced and emerging economies experience rapid growth when the economies grow.}

\begin{corollary}\label{initialwealth:betaandm}
	Under the assumptions of Theorem \ref{CS:initialwealth}, we have the following conclusions:
	\begin{itemize}
		\item[{\rm (i)}] $\EE[I_1^*(X)]<\EE[I_2^*(X)]$ and $\beta_2^*<\beta_1^*$.
		\item[{\rm (ii)}] Either $I^*_{1}(x)<I^*_{2}(x)$ $\forall x> 0$, or $I^*_{1}$ up-crosses $I^*_{2}$.
	\end{itemize}

\end{corollary}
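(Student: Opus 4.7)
The plan is to derive both parts from Theorem~\ref{CS:initialwealth} by analyzing the boundary values and the moment constraints that the insurer's risk exposures $e_{I_i^*}$ must satisfy.

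For the first claim of Part~(i), I would evaluate the double up-crossing relation at $x=0$. Incentive compatibility (guaranteed by Theorem~\ref{main:theorem}) gives $I_i^*(0)=0$, so $e_{I_i^*}(0)=-m_i^*$. Since $0$ lies in the first region of the double up-crossing, we obtain $-m_2^*=e_{I_2^*}(0)\leqslant e_{I_1^*}(0)=-m_1^*$, i.e., $m_1^*\leqslant m_2^*$. For strictness, I would argue by contradiction: if $m_1^*=m_2^*$, then $e_{I_1^*}(0)=e_{I_2^*}(0)$ and both $e_i$ solve the implicit ODE
\[
e_i'(x)=\frac{U''(w_i-x+e_i(x))}{U''(w_i-x+e_i(x))-2\beta_i^*},
\]
subject to the identical moment conditions $\EE[e_i(X)]=0$ and $\EE[e_i(X)^2]=\nu$. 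Together with strict DAP and $w_1<w_2$, these constraints would force $e_1\equiv e_2$, and in turn $I_1^*\equiv I_2^*$, contradicting Proposition~\ref{exis:uniqueness}(ii), which yields unique optimal contracts for the two distinct wealth levels.

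For the second claim of Part~(i), I would multiply the pointwise identity \eqref{tilde:equation} by $e_{I_i^*}(x)$ and integrate with respect to $F_X$. Using $\EE[e_{I_i^*}(X)]=0$ and $\EE[e_{I_i^*}(X)^2]=\nu$, this produces
\[
\beta_i^*=\frac{\EE\!\left[e_{I_i^*}(X)\,U'(W_{I_i^*}(X))\right]}{2\nu}.
\]
Equivalently, applying $U'$ to the identity to get $U'(W_{I_i^*}(X))=U'(w_i-m_i^*)+2\beta_i^* I_i^*(X)$ and taking variance yields the clean representation $\beta_i^*=\tfrac{1}{2}\sqrt{\mathrm{var}(U'(W_{I_i^*}(X)))/\nu}$. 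The inequality $\beta_2^*<\beta_1^*$ then reduces to $\mathrm{var}(U'(W_{I_2^*}(X)))<\mathrm{var}(U'(W_{I_1^*}(X)))$, which I would establish by combining the less-downside-risk conclusion of Theorem~\ref{CS:initialwealth} with strict DAP, exploiting the fact that the wealthier insured is less sensitive to future risk and therefore her marginal-utility response is less dispersed.

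For Part~(ii), I would write
\[
I_2^*(x)-I_1^*(x)=\big(e_{I_2^*}(x)-e_{I_1^*}(x)\big)+c,\qquad c:=m_2^*-m_1^*>0,
\]
and exploit the sign pattern from Theorem~\ref{CS:initialwealth}: $e_{I_2^*}-e_{I_1^*}$ is non-positive on $[0,z_0)$, strictly positive on $[z_0,z_1)$, and non-positive on $[z_1,\mathcal{M}]$, with $(e_{I_2^*}-e_{I_1^*})(0)=-c$ and equalities at $z_0,z_1$. On the middle region this already yields $I_2^*(x)-I_1^*(x)>c>0$, and at $x=z_1$ we have $I_2^*(z_1)-I_1^*(z_1)=c>0$. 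To conclude, I need to rule out the possibility of more than one sign change on the end regions $[0,z_0)$ and $[z_1,\mathcal{M}]$, which would produce a ``down--up--down'' shape incompatible with the two options in the claim. I would handle this by using the ODE for $e_i$ together with the newly established ordering $\beta_2^*<\beta_1^*$ from Part~(i) to show a single-crossing (Karlin-type) property: on $[z_1,\mathcal{M}]$ the difference $I_2^*-I_1^*$ is monotonically decreasing from $c$, and on $[0,z_0)$ it does not dip below zero. The hardest part will be this single-crossing argument, which must knit together strict DAP, the double up-crossing of $e_{I_i^*}$, and the Lagrange-multiplier comparison to rule out the spurious intermediate sign change.
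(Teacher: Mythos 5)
Your proposal has genuine gaps at each of the hard steps, so it does not amount to a proof. The most serious one is your argument for $\beta_2^*<\beta_1^*$, which is circular: from \eqref{eqn:zero-rho} one has the pointwise identity $U'(W_{I_i^*}(X))=U'(w_i-m_i^*)+2\beta_i^*I_i^*(X)$, so $var[U'(W_{I_i^*}(X))]=4(\beta_i^*)^2\,var[I_i^*(X)]=4(\beta_i^*)^2\nu$ holds \emph{identically}; your ``clean representation'' $\beta_i^*=\tfrac12\sqrt{var[U'(W_{I_i^*}(X))]/\nu}$ is thus a restatement of what is to be proved, and comparing the dispersions of the marginal utilities \emph{is} comparing the $\beta_i^*$'s. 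The appeal to ``less downside risk plus strict DAP'' is not an argument: the two marginal utilities are evaluated at different wealth levels, and third-degree dominance of $-e_{I_i^*}(X)$ carries no implication for this variance comparison. The paper's proof is quite different: it evaluates \eqref{tilde:equation} at the \emph{two} crossing points $x_0<x_1$ furnished by Theorem \ref{CS:initialwealth}, subtracts, and uses that $L(y):=U'(w_1-y)-U'(w_2-y)$ is strictly increasing (from $U'''>0$) together with the strict monotonicity of $x-e_{I_i^*}(x)$ and of $e_{I_i^*}(x)$ to force $\beta_1^*>\beta_2^*$. Your strictness argument for $\EE[I_1^*(X)]<\EE[I_2^*(X)]$ is also flawed: equal values $e_{I_1^*}(0)=e_{I_2^*}(0)$ do not force $e_{I_1^*}\equiv e_{I_2^*}$ (the two functions satisfy \emph{different} implicit equations, with different $w_i$ and $\beta_i^*$), and even if they did coincide, Proposition \ref{exis:uniqueness}(ii) asserts uniqueness for a single fixed problem and says nothing about contracts across two distinct wealth levels; the relevant tool is Lemma \ref{distribution:same}, and the paper obtains strictness from the internals of the proof of Theorem \ref{CS:initialwealth}, where $e_{I_2^*}<e_{I_1^*}$ strictly before the first crossing.

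For part (ii), you explicitly defer ``the hardest part'' --- the single-crossing property --- and that property is the entire content of the claim. The paper supplies it by a local derivative comparison at crossing points: if $I_1^*(z)=I_2^*(z)$ for some $z>0$, subtracting the two copies of \eqref{eqn:zero-rho} and using $\beta_1^*>\beta_2^*$ with $U'''>0$ first forces $\widetilde{w}_1<\widetilde{w}_2$, where $\widetilde{w}_i:=w_i-\EE[I_i^*(X)]$ (this comparison is not free, since $w_1<w_2$ but also $m_1^*<m_2^*$, and the degenerate case $\widetilde{w}_1=\widetilde{w}_2$ needs separate handling); then strict DAP implies that $H(w):=\bigl(U'(w-y)-U'(w)\bigr)/\bigl(-U''(w-y)\bigr)$ is strictly increasing, whence ${I_2^*}'(z)<{I_1^*}'(z)$. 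Since the difference $I_2^*-I_1^*$ therefore crosses zero in the same (downward) direction at every interior crossing point, there can be at most one crossing, giving exactly the dichotomy in the statement. Your plan identifies neither the net-wealth comparison nor the DAP-based monotonicity of $H$, and your intermediate assertion that $I_2^*-I_1^*$ is monotonically decreasing on the last region is stronger than what is true or needed.
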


\begin{figure}[htp]\label{Coro42}
	\centering
	\includegraphics[width=4 in]{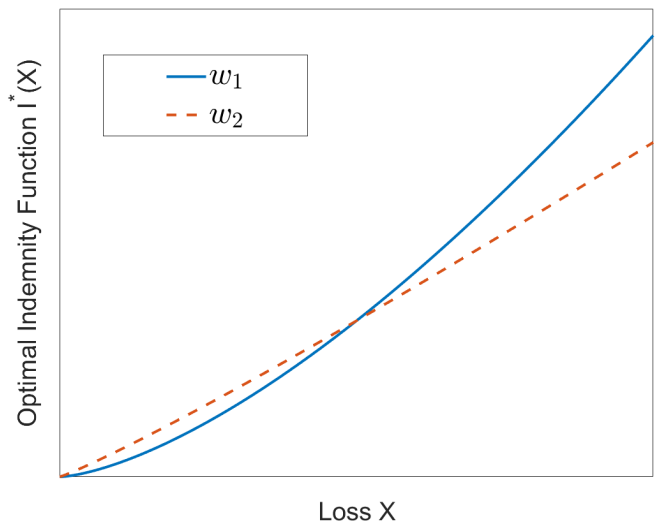}
	\caption{Comparison of two optimal indemnity functions with $w_1<w_2$.}\label{figure:Coro42}
\end{figure}

In  part (ii) of this corollary, while the case $I^*_{1}(x)<I^*_{2}(x)$ $\forall x> 0$ is a special case of  $I^*_{1}$ up-crossing $I^*_{2}$, we state it separately to highlight its possibility. In the classical \cite{Arrow1963}'s model, full insurance is optimal when insurance pricing is actuarially fair. This conclusion is {\it independent} of an insured's worth. \citet{Zhou2010} show that such a conclusion
is intact when there is an {\it exogenous} upper limit imposed on the insurer's risk exposure. Our result yields that adding a variance bound fundamentally changes the insurance demand -- it makes the insured's wealth level relevant, and it changes the way in which the two parties share the risk.  Specifically, Corollary \ref{initialwealth:betaandm} suggests that a DAP wealthier insured would either demand more coverage across the board or retain more larger risk and cede more smaller risk (see Figure \ref{figure:Coro42}). Either way, the {\it expected} coverage is {\it always} larger for the wealthier insured. Recall that  insurance is called a {\it normal (inferior) good} if wealthier people purchase more (less) insurance converge; see \citet{Mossin1968}, \citet{Schlesinger1981} and \citet{Gollier2001}.
\citet{Millo2016} argues that nonlife insurance is a normal good by empirically testing whether or not income elasticity is significantly greater than one. \citet{Armantier2018} use micro level survey data on households' insurance coverage to conclude that insurance is a normal good, thereby  providing a better understanding of the relationship between insurance demand and economic development.
These studies, however, are purely empirical. To the best of our knowledge, ours is the first
{\it theoretical} result regarding insurance as a normal good under the insurer's variance constraint, confirming these empirical findings.\footnote{\citet{Mossin1968}, \citet{Schlesinger1981} and \citet{Gollier2001} show that a wealthier insured with a DARA preference will cede less risk under {\it unfair} insurance pricing; hence, insurance is an inferior good in the corresponding economy. Their results degenerate into full insurance when the pricing is fair, and thus insurance demand  is independent of the insured's wealth. Consequently, our results do {\it not} contradict theirs. }

\subsection{Impact of the variance bound}

In this subsection we keep the insured's initial wealth unchanged and analyze the impact of the variance bound on her  demand for insurance. Consider two variance bounds with $0<\nu_1<\nu_2<var[X]$ and denote the corresponding optimal indemnity functions by $I_1^*$ and $I_2^*$ and the parameters by $\beta^*_1$ and $\beta_2^*$, respectively. Thus, the insurer's risk exposures,  $e_{I_i^*}(x)=I_i^*(x)-\EE[I_i^*(X)]$, $i=1,2,$ satisfy
\begin{align}\label{tilde:equation:2}
U'(w_0-x+e_{I_i^*}(x))-2\beta_{i}^*e_{I_i^*}(x)-\EE[U'(w_0-X+e_{I_i^*}(X))]=0
\end{align}
and
\begin{align}\label{mean:variance:2}
\EE[e_{I_i^*}(X)]=0 \quad\text{and} \ \ \ \EE[(e_{I_i^*}(X))^2]=var[e_{I_i^*}(X)]=\nu_i.
\end{align}
The following theorem illustrates how the insurer's risk exposure responds to the change in the variance bound.
\begin{theorem}\label{CS:variance}
Under Assumption~\ref{Assumption},  the insurer's risk exposure function with the larger variance bound, $e_{I_2^*}$, up-crosses that with the smaller variance bound, $e_{I_1^*}$.
\end{theorem}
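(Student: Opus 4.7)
The plan is to work directly with the first-order conditions \eqref{tilde:equation:2} and track the sign pattern of $D(x) := e_{I_2^*}(x) - e_{I_1^*}(x)$. Writing $e_i := e_{I_i^*}$, $\beta_i := \beta_i^*$, and $C_i := \EE[U'(w_0 - X + e_i(X))]$, I would apply the mean value theorem to rewrite $U'(w_0 - x + e_2(x)) - U'(w_0 - x + e_1(x)) = U''(\xi_x)(e_2(x) - e_1(x))$ for some $\xi_x$ between $w_0 - x + e_1(x)$ and $w_0 - x + e_2(x)$, and then subtract the two first-order conditions to obtain, after rearrangement,
\begin{equation*}
[\,U''(\xi_x) - 2\beta_2\,]\,D(x) \;=\; 2(\beta_2 - \beta_1)\,e_1(x) + (C_2 - C_1).
\end{equation*}
Since $U'' < 0$ and $\beta_2 > 0$, the coefficient on the left is strictly negative, so $D(x)$ has sign opposite to the affine function $\Psi(x) := 2(\beta_2 - \beta_1)\,e_1(x) + (C_2 - C_1)$. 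By Theorem \ref{main:theorem}(i), each $I_i^*$ is a genuine coinsurance with $0 < (I_i^*)'(x) < 1$ on $(0, \mathcal{M})$, so $e_1$ is strictly increasing, which makes $\Psi$ strictly monotonic when $\beta_1 \neq \beta_2$ (and constant when $\beta_1 = \beta_2$). Consequently $D$ crosses zero at most once.

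The main obstacle is pinning down the \emph{direction} of this crossing, which I would do by proving $\beta_2 < \beta_1$ via contradiction. Assume $\beta_2 \geq \beta_1$. Then $\Psi$ is non-decreasing, so by the sign relation above, either $D$ has constant sign on $[0, \mathcal{M}]$---combined with $\EE[D(X)] = 0$, this forces $D \equiv 0$, hence $\nu_1 = \nu_2$, a contradiction---or there exists $z_0 \in (0, \mathcal{M})$ with $D \geq 0$ on $[0, z_0]$ and $D \leq 0$ on $[z_0, \mathcal{M}]$. In the latter case, set $\bar{e}(x) := e_1(x) + e_2(x)$, which is also strictly increasing. Then $D(x)\bigl(\bar{e}(x) - \bar{e}(z_0)\bigr) \leq 0$ for every $x$, because on $[0, z_0]$ both $D \geq 0$ and $\bar{e} - \bar{e}(z_0) \leq 0$, while on $[z_0, \mathcal{M}]$ both $D \leq 0$ and $\bar{e} - \bar{e}(z_0) \geq 0$. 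Integrating and using $\EE[D(X)] = 0$ yields
\begin{equation*}
\EE[D(X)\bar{e}(X)] \;=\; \EE\bigl[D(X)(\bar{e}(X) - \bar{e}(z_0))\bigr] \;\leq\; 0.
\end{equation*}
On the other hand, by direct expansion and \eqref{mean:variance:2}, $\EE[D(X)\bar{e}(X)] = \EE[e_2(X)^2] - \EE[e_1(X)^2] = \nu_2 - \nu_1 > 0$, a contradiction.

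Hence $\beta_2 < \beta_1$, which makes $\Psi$ strictly decreasing; by the sign relation, $D$ then moves from non-positive values to non-negative values as $x$ grows, which is exactly the statement that $e_{I_2^*}$ up-crosses $e_{I_1^*}$.
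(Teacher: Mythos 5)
Your proof is correct, but it takes a genuinely different route from the paper's. The paper first shows a crossing point $z$ must exist, subtracts the two first-order conditions \emph{at crossing points} to get $C_2-C_1=2(\beta_1^*-\beta_2^*)e_{I_1^*}(z)$, runs a case analysis on whether $C_1=C_2$ and $\beta_1^*=\beta_2^*$ to show the crossing point is unique, and then invokes Ohlin's single-crossing lemma (Lemma \ref{convex:order}) together with $var[e_{I_1^*}(X)]<var[e_{I_2^*}(X)]$ to rule out the wrong crossing direction. You instead extract from the mean value theorem a \emph{pointwise} sign identity, $D(x)$ opposite in sign to the explicitly monotone discriminant $\Psi(x)=2(\beta_2^*-\beta_1^*)e_{I_1^*}(x)+(C_2-C_1)$, which delivers the single-crossing property in one stroke, and you pin down the direction by the elementary second-moment computation $\EE[D(X)(\bar e(X)-\bar e(z_0))]\leqslant 0$ against $\EE[D(X)\bar e(X)]=\nu_2-\nu_1>0$. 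This replaces the appeal to convex order by a two-line Chebyshev-type inequality and, as a bonus, establishes $\beta_2^*<\beta_1^*$ inside the proof of the theorem, a fact the paper only obtains afterwards in Corollary \ref{Coro1:variance}(i). Two cosmetic points are worth tightening: in the constant-sign subcases the conclusion is $D(X)=0$ almost surely (which already forces $\nu_1=\nu_2$) rather than $D\equiv 0$ on all of $[0,\mathcal{M}]$; and at the very end you should note that $\Psi$ must actually change sign on $[0,\mathcal{M}]$ --- otherwise $D$ has constant sign and the same $\EE[D(X)]=0$ argument again yields $\nu_1=\nu_2$ --- so the up-crossing of $e_{I_1^*}$ by $e_{I_2^*}$ is genuine.
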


Under fair insurance pricing, this theorem indicates that, as the variance bound decreases, the insurer is exposed to less risk  for a larger $X$ and to more risk for a smaller  $X$. This result has a rather significant implication in terms of the insurer's tail risk management. A variance constraint by its very definition does not control the tail risk {\it directly}. However, Theorem \ref{CS:variance} suggests that the insurer can reduce the risk exposure for larger losses simply by tightening the variance constraint.\footnote{It follows from Lemma \ref{convex:order} that the insurer with a more relaxed  variance constrant suffers more underwriting risk in the sense of convex order, i.e., $e_{I^*_1}(X)\leqslant_{cx}e_{I^*_2}(X)$.} This further justifies our formulation of the variance contracting model.


\begin{corollary}\label{Coro1:variance}
Under the assumption of Theorem \ref{CS:variance}, for any  $0<\nu_1<\nu_2<var[X]$, we have the following conclusions:
\begin{itemize}
    \item[{\rm (i)}] $\EE[I^*_1(X)]<\EE[I^*_2(X)]$ and $\beta_2^*<\beta_1^*$;
    \item[{\rm (ii)}] If the insured's utility function satisfies $U'''\geqslant 0$, then $I_1^*(x)<I_2^*(x)$ $\forall x> 0.$
\end{itemize}
\end{corollary}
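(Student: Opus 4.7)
The plan is to deduce part (i) from Theorem~\ref{CS:variance} together with uniqueness of solutions to the implicit equation defining $e_{I_i^*}$, and then to obtain part (ii) from part (i) via a monotonicity argument in which prudence enters exactly once.

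For part (i), the weak inequality $m_1^*\leqslant m_2^*$ follows at once from Theorem~\ref{CS:variance}: since $I_i^*\in\mathcal{IC}$ gives $e_{I_i^*}(0)=-m_i^*$, the up-crossing pattern $e_{I_2^*}(x)\leqslant e_{I_1^*}(x)$ for $x$ below the crossing point forces $-m_2^*\leqslant -m_1^*$ at $x=0$. For $\beta_2^*<\beta_1^*$, I would differentiate \eqref{tilde:equation:2} implicitly to obtain $(e_{I_i^*})'(x)=-U''(w_0-x+e_{I_i^*}(x))/(2\beta_i^*-U''(w_0-x+e_{I_i^*}(x)))$, a quantity strictly decreasing in $\beta_i^*$; at any crossing point $z_0>0$, up-crossing forces $(e_{I_2^*})'(z_0)\geqslant(e_{I_1^*})'(z_0)$, yielding $\beta_2^*\leqslant\beta_1^*$. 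If equality held, evaluating \eqref{tilde:equation:2} at $z_0$ (where $e_{I_1^*}=e_{I_2^*}$) would give $C_1=C_2$ with $C_i:=\EE[U'(w_0-X+e_{I_i^*}(X))]$, and since $U'(w_0-x+y)-2\beta y=C$ has a unique solution in $y$ (its left-hand side being strictly decreasing in $y$ as $U''<0$), this would force $e_{I_1^*}\equiv e_{I_2^*}$, contradicting $\nu_1<\nu_2$. A parallel dichotomy yields the strict $m_1^*<m_2^*$: equality would make $x=0$ an additional crossing; either $e_{I_1^*}\leqslant e_{I_2^*}$ everywhere, which combined with mean-zero and $\nu_1\neq\nu_2$ gives a contradiction, or an interior crossing $z_0>0$ persists, in which case subtracting the two instances of \eqref{tilde:equation:2} at $x=0$ and at $z_0$ forces either $\beta_1^*=\beta_2^*$ (the previous contradiction) or $I_1^*(z_0)=0$, which is ruled out by Theorem~\ref{main:theorem}(i).

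For part (ii), define $G_i(x,y):=U'(w_0-x+y-m_i^*)-2\beta_i^* y-U'(w_0-m_i^*)$, so by Theorem~\ref{main:theorem}(i), $I_i^*(x)\in(0,x)$ is the unique zero of $G_i(x,\cdot)$ for every $x\in(0,\mathcal{M}]$. Since $G_2(x,\cdot)$ is strictly decreasing in $y$, the claim $I_1^*(x)<I_2^*(x)$ is equivalent to $G_2(x,I_1^*(x))>0$. The defining equation for $I_1^*$ gives $U'(w_0-x+I_1^*(x)-m_1^*)-U'(w_0-m_1^*)=2\beta_1^*I_1^*(x)$, and together with $\beta_1^*>\beta_2^*$ and $I_1^*(x)>0$, matters reduce to showing $U'(w_0-x+I_1^*(x)-m_2^*)-U'(w_0-m_2^*)\geqslant U'(w_0-x+I_1^*(x)-m_1^*)-U'(w_0-m_1^*)$. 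Setting $\phi(m):=U'(w_0-x+I_1^*(x)-m)-U'(w_0-m)$, I obtain $\phi'(m)=U''(w_0-m)-U''(w_0-x+I_1^*(x)-m)\geqslant 0$, because $I_1^*(x)<x$ makes $w_0-m$ strictly larger than $w_0-x+I_1^*(x)-m$, and $U'''\geqslant 0$ renders $U''$ increasing. Combined with $m_2^*>m_1^*$ from part~(i), this yields $\phi(m_2^*)\geqslant\phi(m_1^*)$, completing the argument.

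The main obstacle will be pinning down the strict inequalities in part~(i), since the up-crossing statement of Theorem~\ref{CS:variance} does not on its own exclude degenerate boundary crossings; I plan to handle this by bootstrapping uniqueness of the implicit equation \eqref{tilde:equation:2} against the distinctness $\nu_1\neq\nu_2$, as sketched above. Once part~(i) is in hand, part~(ii) reduces to the compact $\phi$-monotonicity calculation, which isolates exactly where prudence $U'''\geqslant 0$ is used, namely to sign $\phi'$.
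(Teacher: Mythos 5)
Your proposal is correct. Part (i) follows essentially the paper's route: the mean comparison is read off from the up-crossing at $x=0$, the weak inequality $\beta_2^*\leqslant\beta_1^*$ comes from comparing $e_{I_i^*}'$ at the crossing point via the implicit derivative $e_{I_i^*}'=\bigl(1+2\beta_i^*/(-U'')\bigr)^{-1}$, and strictness is obtained by showing $\beta_1^*=\beta_2^*$ would force $e_{I_1^*}\equiv e_{I_2^*}$ (the paper disposes of this case inside the proof of Theorem \ref{CS:variance} and simply cites it; you re-derive it, and you also supply an explicit argument for strictness of the mean comparison, which the paper gets for free because its proof of Theorem \ref{CS:variance} shows the crossing set is a single interior point, so the inequality at $x=0$ is already strict). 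Part (ii), however, is genuinely different. The paper argues via derivatives: on $[0,z)$ it uses $e_{I_2^*}<e_{I_1^*}$ together with $U'''\geqslant 0$ to get $-U''(w_0-x+e_{I_1^*}(x))\leqslant -U''(w_0-x+e_{I_2^*}(x))$, hence ${I_1^*}'<{I_2^*}'$ there, and integrates from $I_1^*(0)=I_2^*(0)=0$; on $[z,\mathcal{M}]$ it uses the crossing plus $m_1^*<m_2^*$. You instead work pointwise with the algebraic characterization \eqref{eqn:zero-rho} from Theorem \ref{main:theorem}(i), reducing the claim to $G_2(x,I_1^*(x))>0$ and signing it through the monotonicity in $m$ of $\phi(m)=U'(w_0-x+I_1^*(x)-m)-U'(w_0-m)$, where $U'''\geqslant 0$ enters exactly once. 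Your route has the advantage of not needing the location of the crossing point $z$ at all once part (i) is known — it uses only $m_1^*<m_2^*$, $\beta_2^*<\beta_1^*$ and $0<I_1^*(x)<x$ — and it cleanly isolates where prudence is used; the paper's route is shorter given that the crossing structure has already been established.
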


\begin{figure}[htp]\label{Coro44}
	\centering
	\includegraphics[width=4 in]{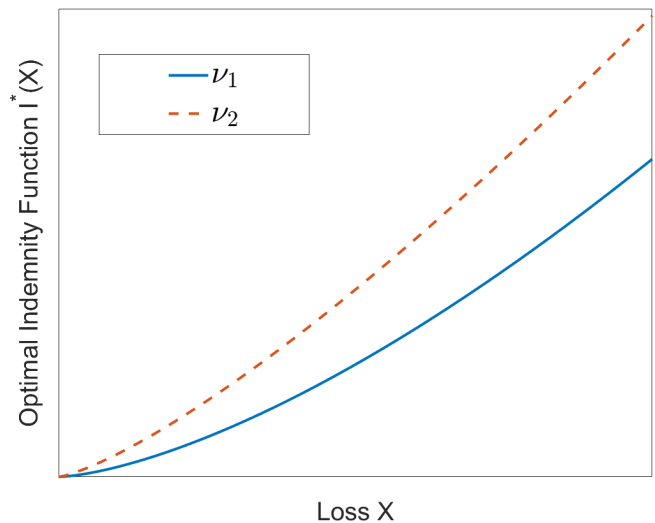}
	\caption{Comparison of two optimal indemnity functions with $\nu_1<\nu_2$.}\label{figure:Coro44}
\end{figure}


Corollary \ref{Coro1:variance}-(i) can be easily interpreted: an insurer with a tighter variance bound offers less expected coverage.
As a complement to Theorem \ref{CS:variance}, Corollary \ref{Coro1:variance}-(ii) establishes  a  direct characterization of the insured's optimal risk transfer  with regard to  the change in the variance  bound: A prudent  insured  consistently cedes more losses  when the  variance bound increases  (see Figure \ref{figure:Coro44}). In other words, if the insurance contract is priced fairly, the insured will transfer as much risk to the insurer as the latter's  risk tolerance allows.


\section{Concluding Remarks}\label{section:5}

In this paper, we have revisited the classical \cite{Arrow1963}'s model by adding a variance limit on the insurer's  risk exposure.
This constraint is motivated by the insurer's desire to manage underwriting risk; at the same time, it poses considerable technical challenges for solving the problem.
We have developed an approach to derive optimal contracts semi-analytically,
  in the form of coinsurance above a deductible when the variance constraint is active. The final policies automatically satisfy the incentive-compatible condition, thereby eliminating potential ex post moral hazard.
We have also conducted a comparative analysis to examine the impact of the insured's wealth and of the variance bound on insurance demand.

This work can be extended in a couple of directions. First, we have restricted the comparative analysis to actuarially fair insurance. Analyzing for  the general unfair case calls for a different approach than the one presented here. Second, a model incorporating probability distortion (weighting) is of significant interest, both theoretically and practically. This is because probability distortion, a phenomenon well documented in psychology and behavioral economics, is related to tail events, about which both insurers and insureds have great concerns.

\bigskip

\bigskip

\noindent{\bf\Large Appendices}

\appendix\label{appendix}
\section{Stochastic Orders}
Since the notion of stochastic orders	plays an important role in this paper, we present, in this appendix, some useful results in this regard.
	
	A random variable $Y$ is said to be greater than a random variable $Z$ in the sense of stop-loss order, denoted as $Z\leqslant_{sl} Y$, if
	$$\EE[(Z-d)_+] \leqslant \EE[(Y-d)_+]\;\;  \forall  d\in\mathbb{R}, $$
	provided that the expectations exist. It follows readily that $Y$ is greater than $Z$ in convex order (i.e., $Z\leqslant_{cx} Y $), if $\EE[Y]=\EE[Z]$ and $Z\leqslant_{sl} Y$.
	
	A useful way to verify the stop-loss order is the well--known Karlin--Novikoff criterion (\citealt{Karlin1963}).
	\begin{lemma}\label{Karlin:Novikoff}
		Suppose $\EE[Z]\leqslant \EE[Y]<\infty$. If $F_Z$ up-crosses $F_Y$, then $Z\leqslant_{sl} Y$.
	\end{lemma}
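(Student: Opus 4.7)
The plan is to rewrite the stop-loss integral as a tail integral of the survival functions and then exploit the single-crossing geometry.

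First I would recall the identity $\EE[(W-d)_+]=\int_d^{\infty}(1-F_W(x))\,\dd x$, valid for any random variable with finite mean, and define
\begin{equation*}
H(d):=\EE[(Y-d)_+]-\EE[(Z-d)_+]=\int_{d}^{\infty}\bigl(F_Z(x)-F_Y(x)\bigr)\,\dd x,\qquad d\in\RR.
\end{equation*}
Proving $Z\leqslant_{sl}Y$ is equivalent to proving $H(d)\geqslant 0$ for every $d\in\RR$. Two limits of $H$ are straightforward to identify: letting $d\to+\infty$ gives $H(d)\to 0$ by dominated convergence, while letting $d\to-\infty$ gives $H(d)\to\EE[Y]-\EE[Z]\geqslant 0$ because $(Y-d)_+=Y-d$ and $(Z-d)_+=Z-d$ for all $d$ below the essential infimum of $\min(Y,Z)$.

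Next I would differentiate under the integral sign (valid at every continuity point of both $F_Y$ and $F_Z$, hence a.e.) to obtain $H'(d)=F_Y(d)-F_Z(d)$. The up-crossing hypothesis provides a threshold $z_0\in\RR$ such that $F_Z(x)\leqslant F_Y(x)$ for $x<z_0$ and $F_Z(x)\geqslant F_Y(x)$ for $x\geqslant z_0$. Consequently $H'(d)\geqslant 0$ on $(-\infty,z_0)$ and $H'(d)\leqslant 0$ on $[z_0,\infty)$, so $H$ is increasing on $(-\infty,z_0)$ and decreasing on $[z_0,\infty)$; in particular $H$ is unimodal with its maximum at $z_0$.

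Now I would combine the two pieces. On $[z_0,\infty)$, $H$ decreases to $0$ at $+\infty$, so $H(d)\geqslant 0$ there. On $(-\infty,z_0)$, $H$ is increasing with $\lim_{d\to-\infty}H(d)=\EE[Y]-\EE[Z]\geqslant 0$, so $H(d)\geqslant\EE[Y]-\EE[Z]\geqslant 0$ there. Hence $H(d)\geqslant 0$ for every $d$, which is exactly $Z\leqslant_{sl}Y$.

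I do not expect a genuine obstacle: the only subtlety is justifying the differentiation of $H$ and ensuring the limits at $\pm\infty$ are correctly evaluated, which follows from the assumptions that both means are finite and that the integrand of $H$ is integrable over any half-line $[d,\infty)$ (a consequence of $\EE[(Y-d)_+]<\infty$ and $\EE[(Z-d)_+]<\infty$). One could alternatively avoid the differentiation step by directly splitting $H(d)=\int_d^{z_0}(F_Z-F_Y)+\int_{z_0}^{\infty}(F_Z-F_Y)$ when $d\leqslant z_0$ and arguing from the sign of each integrand, but the monotonicity-via-derivative argument above is cleaner.
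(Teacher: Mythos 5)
Your proof is correct. Note that the paper does not actually prove this lemma: it is stated as the classical Karlin--Novikoff criterion and simply attributed to \citet{Karlin1963}, so there is no in-paper argument to compare against; your tail-integral/unimodality argument is the standard proof of that classical result, and the alternative you mention (splitting $\int_d^\infty(F_Z-F_Y)$ at the crossing point $z_0$ and using the sign of the integrand on each piece) is the other standard route. Two small points worth tightening. First, your evaluation $\lim_{d\to-\infty}H(d)=\EE[Y]-\EE[Z]$ as written relies on $\min(Y,Z)$ being essentially bounded below; for general integrable $Y,Z$ one writes $H(d)=\EE[Y]-\EE[Z]+\EE[(d-Y)_+]-\EE[(d-Z)_+]$ and lets the last two terms vanish monotonically as $d\to-\infty$ (in this paper all random variables are non-negative and bounded, so the issue is moot). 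Second, passing from ``$H'\geqslant 0$ a.e.\ on $(-\infty,z_0)$'' to ``$H$ is increasing there'' needs the observation that $H$ is locally absolutely continuous, which holds because it is an integral of a locally integrable function over a half-line; your direct-splitting alternative sidesteps even this.
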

	If $Z\leqslant_{sl} Y$, then $\EE[g(Z)]\leqslant \EE[g(Y)]$ holds for all the increasing convex functions $g$, provided that the expectations exist. Based on the Karlin--Novikoff criterion, \citet{GollierSchlesinger1996} obtain the following lemma.

\begin{lemma}\label{lemma:convex:d}
	For any $h \in \mathfrak{C}$, we have $X\wedge d\leqslant_{cx}h(X)$,  where $d\in[0,\mathcal{M}]$ satisfies $\EE[X\wedge d]=\EE[h(X)]$.
\end{lemma}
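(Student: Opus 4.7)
The plan is to invoke the Karlin--Novikoff criterion (Lemma \ref{Karlin:Novikoff}) after verifying that the two distribution functions cross at most once in the right direction. Since the statement already supplies $d$ with $\EE[X\wedge d]=\EE[h(X)]$, the only substantive task is to show $X\wedge d\leqslant_{sl} h(X)$, because this combined with equal means upgrades stop-loss order to convex order by the remark immediately following the definitions.

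First I would sanity-check that such a $d$ actually exists: the map $d\mapsto\EE[X\wedge d]$ is continuous and increases from $0$ to $\EE[X]$ as $d$ runs from $0$ to $\mathcal{M}$, and $h\in\mathfrak{C}$ forces $\EE[h(X)]\in[0,\EE[X]]$, so the intermediate value theorem supplies $d\in[0,\mathcal{M}]$. Next, to apply Karlin--Novikoff I need to locate an up-crossing point of the two c.d.f.s, and the natural candidate is $d$ itself.

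The crucial comparison then splits into two regimes. For $t<d$, using $0\leqslant h(x)\leqslant x$, I have $\{X\leqslant t\}\subseteq\{h(X)\leqslant t\}$, which gives
\[
F_{X\wedge d}(t)=F_X(t)=\PP(X\leqslant t)\leqslant \PP(h(X)\leqslant t)=F_{h(X)}(t).
\]
For $t\geqslant d$, the truncation at $d$ makes $F_{X\wedge d}(t)=1\geqslant F_{h(X)}(t)$ trivially. Hence $F_{X\wedge d}$ up-crosses $F_{h(X)}$ at the point $d$, so Lemma \ref{Karlin:Novikoff} yields $X\wedge d\leqslant_{sl} h(X)$, and together with $\EE[X\wedge d]=\EE[h(X)]$ this is exactly $X\wedge d\leqslant_{cx} h(X)$.

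There is no real obstacle here; the argument is essentially a bookkeeping exercise once one spots that $d$ is the crossing location and that $h(X)\leqslant X$ delivers the stochastic dominance $F_X\leqslant F_{h(X)}$ pointwise. The only place one has to be mindful is if $\EE[h(X)]=\EE[X]$, in which case $d=\mathcal{M}$, $X\wedge d=X$, and the two c.d.f.s need not cross at a finite point; but then $h(X)=X$ almost surely (since $h(X)\leqslant X$ with equal means forces equality), and the conclusion holds trivially.
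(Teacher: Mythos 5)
Your proof is correct and follows exactly the route the paper intends: the paper gives no proof of its own but states that the lemma is obtained from the Karlin--Novikoff criterion (citing Gollier and Schlesinger, 1996), and your single-crossing argument --- $F_{X\wedge d}\leqslant F_{h(X)}$ on $[0,d)$ because $h(X)\leqslant X$, and $F_{X\wedge d}=1$ on $[d,\mathcal{M}]$ --- is precisely that derivation, upgraded to convex order via the equal-means condition.
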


The following result with respect to convex order is from Lemma 3 of \citet{Ohlin1969}.
\begin{lemma}\label{convex:order}
	Let $Y$ be a random variable and $h_i,\; i = 1, 2, $ be two increasing functions with
	$\EE[h_1(Y )] = \EE[h_2(Y )]$. If $h_1$ up-crosses $h_2$, then $h_2(Y ) \leqslant_{cx} h_1(Y )$.
\end{lemma}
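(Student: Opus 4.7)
The plan is to reduce the convex-order claim to an application of the Karlin--Novikoff criterion (Lemma \ref{Karlin:Novikoff}) applied to the scalar random variables $h_1(Y)$ and $h_2(Y)$. Since $\EE[h_1(Y)]=\EE[h_2(Y)]$, establishing $h_2(Y)\leqslant_{cx}h_1(Y)$ is the same as establishing the stop-loss order $h_2(Y)\leqslant_{sl}h_1(Y)$. Karlin--Novikoff delivers exactly this stop-loss order, provided I can verify that $F_{h_2(Y)}$ up-crosses $F_{h_1(Y)}$.

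The main step is therefore to prove the single up-crossing of these two CDFs, which I would do by importing the crossing structure of $h_1,h_2$ back to the threshold variable $s$. Let $y_0$ be the crossing point, so that $h_1\leqslant h_2$ on $(-\infty,y_0)$ and $h_1\geqslant h_2$ on $[y_0,\infty)$. For each threshold $s$, set $a_s=\sup\{y:h_1(y)\leqslant s\}$ and $b_s=\sup\{y:h_2(y)\leqslant s\}$; since $h_1,h_2$ are increasing, $F_{h_i(Y)}(s)=F_Y(a_s)$ and $F_Y(b_s)$ respectively (up to one-sided limits). For $s$ small enough that $a_s,b_s<y_0$, the inequality $h_1\leqslant h_2$ on that region makes ``$h_1(y)\leqslant s$'' weaker than ``$h_2(y)\leqslant s$'', hence $a_s\geqslant b_s$ and $F_{h_2(Y)}(s)\leqslant F_{h_1(Y)}(s)$. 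For $s$ large enough that $a_s,b_s\geqslant y_0$, the reversed inequality gives $a_s\leqslant b_s$ and $F_{h_2(Y)}(s)\geqslant F_{h_1(Y)}(s)$. Hence $F_{h_2(Y)}-F_{h_1(Y)}$ flips from non-positive to non-negative exactly once, which is precisely the up-crossing condition required by Lemma \ref{Karlin:Novikoff}.

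With the CDF up-crossing in hand and the means equal, Lemma \ref{Karlin:Novikoff} yields $h_2(Y)\leqslant_{sl}h_1(Y)$, and combined with $\EE[h_2(Y)]=\EE[h_1(Y)]$ this upgrades to $h_2(Y)\leqslant_{cx}h_1(Y)$. The one step I expect to require care rather than cleverness is the rigorous verification of the single crossing when the $h_i$ are only weakly increasing and the image distributions may have atoms near the crossing point: in that case one has to work with one-sided limits of $F_{h_i(Y)}$ and handle the intermediate range $s\in[h_1(y_0^-)\wedge h_2(y_0^-),\,h_1(y_0)\vee h_2(y_0)]$ explicitly, but the monotonicity argument above still goes through unchanged once those edge cases are unpacked.
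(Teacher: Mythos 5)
Your argument is correct. Note, however, that the paper does not actually prove this lemma: it imports it verbatim as Lemma 3 of Ohlin (1969), so there is no in-paper proof to compare against. Your route --- transfer the single crossing of $h_1,h_2$ to a single crossing of the distribution functions of $h_1(Y),h_2(Y)$, then invoke the Karlin--Novikoff criterion (Lemma \ref{Karlin:Novikoff}) together with the equality of means to upgrade stop-loss order to convex order --- is precisely the classical proof of Ohlin's lemma, so you have in effect supplied the omitted argument rather than found a new one. One simplification: the ``intermediate range'' of thresholds you flag at the end can be avoided entirely by taking the crossing point of the CDFs to be $s_0:=\sup_{y<y_0}h_2(y)$. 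For $s<s_0$ the set $\{y:h_2(y)\leqslant s\}$ lies strictly to the left of $y_0$, where $h_1\leqslant h_2$, so $\{h_2\leqslant s\}\subseteq\{h_1\leqslant s\}$ and $F_{h_2(Y)}(s)\leqslant F_{h_1(Y)}(s)$; for $s\geqslant s_0$, any $y$ with $h_1(y)\leqslant s$ satisfies $h_2(y)\leqslant s$ whether $y<y_0$ (since then $h_2(y)\leqslant s_0\leqslant s$) or $y\geqslant y_0$ (since then $h_2(y)\leqslant h_1(y)$), giving the reverse inclusion. This pure set-inclusion version needs no one-sided limits of the CDFs and handles atoms and weakly increasing $h_i$ automatically, so the edge-case caveat in your last paragraph is unnecessary.
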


\section{Other Useful Lemmas}
This appendix presents some other technical results that are useful in connection with this paper.
%
%
%

It is easy to verify that any sequence of  indemnity  functions in $\mathcal{IC}$ is uniformly bounded and equicontinuous over $[0,\mathcal{M}]$. Hence, the Arz\'{e}la-Ascoli theorem implies
%
\begin{lemma}\label{compact}
	The set $\mathcal{IC}$ is compact under the norm $d(I_1, I_2)=\max_{t\in [0,\mathcal{M}]} \mid I_1(t)-I_2(t) \mid$, $I_1, I_2\in \mathcal{IC}$.
\end{lemma}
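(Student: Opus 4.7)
The plan is to invoke the Arzelà–Ascoli theorem, as the authors' preamble suggests. Since $[0,\mathcal{M}]$ is a compact metric space and $\mathcal{IC} \subset C([0,\mathcal{M}])$, it suffices to verify three things: (i) $\mathcal{IC}$ is uniformly bounded, (ii) $\mathcal{IC}$ is equicontinuous, and (iii) $\mathcal{IC}$ is closed under the sup norm $d(\cdot,\cdot)$. Combining (i) and (ii) with Arzelà–Ascoli gives relative compactness, and (iii) upgrades this to compactness.

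For uniform boundedness, any $I \in \mathcal{IC}$ satisfies $I(0)=0$ and $0 \leqslant I'(x) \leqslant 1$, so by the fundamental theorem of calculus (or the definition of $\mathcal{IC}$ read as ``$I$ is absolutely continuous with derivative in $[0,1]$ a.e.''), we obtain $0 \leqslant I(x) \leqslant x \leqslant \mathcal{M}$ for every $x \in [0,\mathcal{M}]$. For equicontinuity, the same derivative bound yields the 1-Lipschitz estimate
\[
|I(x) - I(y)| = \Bigl|\int_y^x I'(t)\,\dd t\Bigr| \leqslant |x - y| \qquad \forall x, y \in [0,\mathcal{M}],
\]
uniformly in $I \in \mathcal{IC}$, which is stronger than equicontinuity.

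For closedness, let $\{I_n\} \subset \mathcal{IC}$ with $I_n \to I$ uniformly on $[0,\mathcal{M}]$. Then $I(0) = \lim_n I_n(0) = 0$. Moreover, the uniform limit of 1-Lipschitz functions is 1-Lipschitz, and the uniform limit of non-decreasing functions is non-decreasing; hence $I$ is 1-Lipschitz and non-decreasing, so $I$ is absolutely continuous and $0 \leqslant I'(x) \leqslant 1$ at every point of differentiability. In view of the footnote in the main text, which notes that the values of $I'$ on a Lebesgue-null set are immaterial, this places $I \in \mathcal{IC}$.

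The argument is essentially a textbook application of Arzelà–Ascoli, and there is no genuine obstacle; the only point requiring a moment of care is the last step, where one must interpret the pointwise derivative condition ``$0 \leqslant I'(x) \leqslant 1$ for all $x$'' in the defining set $\mathcal{IC}$ in its natural almost-everywhere sense so that the uniform closure of Lipschitz functions with derivative in $[0,1]$ remains inside $\mathcal{IC}$.
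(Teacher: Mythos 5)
Your proof is correct and follows the same route as the paper, which simply notes that $\mathcal{IC}$ is uniformly bounded and equicontinuous and invokes the Arzel\`{a}--Ascoli theorem. Your additional verification that $\mathcal{IC}$ is closed under the sup norm (needed to upgrade relative compactness to compactness) is a point the paper leaves implicit, and you handle it correctly.
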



For the following lemma, one can refer to \citet{Komiya1988} for a proof.
\begin{lemma}  \label{sion}
	(Sion's Minimax Theorem) \ Let Y be a compact convex subset of a linear topological space and Z a convex subset of a linear topological space. If $\Gamma$ is a real-valued function on $Y\times Z$ such that $\Gamma(y,\cdot)$ is continuous and concave on $Z$ for any $y\in Y$ and $\Gamma(\cdot,z)$ is continuous and convex on $Y$ for any $ z\in Z$, then $$\min\limits_{{y\in Y}}  \  \max\limits_{{z\in Z}} \Gamma(y,z)=\max\limits_{{z\in Z}} \ \min\limits_{{y\in Y}} \Gamma(y,z).$$
\end{lemma}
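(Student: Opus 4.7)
My plan is to follow Komiya's elementary approach, which avoids the Knaster--Kuratowski--Mazurkiewicz lemma used in Sion's original proof. The inequality $\max_{z\in Z}\min_{y\in Y}\Gamma(y,z)\leqslant \min_{y\in Y}\max_{z\in Z}\Gamma(y,z)$ is standard: for any fixed $z_0\in Z$ and any $y\in Y$, $\Gamma(y,z_0)\leqslant \max_{z\in Z}\Gamma(y,z)$, so taking $\min_y$ on both sides and then $\max_{z_0}$ gives the weak direction. Note also that the two extrema are attained thanks to compactness of $Y$ and continuity of $\Gamma(\cdot,z)$ on one side, and because $\Gamma(y,\cdot)$ is continuous concave on $Z$ on the other (for the outer $\max$, after reducing to a finite-dimensional situation, or by taking a supremum and showing it is attained along the way).

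To establish the reverse direction, I would argue by contradiction: suppose there exists a real $c$ such that $\max_{z\in Z}\min_{y\in Y}\Gamma(y,z)<c<\min_{y\in Y}\max_{z\in Z}\Gamma(y,z)$. By the right inequality, for each $y\in Y$ there is $z\in Z$ with $\Gamma(y,z)>c$, so the open sets $U_z:=\{y\in Y:\Gamma(y,z)>c\}$ (open by continuity of $\Gamma(\cdot,z)$) cover $Y$. Compactness of $Y$ yields a finite subcover $Y=U_{z_1}\cup\cdots\cup U_{z_n}$. The goal is then to produce a single $\bar{z}$ in the convex hull of $\{z_1,\dots,z_n\}$ with $\Gamma(y,\bar{z})>c$ for all $y\in Y$; such a $\bar z$ would satisfy $\min_y\Gamma(y,\bar{z})\geqslant c$, contradicting the left inequality.

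The main obstacle is the two-point case, which I would prove as a lemma and then iterate. Given $z_1,z_2\in Z$ with $Y=U_{z_1}\cup U_{z_2}$, I claim there exists $\bar z$ on the segment $[z_1,z_2]$ with $U_{\bar z}=Y$. Consider $\bar{z}_t:=(1-t)z_1+tz_2$ for $t\in[0,1]$ and the sets $A_t:=\{y\in Y:\Gamma(y,\bar{z}_t)\leqslant c\}$, which are closed (by continuity of $\Gamma(\cdot,\bar{z}_t)$) and convex (by convexity of $\Gamma(\cdot,\bar{z}_t)$). I would show that $A_{t_1}\cap A_{t_2}=\varnothing$ whenever $t_1<t_2$: if there were a common point $y^*$, then by concavity of $\Gamma(y^*,\cdot)$ we would have $\Gamma(y^*,\bar{z}_t)\leqslant c$ on the entire interval $[t_1,t_2]$, hence on the full segment by a standard extension argument; but by the cover property applied to $\bar{z}_0=z_1$ and $\bar{z}_1=z_2$ we have $y^*\in U_{z_1}\cup U_{z_2}$, a contradiction. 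This separation, combined with the convexity and closedness of the $A_t$'s together with compactness of $Y$, forces $A_{t_0}=\varnothing$ for some $t_0\in[0,1]$, yielding the desired $\bar z=\bar{z}_{t_0}$. (The precise sliding argument is the delicate part and requires invoking a finite intersection property: were every $A_t$ nonempty, the family $\{A_t\}$ would have empty intersection yet the pairwise disjointness combined with convexity/connectedness of $Y$ leads to contradiction.)

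With the two-point lemma in hand, I would complete the proof by induction on $n$, the size of the finite subcover. The inductive step replaces two cover points $z_{n-1},z_n$ by a single $\bar z$ produced by the two-point lemma applied on $U_{z_{n-1}}\cup U_{z_n}$ (which is open and its complement $Y\setminus(U_{z_{n-1}}\cup U_{z_n})\subseteq U_{z_1}\cup\cdots\cup U_{z_{n-2}}$, so the argument can be localized to the relevant piece of $Y$), reducing the cover size by one until a single $\bar z$ covers all of $Y$, producing the contradiction and completing the proof.
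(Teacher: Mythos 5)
Your overall strategy --- weak inequality, reduction by compactness to a finite subcover, a two-point lemma, then induction on the number of cover points --- is exactly the skeleton of Komiya's elementary proof, which is all the paper itself offers (it proves nothing and simply refers to \citet{Komiya1988}). However, your proof of the two-point lemma, which is the heart of the matter, contains a genuine error. The claim that $A_{t_1}\cap A_{t_2}=\varnothing$ for every $t_1<t_2$ is false: take $Y=\{y^*\}$ a singleton, $\Gamma(y^*,\bar z_t)=1-t$ and $c=1/2$; the cover property holds since $\Gamma(y^*,z_1)=1>c$, yet $y^*\in A_t$ for every $t\geqslant 1/2$, so the sets $A_t$ are very far from pairwise disjoint. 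The reasoning offered for the claim also does not hold together. Concavity of $g(t):=\Gamma(y^*,\bar z_t)$ with $g(t_1)\leqslant c$ and $g(t_2)\leqslant c$ does \emph{not} by itself give $g\leqslant c$ on $[t_1,t_2]$ (concavity bounds $g$ from \emph{below} by the chord there); that intermediate conclusion does follow once you add the cover property, because $\{t:g(t)>c\}$ is an interval containing $0$ or $1$, but then the ``standard extension argument'' to the full segment is precisely what cannot happen: the same fact shows $g>c$ near $t=0$ or near $t=1$. Finally, even if pairwise disjointness were granted, the concluding ``sliding/finite intersection property'' step is not a proof: the finite intersection property runs in the opposite direction (nonempty finite intersections produce a common point), and a compact convex set can contain uncountably many pairwise disjoint nonempty closed convex subsets (vertical slices of a square), so nothing forces some $A_{t_0}$ to be empty.

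What the correct argument (Komiya's Lemma 1) does, and what your single-threshold argument is missing, is a second level: pick $\beta$ with $c<\beta<\min_{y\in Y}\max\{\Gamma(y,z_1),\Gamma(y,z_2)\}$ and set $B_i:=\{y:\Gamma(y,z_i)\leqslant\beta\}$, $i=1,2$, which are \emph{disjoint} closed convex sets. For $\bar z_t$ on the segment, concavity of $\Gamma(y,\cdot)$ gives $\Gamma(y,\bar z_t)\geqslant\min\{\Gamma(y,z_1),\Gamma(y,z_2)\}$, so every sublevel set $A_t$ is contained in $B_1\cup B_2$ and, being convex hence connected, lies entirely in $B_1$ or entirely in $B_2$. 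Assuming for contradiction that every $A_t$ is nonempty, the parameter interval $[0,1]$ splits into the two disjoint nonempty sets $I=\{t:A_t\subseteq B_1\}$ and $J=\{t:A_t\subseteq B_2\}$; concavity in $z$ shows each is an interval anchored at its endpoint, and continuity of $\Gamma(y,\cdot)$ (upper semicontinuity suffices) yields a contradiction at the common boundary point, violating connectedness of $[0,1]$. It is this two-level dichotomy, not pairwise disjointness, that drives the proof. Your induction step has a related, smaller flaw: the two-point lemma applies to a compact \emph{convex} set, while $Y\setminus(U_{z_1}\cup\cdots\cup U_{z_{n-2}})$ need not be convex; the fix, as in Komiya, is to localize to the convex set $\{y:\Gamma(y,z_i)\leqslant c,\ i=1,\dots,n-2\}$, which is convex by convexity of $\Gamma(\cdot,z)$ in $y$ and is still covered by $U_{z_{n-1}}\cup U_{z_n}$.
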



The following lemmas are needed in the comparative analysis.
\begin{lemma}\label{lemma:upcross:1}
	If a non-negative increasing function $h_1$ up-crosses a non-negative increasing function $h_2$ with $\EE[(h_1(X))^2]=\EE[(h_2(X))^2]$, then either $h_1(X)$ and $h_2(X)$ have the same distribution or $\EE[h_1(X)]<\EE[h_2(X)]$.
\end{lemma}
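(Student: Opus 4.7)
The plan is to argue by contradiction: suppose that $h_1(X)$ and $h_2(X)$ do not share the same distribution while $\EE[h_1(X)] \geqslant \EE[h_2(X)]$, and split according to whether this inequality is strict or an equality.

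For the strict case $\EE[h_1(X)] > \EE[h_2(X)]$, I would invoke the algebraic identity $\EE[h_1(X)^2] - \EE[h_2(X)^2] = \EE[(h_1(X)+h_2(X))(h_1(X)-h_2(X))]$. Let $z_0$ denote the up-crossing point. Because $h_1+h_2$ is increasing on $[0,\mathcal{M}]$ while $g:=h_1-h_2$ switches sign from non-positive to non-negative at $z_0$, the pointwise bound
\[
(h_1(x)+h_2(x))\,g(x) \;\geqslant\; (h_1(z_0)+h_2(z_0))\,g(x)
\]
holds everywhere on $[0,\mathcal{M}]$, so taking expectations gives $0 = \EE[h_1(X)^2-h_2(X)^2] \geqslant (h_1(z_0)+h_2(z_0))(\EE[h_1(X)]-\EE[h_2(X)])$. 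When $h_1(z_0)+h_2(z_0)>0$ this is already a contradiction. The degenerate subcase $h_1(z_0)=h_2(z_0)=0$ has to be treated separately: non-negativity and monotonicity then force $h_1\equiv h_2\equiv 0$ on $[0,z_0]$, so $g\geqslant 0$ everywhere and is supported on $(z_0,\mathcal{M}]$; since $\EE[g(X)]>0$ there is a set of positive probability on which $g(X)>0$ and hence $h_1(X)+h_2(X)>0$, which upgrades the estimate to $\EE[(h_1+h_2)g]>0$, again a contradiction.

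For the equality case $\EE[h_1(X)] = \EE[h_2(X)]$, Ohlin's lemma (Lemma~\ref{convex:order}) yields $h_2(X) \leqslant_{cx} h_1(X)$, so $\EE[(h_2(X)-c)_+] \leqslant \EE[(h_1(X)-c)_+]$ for every $c\in\RR$. Because the $h_i(X)$ are non-negative, Fubini gives $\int_0^\infty \EE[(h_i(X)-c)_+]\,dc = \tfrac{1}{2}\EE[h_i(X)^2]$, so the assumed equality of second moments forces the non-negative integrand $\EE[(h_1(X)-c)_+]-\EE[(h_2(X)-c)_+]$ to vanish for Lebesgue-a.e.\ $c\geqslant 0$, and by continuity of the stop-loss transforms for every $c\geqslant 0$; coupled with the trivial equality for $c<0$ (from equal means and non-negativity), this delivers $F_{h_1(X)} \equiv F_{h_2(X)}$, contradicting the standing assumption. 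The main obstacle in the whole argument is the degenerate subcase $h_1(z_0)+h_2(z_0)=0$ in the strict case, where the pointwise comparison above becomes vacuous and one has to exploit non-negativity and monotonicity to recover a strict inequality directly; the remainder is a clean combination of Ohlin's convex-order inequality with the $(a+b)(a-b)$ identity.
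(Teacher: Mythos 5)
Your proof is correct. It differs from the paper's in structure: the paper treats the entire case $\EE[h_2(X)]\leqslant \EE[h_1(X)]$ in one stroke by applying the Karlin--Novikoff criterion (Lemma~\ref{Karlin:Novikoff}) to get $h_2(X)\leqslant_{sl}h_1(X)$, and then uses the identity $\EE[(h_i(X))^2]=2\int_0^{\infty}\EE[(h_i(X)-t)_+]\,\dd t$ to force the two stop-loss transforms to coincide, hence equality in distribution -- which immediately yields the stated disjunction. Your equal-means branch is essentially this same argument (Ohlin's lemma in place of Karlin--Novikoff, plus the identical second-moment/stop-loss identity), but your strict-inequality branch replaces the stochastic-order machinery with an elementary pointwise estimate built on $\EE[h_1^2-h_2^2]=\EE[(h_1+h_2)(h_1-h_2)]$ and the single-crossing point $z_0$. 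What your route buys is self-containedness in that branch: you avoid having to verify that the c.d.f.s of $h_1(X)$ and $h_2(X)$ single-cross (a step the paper leaves implicit when invoking Karlin--Novikoff), at the cost of a case split and the extra degenerate subcase $h_1(z_0)+h_2(z_0)=0$, which you handle correctly. The paper's version is shorter and uniform across both cases; yours is more explicit about where each hypothesis (non-negativity, monotonicity, the crossing) is actually used.
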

\begin{proof}
	If $\EE[h_2(X)]\leqslant \EE[h_1(X)]$,  then Lemma \ref{Karlin:Novikoff} implies
	$h_2(X)\leqslant_{sl}h_1(X)$.
	Moreover, we have
	$$
	\EE[(h_i(X))^2]=2 \int_0^{\infty}\EE[(h_i(X)-t)_+]\dd t.
	$$
	Since it is assumed that $\EE[(h_1(X))^2]=\EE[(h_2(X))^2]$, we must have $\EE[(h_1(X)-t)_+]=\EE[(h_2(X)-t)_+]$ for any $t\geqslant 0$. It then follows from the equation  $\EE[(h_i(X)-t)_+]=\int_t^{\infty}(1-F_{h_i(X)}(y))\dd y$ that $h_1(X)$ and $h_2(X)$ have the same distribution.
\end{proof}

\begin{lemma}\label{distribution:same}
Under the assumption of Theorem \ref{CS:initialwealth}, it is impossible that $e_{I_1^*}(X)$ and $e_{I_2^*}(X)$ have the same distribution,  and it is impossible either $e_{I_1^*}$ up-crosses $e_{I_2^*}$ or $e_{I_2^*}$ up-crosses $e_{I_1^*}$, where $e_{I_i^*}$ is given in \eqref{eqn:e-I}.
\end{lemma}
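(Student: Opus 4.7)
The plan is to establish the three impossibilities in two waves: first, rule out $e_{I_1^*}(X)\stackrel{d}{=}e_{I_2^*}(X)$ by exploiting strict DAP of $U$ in the first-order condition~(\ref{tilde:equation}); second, reduce each of the two up-crossing scenarios to the first wave by combining Ohlin's Lemma (Lemma~\ref{convex:order}) with the equality of second moments recorded in (\ref{mean:variance}).

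\textbf{Wave 1 (same distribution).} Suppose for contradiction that $e_{I_1^*}(X)\stackrel{d}{=}e_{I_2^*}(X)$. By Theorem~\ref{main:theorem}(i) each $I_i^*$ is continuous and strictly increasing on $[0,\mathcal{M}]$ with $I_i^*(0)=0$, hence so is $e_{I_i^*}$. Combined with Assumption~\ref{Assumption}-(i), a standard quantile argument forces $e_{I_1^*}\equiv e_{I_2^*}$ pointwise; call this common function $e$, and note that $m_1^*=-e(0)=m_2^*$. Substituting $e$ into (\ref{tilde:equation}) and differentiating in $x$ yields, for each $i\in\{1,2\}$,
\begin{equation*}
U''\!\bigl(w_i-x+e(x)\bigr)\bigl(e'(x)-1\bigr)=2\beta_i^*\,e'(x),
\end{equation*}
so that $e'(x)\in(0,1)$ and
\begin{equation*}
\frac{U''\!\bigl(w_1-x+e(x)\bigr)}{U''\!\bigl(w_2-x+e(x)\bigr)}=\frac{\beta_1^*}{\beta_2^*}\quad\text{for all }x\in[0,\mathcal{M}].
\end{equation*}
Taking the logarithmic derivative of this identity in $x$ and cancelling the nonzero factor $e'(x)-1$ gives $\mathcal{P}\bigl(w_1-x+e(x)\bigr)=\mathcal{P}\bigl(w_2-x+e(x)\bigr)$ for every $x$, which contradicts the strict monotonicity of $\mathcal{P}$ together with $w_1<w_2$.

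\textbf{Wave 2 (up-crossings).} Assume first that $e_{I_1^*}$ up-crosses $e_{I_2^*}$. Since $\EE[e_{I_1^*}(X)]=\EE[e_{I_2^*}(X)]=0$ by (\ref{mean:variance}), Ohlin's Lemma yields $e_{I_2^*}(X)\leqslant_{cx}e_{I_1^*}(X)$. Applying convex order to the convex maps $y\mapsto(y-t)_+^2$ and $y\mapsto(t-y)_+^2$, combined with $(y-t)^2=(y-t)_+^2+(t-y)_+^2$ and $\EE[(e_{I_i^*}(X)-t)^2]=\nu+t^2$ (identical for $i=1,2$ since means vanish and variances coincide), forces both convex-order inequalities to be equalities. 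Hence $\EE[(e_{I_1^*}(X)-t)_+^2]=\EE[(e_{I_2^*}(X)-t)_+^2]$ for every $t\in\RR$. Since $\EE[(Y-t)_+^2]=2\int_t^{\infty}(s-t)\PP(Y>s)\,\dd s$, differentiating in $t$ twice and invoking right-continuity of survival functions yields $\PP(e_{I_1^*}(X)>s)=\PP(e_{I_2^*}(X)>s)$ for every $s$, so $e_{I_1^*}(X)\stackrel{d}{=}e_{I_2^*}(X)$, contradicting Wave~1. Swapping the roles of the indices rules out the reverse up-crossing.

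\textbf{Main obstacle.} The delicate step is Wave~1: once $e_{I_1^*}\equiv e_{I_2^*}$, the two first-order conditions (\ref{tilde:equation}) share a common indemnity shape while sitting at different wealth levels, so the obstruction must come from how $U$ responds to a uniform wealth shift. The identity $U''(w_1-x+e(x))/U''(w_2-x+e(x))\equiv\beta_1^*/\beta_2^*$ encodes this rigidity, and it is precisely strict DAP that turns constancy in $x$ into a contradiction; without DAP the rigidity could be satisfied (e.g., for CARA utility the ratio is identically one). Wave~2 is a comparatively routine consequence of Ohlin's Lemma together with the elementary fact that equality in convex order plus matching second moments upgrades to equality in distribution.
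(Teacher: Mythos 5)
Your proof is correct, and the first part coincides with the paper's argument: the paper packages the same computation into the function $\phi(z)=\tfrac{-U''(w_1-z)}{-U''(w_2-z)}$, shows $\phi'(z)=(\mathcal{P}(w_1-z)-\mathcal{P}(w_2-z))\phi(z)>0$ under strict DAP, and contradicts the constancy of $\phi(x-e(x))\equiv\beta_1^*/\beta_2^*$ along the strictly increasing retention $x-e(x)$; your logarithmic differentiation of the identity $-U''(w_1-x+e(x))/(-U''(w_2-x+e(x)))\equiv\beta_1^*/\beta_2^*$ is the same calculation in different clothing. Where you genuinely diverge is the up-crossing step. The paper shifts the exposures by $\widetilde{m}=\max\{\EE[I_1^*(X)],\EE[I_2^*(X)]\}$ to make them non-negative and then invokes its auxiliary Lemma \ref{lemma:upcross:1}, whose proof runs through the Karlin--Novikoff criterion (stop-loss order) and the representation $\EE[h^2]=2\int_0^\infty\EE[(h-t)_+]\,\dd t$. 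You instead apply Ohlin's Lemma \ref{convex:order} directly to the signed exposures to get $e_{I_2^*}(X)\leqslant_{cx}e_{I_1^*}(X)$ and then squeeze with the decomposition $(y-t)^2=(y-t)_+^2+(t-y)_+^2$ and the common value $\nu+t^2$; this avoids the shift and the auxiliary lemma, at the cost of implicitly using the (standard, but not stated in the paper, whose footnote defines convex order only via stop-loss transforms) equivalence between convex order and domination over all convex test functions. Both executions rest on the same underlying fact -- up-crossing plus equal means and equal second moments forces equality in distribution -- so the gain is mainly self-containedness; just make sure to record that $e$ is differentiable (it is Lipschitz with $e'\in(0,1)$ from the first-order condition, as the paper notes) before taking the logarithmic derivative in Wave 1.
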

 \begin{proof}
	 First of all, we show that $e_{I_1^*}(X)$ and $e_{I_2^*}(X)$ cannot have the same distribution. Define
	\begin{align}\label{phi}
		\phi(z):=\frac{-U''(w_1-z)}{-U''(w_2-z)},\,  \ 0\leqslant z<w_1<w_2.
	\end{align}
	A direct calculation based on the assumption of strictly DAP  shows that
	$$
	\phi'(z)=(\mathcal{P}(w_1-z)-\mathcal{P}(w_2-z))\phi(z)>0,
	$$
	where $\mathcal{P}$ is defined in \eqref{def:DAP}. As a result, $\phi$ is a strictly increasing function.
	
	We now prove the result by contradiction. Assume that $e_{I_1^*}(X)$ and $e_{I_2^*}(X)$ are equal in distribution. Noting that $e_{I_i^*}$ is increasing and Lipschitz-continuous and that $F_X(x)$ is strictly increasing, we have
	$$
	e_{I_1^*}(x)=e_{I_2^*}(x),\,\forall x\in[0,\mathcal{M}],
	$$
	which in turn implies $e_{I_1^*}'(x)= e_{I_2^*}'(x).$ It follows from \eqref{tilde:equation} that
	$e_{I_i^*}'(x) =\frac{1}{1+2\beta_i^*/(-U''(w_i-x+e_{I_i^*}(x))}$ for all $x\geqslant 0$; hence,
		$$
	\frac{2\beta_1^*}{-U''(w_1-x+e_{I_1^*}(x))}=\frac{2\beta_2^*}{-U''(w_2-x+e_{I_2^*}(x))}.
	$$
	Because $e_{I_1^*}(x)=e_{I_2^*}(x)$ for all $x\in[0,\mathcal{M}]$, we obtain
	$$
	\phi(x-e_{I_1^*}(x))=\frac{-U''(w_1-x+e_{I_1^*}(x))}{-U''(w_2-x+e_{I_1^*}(x))}=\frac{\beta_1^*}{\beta_2^*},\,\forall x\in [0,\mathcal{M}],
	$$
	which contradicts the fact that $x-e_{I_1^*}(x)\equiv x-I_1^*(x)+\EE[I_1^*(X)]$ is strictly increasing in $x\in [0,\mathcal{M}]$ and that $\phi$ is a strictly increasing function.

	Next we show that it is impossible that $e_{I_1^*}$ up-crosses $e_{I_2^*}$. Again we prove the result by contradiction. Since $e_{I_1^*}$ is not always non-negative, we introduce the  increasing function $$\tilde{f}_i^*(x):=e_{I_i^*}(x)+\widetilde{m},\;\;i=1,2,$$ where $\widetilde{m}:=\max\left\{\EE[I_1^*(X)],\EE[I_2^*(X)]\right\}$. It then follows that $\tilde{f}_1^*$ up-crosses $\tilde{f}_2^*$, and
	$$\tilde{f}_i^*(x)\geqslant 0,\quad \EE[\tilde{f}_i^*(X)]=\widetilde{m},\quad \EE[(\tilde{f}_i^*(X))^2]=\EE[(e_{I_i^*}(X))^2]+\widetilde{m}^2=\nu+\widetilde{m}^2,\;\;i=1,2,$$ where the second equality follows from the fact that  $\EE[e_{I_i^*}(X)]=0$. Lemma \ref{lemma:upcross:1} yields that $\tilde{f}_1^*(X)$ and $\tilde{f}_2^*(X)$ have the same distribution, and hence so do $e_{I_1^*}(X)$ and $e_{I_2^*}(X)$. As shown above, $e_{I_1^*}(X)$ and $e_{I_2^*}(X)$ cannot have the same distribution, and therefore $e_{I_1^*}$ cannot up-cross $e_{I_2^*}$. A similar analysis shows that it is impossible that $e_{I_2^*}$ up-crosses $e_{I_1^*}$. The proof is thus complete.
\end{proof}

\section{Proofs}\label{appendix:B}

\noindent
\textbf{Proof of Lemma \ref{Lemma:m-L}:} \\

For any $I\in \mathfrak{C}$ with $\EE[I(X)]\leqslant m_L$, it follows from Lemma \ref{lemma:convex:d} that $$X\wedge d_I\leqslant_{cx}R_I(X),$$ where $d_I\geqslant d_L$ is determined by $\EE[(X-d_I)_+]=\EE[I(X)]$  (or, equivalently, $\EE[X\wedge d_I]=\EE[R_I(X)]$). Thus, we have $\EE[U(W_I(X))]\leqslant \EE[U(W_{(x-d_I)_+}(X))]$. Furthermore, according to the proof of Theorem 4.2 in \citet{Chi2019}, $\EE[U(W_{(x-d)_+}(X))]$ is a decreasing function of $d$ over $[d^*, \mathcal{M})$, where $d^*$ is defined in \eqref{eqn:d-star}. Recalling that $d_L\geqslant d^*$, we conclude that $I$ is no better than $(x-d_L)_+$.\\

\noindent
\textbf{Proof of Lemma \ref{lemma:m_U}:} \\

We prove by contradiction. Assume $\EE[I(X)]> m_U$ for some indemnity  function $I\in\mathfrak{C}$ satisfying $var[I(X)]\leqslant \nu$. Lemma \ref{lemma:convex:d} implies that there  exists $K>K_U$ such that $$X\wedge K\leqslant_{cx}I(X).$$ Noting that $var[X\wedge K]$ is strictly increasing and continuous in $K\in [K_U,\mathcal{M}]$, we obtain $$\nu=var[X\wedge K_U]<var[X\wedge K]\leqslant var[I(X)]\leqslant \nu,$$ leading to a contradiction.
	
For any $I\in \mathfrak{C}$ satisfying $var[I(X)]\leqslant \nu$ and $\EE[I(X)]=m_U$, it follows from Lemma \ref{lemma:convex:d} that
$$
X\wedge K_U\leqslant_{cx}I(X),
$$
which in turn implies $$\nu= var[X\wedge K_U]\leqslant var[I(X)]\leqslant \nu.$$ Because  
$$
var[Z]=2\int_{-\infty}^{\infty}\big\{\EE[(Z-t)_+]-(\EE[Z]-t)_+\big\}\dd t
$$
for any random variable $Z$ with a finite second moment, we deduce  $\EE[(X\wedge K_U-t)_+]=\EE[(I(X)-t)_+]$ almost everywhere in $t$. Therefore, $X\wedge K_U$ and $I(X)$ are equally distributed, which implies  $\mathbb{P}(I(X)\leqslant K_U)=1$. Moreover, it follows from $I\in \mathfrak{C}$ that $\mathbb{P}(I(X)\leqslant X\wedge K_U)=1$. Since $\EE[I(X)]=\EE[X\wedge K_U]$, we obtain that $I(X)=X\wedge K_U$ almost surely. The proof is thus complete. \\
%


\noindent
\textbf{Proof of Proposition~\ref{prop:two-point}:} \\

Because $m_L=m_U$, it follows from Lemma \ref{lemma:m_U} that $X\wedge K_U=(X-d_L)_+ $ almost surely. Thus, it follows from the fact that $d_L>0$ that $X$ must follow a Bernoulli distribution with values $0$ and $d_L+K_U$. Furthermore, Lemmas~\ref{Lemma:m-L} and~\ref{lemma:m_U} imply that any admissible insurance policy is no better than $(x-d_L)_+$. Therefore, the optimal indemnity must be $K_U$ at point $d_L+K_U$.\\


%

\noindent
\textbf{Proof of Proposition \ref{Pro:optimalS}:} \\

Introducing two Lagrangian multipliers $\lambda\in\RR$ and $\beta\geqslant 0$, we consider the following maximization problem:
\begin{equation}\label{prob:lagr:2}
\max_{I\in\mathfrak{C}} U_{\lambda,\beta}(I):=\EE\Big[ U(w_0-X+I(X)-(1+\rho)m)-\lambda (\EE[I(X)]-m)-\beta (\EE[I^2(X)]- \nu-m^2)\Big].
\end{equation}
Fix $x\geqslant 0$. The above objective function motivates the introduction of the following function:
$$
\Psi(y):=U(w_0-x+y-(1+\rho)m)-\lambda y-\beta y^2,\,0\leqslant y\leqslant x.
$$
The assumption of $U''<0$ implies that $\Psi$ is strictly concave  with  $$\Psi'(y)=U'(w_0-x+y-(1+\rho)m)-\lambda -2\beta y.$$
As a consequence, for each $x\geqslant 0$, $I_{\lambda,\beta}(x)$ defined in \eqref{relaxed:solution} is an optimal solution to $$\max_{0\leqslant y\leqslant x}\Psi(y).$$ This result, together with the fact that $I_{\lambda,\beta}\in \mathfrak{C}$, implies that  $I_{\lambda,\beta}$ solves Problem \eqref{prob:lagr:2}.

Notably, if there exist $\lambda^*_m\in \RR$ and $\beta^*_m> 0$ such that
\begin{equation}\label{eqn:equality-M-V}
\EE[I_{\lambda^*_m,\beta^*_m}(X)]=m\qquad\text{and}\qquad var[I_{\lambda^*_m,\beta^*_m}(X)]=\nu,
\end{equation}
then  $I_{\lambda^*_m,\beta^*_m}$ solves Problem \eqref{eqn:optimizationP}.  We prove this  by contradiction. Indeed, if there exists $I_*\in \mathfrak{C}_m$ such that $\EE[U(W_{I_{*}}(X))]>\EE[U(W_{I_{\lambda^*_m,\beta^*_m}}(X))]$, then we can obtain $$U_{\lambda^*_m,\beta^*_m}(I_*)>U_{\lambda^*_m,\beta^*_m}(I_{\lambda^*_m,\beta^*_m}),$$ which contradicts the fact that $I_{\lambda^*_m,\beta^*_m}$ solves Problem \eqref{prob:lagr:2} with $\lambda=\lambda^*_m$ and $\beta=\beta^*_m.$ Therefore, we only need to show the existence of $\lambda^*_m$ and $\beta^*_m$.

It follows from \eqref{eqn:f-2}-\eqref{derivative:f} that $I_{\lambda,\beta}$ satisfies the incentive-compatible condition, i.e., $I_{\lambda,\beta}\in \mathcal{IC}$. Such an observation motivates us to consider an auxiliary problem
	
	\begin{equation}\label{eqn:optimizationP:auxiliary}
	\max_{I\in\mathcal{IC}_m}\ \ \EE[U(W_I(X))],
	\end{equation}	
	where $$\mathcal{IC}_m:=\left\{I(x)\in \mathcal{IC}: var[I(X)]\leqslant \nu,\, \EE[I(X)]=m \right\}.$$ Problem \eqref{eqn:optimizationP:auxiliary} differs from  Problem \eqref{eqn:optimizationP} in that the feasible set is $\mathcal{IC}_m$ instead of $\mathfrak{C}_m$. 
		In what follows, we show that $\mathcal{IC}_m\neq\emptyset$, and that there exists a unique optimal solution $I^*_m$ to Problem \eqref{eqn:optimizationP:auxiliary} satisfying $var[I^*_m(X)]=\nu$. Indeed, for any $m\in(m_L, m_U)$, let $\theta\in(0,1)$ be such that $m=\theta m_U+(1-\theta)m_L$. Denote
		$$
		I_{\theta}(x):=\theta (x\wedge K_U)+(1-\theta)(x-d_L)_+.
		$$
		It follows that $\EE[I_{\theta}(X)]=m$ and $$\sqrt{var[I_{\theta}(X)]}\leqslant \theta \sqrt{var[X\wedge K_U]}+(1-\theta)\sqrt{var[(X-d_L)_+]}=\sqrt{\nu}.$$ As a consequence,  $I_{\theta}\in \mathcal{IC}_m$; and hence $\mathcal{IC}_m$ is nonempty. Moreover, note that there must exist  $d_m\in[0, d_L)$ such that $\EE[(X-d_m)_+]=m$. For any $I\in \mathcal{IC}_m$,  define
		$$
		\tilde{I}_{\alpha}(x):=\alpha I(x)+(1-\alpha)(x-d_m)_+,\; \alpha\in[0,1].
		$$
		Then, we have $\EE[\tilde{I}_{\alpha}(X)]=m$ and $$var[\tilde{I}_1(X)]=var[I(X)]\leqslant \nu=var[(X-d_L)_+]\leqslant var[(X-d_m)_+]=var[\tilde{I}_0(X)],$$
		where the last inequality follows from  the fact that $var[(X-d)_+]$ is decreasing in $d$.  Because $var[\tilde{I}_{\alpha}(X)]$ is continuous in $\alpha$, there must exist  $\alpha^{*}\in[0,1]$ such that $var[\tilde{I}_{\alpha^{*}}(X)]=\nu$. Lemma \ref{lemma:convex:d} yields that $$X\wedge d_m\leqslant_{cx}X-I(X),$$ leading to $$\EE[U(W_{\tilde{I}_{\alpha^*}}(X))]\geqslant \alpha^*\EE[U(W_I(X))]+(1-\alpha^*)\EE[U(W_{(X-d_m)_+}(X))]\geqslant \EE[U(W_I(X))].$$  This means that $I$ is no better than $\tilde{I}_{\alpha^{*}}$. Together with the Arz\'{e}la-Ascoli theorem, the above analysis  implies that there exists an optimal solution to Problem \eqref{eqn:optimizationP:auxiliary} that binds the variance constraint. Finally, a similar argument to the proof of Proposition 3.1 in \citet{ChiW2020} further shows that the optimal solution to Problem \eqref{eqn:optimizationP:auxiliary} must be unique.

By defining  $U^*(\lambda,\beta):=U_{\lambda,\beta}(I_{\lambda,\beta})$ and for any $\alpha\in[0,1]$, we have
\begin{eqnarray*}
	U^*\big(\alpha\lambda_1+(1-\alpha)\lambda_2,\alpha\beta_1+(1-\alpha)\beta_2\big) & = & \max\limits_{ {I\in \mathcal{IC}}}  U_{\alpha\lambda_1+(1-\alpha)\lambda_2,\alpha\beta_1+(1-\alpha)\beta_2}(I)\\
	&= & \max\limits_{ {I\in \mathcal{IC}}} \big\{\alpha U_{\lambda_1,\beta_1}(I)+(1-\alpha)U_{\lambda_2,\beta_2}(I)\big\} \\
	&\leqslant & \max\limits_{ {I\in \mathcal{IC}}} \big\{\alpha U_{\lambda_1,\beta_1}(I)\}+\max\limits_{ {I\in \mathcal{IC}}} \{(1-\alpha) U_{\lambda_2,\beta_2}(I)\big\} \\
	&= & \alpha \max\limits_{ {I\in \mathcal{IC}}} \big\{U_{\lambda_1,\beta_1}(I)\big\}+(1-\alpha)\max\limits_{ {I\in \mathcal{IC}}} \big\{U_{\lambda_2,\beta_2}(I)\big\} \\
	&= & \alpha U^*(\lambda_1,\beta_1)+(1-\alpha)U^*(\lambda_2,\beta_2),
\end{eqnarray*}
where the second equality is due to the fact that $U_{\lambda,\beta}(I)$ is linear in $(\lambda,\beta)$ for any given $I\in \mathcal{IC}$. Thus, $U^*(\lambda,\beta)$ is convex in $(\lambda,\beta)$.

Furthermore, denoting by $U^{**}$  the maximal EU value of the insured's final wealth in Problem \eqref{eqn:optimizationP:auxiliary}, we have $U^{**}\leqslant U(w_0-(1+\rho)m)< \infty$. On the other hand, for any given $I\in \mathcal{IC}$ satisfying $\EE[I(X)]\neq m$ or $var[I(X)]>\nu$, it is easy to show that $\min\limits_{\lambda\in\RR, \beta\geqslant 0} U_{\lambda,\beta}(I)=-\infty$. Noting that $$ U_{0,0}(I)=\EE[U(w_0-X+I(X)-(1+\rho)m)],$$ we have $$\max\limits_{ {I\in \mathcal{IC}}} \min\limits_{\lambda\in\RR,\beta\geqslant 0} U_{\lambda,\beta}(I)\leqslant U^{**}.$$ Now,
$$
\max_{I\in \mathcal{IC}}U_{\lambda,\beta}(I)\geqslant \max_{I\in \mathcal{IC}_m}U_{\lambda,\beta}(I)\geqslant  \max_{I\in \mathcal{IC}_m}\EE[U(w_0-X+I(X)-(1+\rho)m)],\;\forall \lambda\in\RR,\; \beta\geqslant 0,
$$
leading to
$U^{**}\leqslant \min\limits_{\lambda\in\RR,\beta\geqslant 0} \max\limits_{ {I\in \mathcal{IC}}} U_{\lambda,\beta}(I)$.  Since $U_{\lambda,\beta}(I)$ is continuous and strictly concave in $I$, we can obtain from Lemmas \ref{compact} and \ref{sion} that $$\min\limits_{ {I\in \mathcal{IC}}} \max\limits_{\lambda\in\RR,\beta\geqslant 0} -U_{\lambda,\beta}(I)=\max\limits_{\lambda\in\RR,\beta\geqslant 0} \min\limits_{ {I\in \mathcal{IC}}} -U_{\lambda,\beta}(I),$$ which implies $$\max\limits_{ {I\in \mathcal{IC}}} \min\limits_{\lambda\in\RR,\beta\geqslant 0} U_{\lambda,\beta}(I)=\min\limits_{\lambda\in\RR,\beta\geqslant 0} \max\limits_{ {I\in \mathcal{IC}}} U_{\lambda,\beta}(I)=U^{**}.$$


By denoting $\lambda_U:=(U^{**}+1)/m$, we have $$U^*(\lambda,\beta)= \max\limits_{ {I\in \mathcal{IC}}} U_{\lambda,\beta}(I)\geqslant U_{\lambda,\beta}(0)\geqslant \lambda_U m=U^{**}+1, \;\;\forall \lambda\geqslant \lambda_U,\;\beta\geqslant 0.$$  Furthermore, we define $\lambda_L:=-(U^{**}+1)/\left(\EE[X\wedge K_1]-m\right)$, where $K_1$ is determined by $\EE\left[(X\wedge K_1)^2\right]=\nu+m^2<\EE\left[(X\wedge K_U)^2\right]$. Clearly,  $K_1\in(0,K_U)$. If $\EE[X\wedge K_1]\leqslant m$, then $var[X\wedge K_1]\geqslant \nu=var[X\wedge K_U]$, which contradicts  the definition of $K_U$. Thus, we must have   $\EE[X\wedge K_1]> m$, and hence
$$U^*(\lambda,\beta)= \max\limits_{ {I\in \mathcal{IC}}} U_{\lambda,\beta}(I)\geqslant U_{\lambda,\beta}(x\wedge K_1)\geqslant -\lambda_L\left(\EE[X\wedge K_1]-m\right)\geqslant U^{**}+1,\;\;\forall \lambda\leqslant \lambda_L,\;\beta\geqslant 0.$$ Similarly, let $\beta_U:=\frac{U^{**}+1}{\nu-var[X\wedge K_2]}$, where $K_2$ is determined by $\EE[X\wedge K_2]=m$. Here, we have $K_2< K_U$, which in turn implies $var[X\wedge K_2]<\nu$ based on the definition of $K_U$. For any $\beta\geqslant \beta_U$ and $\lambda\in\RR$, we obtain $$U^*(\lambda,\beta)= \max\limits_{ {I\in \mathcal{IC}}} U_{\lambda,\beta}(I)\geqslant U_{\lambda,\beta}(x\wedge K_2)\geqslant  \beta_U(\nu-var[X\wedge K_2])=U^{**}+1.$$

The above analysis indicates that $$U^{**}= \min_{\lambda_L\leqslant \lambda\leqslant \lambda_U\atop 0\leqslant \beta\leqslant \beta_U} U^*(\lambda,\beta).$$
Thus, it follows from the convexity of $U^*(\lambda,\beta)$ and Weierstrass's theorem that there exist $\lambda^*_m\in [\lambda_L,\lambda_U]$ and $\beta^*_m\in [0,\beta_U]$ such that $$U^{**}=U^*(\lambda^*_m,\beta^*_m)=\min_{\lambda\in\RR, \beta\geqslant 0}U^*(\lambda,\beta).$$ Moreover, we have
\begin{align}\label{dualequation}
U^*(\lambda^*_m,\beta^*_m)=\max_{I\in \mathcal{IC}}U_{\lambda^*_m,\beta^*_m}(I) \geqslant\max_{I\in \mathcal{IC}_m, var[I(X)]=\nu}U_{\lambda^*_m,\beta^*_m}(I)=U^{**},
\end{align}
where the second equality is derived from the fact that the optimal solution to Problem \eqref{eqn:optimizationP:auxiliary} binds the variance constraint.  In addition, thanks to \eqref{dualequation}, the unique optimal solution $I^*_m$ of Problem \eqref{eqn:optimizationP:auxiliary} must solve Problem \eqref{prob:lagr:2} with $\lambda=\lambda^*_m$ and $\beta=\beta^*_m$.  Note that $U_{\lambda^*_m,\beta^*_m}(I)$ is strictly concave in $I$; therefore, $I^*_m(X)=I_{\lambda^*_m,\beta^*_m}(X)$ almost surely. As a result, $I_{\lambda^*_m,\beta^*_m}$  satisfies \eqref{eqn:equality-M-V} and  must be  a solution to Problem \eqref{eqn:optimizationP:auxiliary}.

Finally, we show that $\beta^*_m>0$. Otherwise, if $\beta^*_m=0$, then $I_{\lambda^*_m,\beta^*_m}(x)=(x-d_m)_+$, yielding a contradiction
$$
\nu=var[(X-d_m)_+]> var[(X-d_L)_+]=\nu.
$$
The proof is thus complete. \\

\noindent
\textbf{Proof of Proposition~\ref{lemma:derivative}:} \\

Denote
$$
L_{\beta}(I):=\EE\big[U\big(w_0-X+I(X)-(1+\rho)\EE[I(X)]\big)\big]
-\beta\big(var[I(X)]-\nu\big),\;\beta\geqslant 0,\;I\in \mathcal{IC},
$$
which is linear in $\beta$ and  concave in $I$, because  $U''(\cdot)<0$ and $var[I(X)]$ is convex in $I$.

We denote by $U^{***}$ the maximum EU value of the insured's final wealth in Problem \eqref{eqn:optimal-IC}. Using an argument similar to that in the proof of Proposition~\ref{Pro:optimalS}, we have
$$
\max_{I\in \mathcal{IC}}\min_{\beta\geqslant 0}L_{\beta}(I)\leqslant U^{***}\leqslant \min_{\beta\geqslant 0}\max_{I\in \mathcal{IC}}L_{\beta}(I),
$$
which, together with Lemma \ref{sion}, implies
$$
U^{***}=\min_{\beta\geqslant 0}\max_{I\in \mathcal{IC}}L_{\beta}(I).
$$
Denoting $\tilde{\beta}=\frac{U^{***}+1}{\nu}$, we have
$$
\max_{I\in \mathcal{IC}}L_{\beta}(I)\geqslant L_{\beta}(0)\geqslant \beta\nu\geqslant U^{***}+1,\;\;\forall \beta\geqslant \tilde{\beta}.
$$
Furthermore, since $\max_{I\in \mathcal{IC}}L_{\beta}(I)$ is convex in $\beta$, there must exist  $\beta^*\in[0, \tilde{\beta}]$ such that
$$U^{***}=\max_{I\in \mathcal{IC}}L_{\beta^*}(I).$$ If $\beta^*=0$, then $$U^{***}=\max_{I\in \mathcal{IC}}\EE[U(W_I(X))],$$ in which case the stop-loss insurance $(x-d^*)_+$ is optimal, where $d^*$ is defined in \eqref{eqn:d-star}. However, this is contradicted by Assumption \ref{assump:smallnu}. So we must have $\beta^*>0$.

According to Lemma~\ref{Lemma:m-L}, Lemma~\ref{lemma:m_U} and Proposition~\ref{Pro:optimalS},  $I^*$, which solves Problem \eqref{eqn:simplifiedP} under Assumption \ref{assump:smallnu}, satisfies $var[I^*(X)]=\nu$ and therefore also solves the maximization problem $$\max_{I\in \mathcal{IC}}L_{\beta^*}(I).$$ Thus,  for any $I\in \mathcal{IC}$, we have $$\lim_{\alpha \downarrow 0}\frac{L_{\beta^*}\left((1-\alpha) I^*(x)+\alpha I(x)\right)-L_{\beta^*}(I^*(x))}{\alpha} \leqslant 0,$$
leading to
\begin{align*}
0\geqslant & \EE\Big[U'(W_{I^*}(X))\big(I(X)-I^*(X)-(1+\rho)\EE[I(X)-I^*(X)]\big)\Big]-2\beta^*\big(cov(I^*(X), I(X))-var[I^*(X)]\big)\\
=&\int_0^{\infty}\Big(\EE\big[U'(W_{I^*}(X))(\mathbb{I}_{\{X>t\}}-(1+\rho)\PP\{X>t\})\big] \\
& \ \ \ \ \ \ \   -2\beta^*\big(\EE[I^*(X)\mathbb{I}_{\{X>t\}}]-\EE[I^*(X)]\PP\{X>t\}\big)\Big) \big(I'(t)-{I^*}'(t)\big)\dd t\\
=& \int_0^{\mathcal{M}}\Big(\EE\big[U'(W_{I^*}(X))-2\beta^*I^*(X)|X>t\big]\\
& \ \ \ \ \ \ \ -(1+\rho)\EE\big[U'(W_{I^*}(X))]+2\beta^*\EE[I^*(X)\big]\Big)\PP\{X>t\}\big(I'(t)-{I^*}'(t)\big)\dd t\\
=& \int_0^{\mathcal{M}} \Phi_{I^*}(t) \PP\{X>t\}\big(I'(t)-{I^*}'(t)\big)\dd t,
\end{align*}
where $\Phi_{I^*}$ is defined in \eqref{Phifun} and $\mathbb{I}_A$ is the indicator function of an event $A$. The arbitrariness of  $I\in \mathcal{IC}$ and the fact that $\PP\{X>t\}>0$ for any $t<\mathcal{M}$ yield  that ${I^*}'$ should be in the form of \eqref{NSC}. The proof is complete. \\

\noindent
\textbf{Proof of Theorem \ref{main:theorem}:} \\

Let $I^*$ be optimal for Problem \eqref{eqn:simplifiedP}.
Then it follows from Proposition~\ref{lemma:derivative} that
\begin{align}\label{psi_Phi}
\Phi_{I^*}(x)=\EE\big[\psi(X)|X>x\big]
\end{align}
where
\begin{equation}\label{psi}
\psi(x):=U'(W_{I^*}(x))-2\beta^*I^*(x)-\Big((1+\rho)\EE[U'(W_{I^*}(X))]-2\beta^*\EE[I^*(X)]\Big),
\end{equation}
for some $\beta^*>0$.

Since $F_X$ is assumed to be strictly increasing on $(0,\mathcal{M})$,  it follows from Proposition~\ref{prop:two-point} that $m_L<m_U$.  Recall that solving  Problem \eqref{eqn:simplifiedP} can be reduced to solving Problem \eqref{eqn:three-case-op} under Assumption \ref{assump:smallnu}.
In the following, we proceed to solve Problem \eqref{eqn:three-case-op} with the help of Proposition~\ref{lemma:derivative}. We carry out the analysis for  three cases:
\begin{enumerate}[label=Case (\Alph*)]
	
	\item \label{optimalcase:A} If $I^*(x)=(x-d_L)_+$, then $\psi$ is strictly increasing on $[0, d_L]$ and strictly decreasing on $[d_L,\mathcal{M})$. If there exists  $\tilde{x}\in [d_L,\mathcal{M})$ such that $\psi(\tilde{x})<0$, then $\psi(x)<0$ and thus $\Phi_{I^*}(x)<0$ $\forall x\in [\tilde{x},\mathcal{M}),$ which contradicts Proposition~\ref{lemma:derivative}. Therefore,   $\psi(x)\geqslant 0$ $\forall x\in [d_L,\mathcal{M})$ and  $\psi(d_L)>0$. Because  $\psi(x)$ is continuous in $x$ and $F_X$ is strictly increasing on $(0,\mathcal{M})$, there must exist $\epsilon>0$ such that $\Phi_{I^*}(x)>0$ $\forall x\in[d_L-\epsilon, d_L)$, contradicting Proposition~\ref{lemma:derivative}. So $(x-d_L)_+$  cannot  be an optimal solution to Problem \eqref{eqn:three-case-op}.
	
	\item \label{optimalcase:B} If $I^*(x)=x\wedge K_U$, then $\psi$ is strictly decreasing on $[0, K_U]$ and strictly increasing on $[K_U,\mathcal{M})$. Using a similar argument as the one for \ref{optimalcase:A}, we deduce $\psi(x)\leqslant 0$ $\forall x\in [K_U,\mathcal{M})$ and  $\psi(K_U)<0$.  Since $\psi$ is a continuous function, there exists  $\epsilon>0$ such that $\Phi_{I^*}(x)<0$  $\forall x\in[K_U-\epsilon, K_U)$, contradicting Proposition~\ref{lemma:derivative}. As a result, $x\wedge K_U$  cannot  be optimal  to Problem \eqref{eqn:three-case-op} either.
	
	\item Hence, the optimal solution must be of the form $I^*(x)=I_{\lambda^*_m, \beta^*_m}(x)$ for some $m\in(m_L,m_U)$. Noting that  $I_{\lambda^*_m, \beta^*_m}'(x)\in(0,1)$ for sufficiently large $x$ due to $\beta^*_m>0$ and \eqref{derivative:f}, we deduce   from Proposition~\ref{lemma:derivative} that $\Phi_{I^*}(x)=0$ for sufficiently large $x$. Thus, together with \eqref{solution:equation}, \eqref{psi} and \eqref{psi_Phi}, Proposition~\ref{lemma:derivative} further implies that
\begin{equation}\label{eqn:beta-lambda}
\beta^*_m=\beta^*\qquad\text{and}\qquad \lambda^*_m=(1+\rho)\EE[U'(W_{I^*}(X))]-2\beta^*\EE[I^*(X)].
\end{equation}
 Next, we consider two subcases that depend on the values of $\lambda^*_m$ and $U'(w_0-(1+\rho)m)$.
	
	\begin{enumerate}[label=(C.\arabic*)]
		\item If \label{optimalcase:C1}$\lambda^*_m<U'(w_0-(1+\rho)m)$, then  $I^*(x)=x$  $\forall 0\leqslant x\leqslant \hat{x}$ and
		\[
		U'(w_0-x+I^*(x)-(1+\rho)m)-2\beta^*I^*(x)-\lambda^*_m\left\{\begin{array}{ll}=0,& x\geqslant \hat{x},\\
		>0,&x<\hat{x},\end{array}\right.
		\]
		where $\hat{x}:=\frac{U'(w_0-(1+\rho)m)-\lambda^*_m}{2\beta^*}>0$. Therefore, it follows that
		\begin{eqnarray*}
			0&<&\EE[U'(w_0-X+I^*(X)-(1+\rho)m)-2\beta^* I^*(X)-\lambda^*_m] \\
			&=& \EE[U'(W_{I^*}(X))]-2\beta^* \EE[I^*(X)] -(1+\rho)\EE[U'(W_{I^*}(X))]+2\beta^*\EE[I^*(X)]\\
			&=& -\rho\EE[U'(W_{I^*}(X))],
		\end{eqnarray*}
		leading to a contradiction.

		\item Consequently, we must have $\lambda^*_m\geqslant U'(w_0-(1+\rho)m)$, in which case $I^*$ is coinsurance above a deductible (i.e., \eqref{eqn:f-2}).   Similarly to Subcase \ref{optimalcase:C1}, we can show that
		\[
		U'(w_0-x+I^*(x)-(1+\rho)m)-2\beta^* I^*(x)-\lambda^*_m\left\{
		\begin{array}{ll}
		=0,&x\geqslant \tilde{d},\\
		<0,&x<\tilde{d},
		\end{array}\right.
		\]
		where $\tilde{d}:=w_0-(1+\rho)m-(U')^{-1}(\lambda_m^*)\geqslant 0$. This,  together with \eqref{eqn:beta-lambda}, yields
\begin{equation}\label{eqn:lambda}
\lambda^*_m=U'(w_0-\tilde{d}-(1+\rho)m)
\end{equation}
 and
	\begin{equation}\label{eqn:rho-equal}
			 -\rho\EE[U'(W_{I^*}(X))] =\EE\Big[\big(U'(w_0-X+I^*(X)-(1+\rho)m)-2\beta^*I^*(X)-\lambda^*_m\big)\mathbb{I}_{\{X<\tilde{d}\}}\Big].
	\end{equation}
Therefore, $\rho=0$ if and only if $\tilde{d}=0$.
Moreover,  if $\rho=0$, the above analysis  indicates  that $I^*$ solves the equation  \eqref{eqn:zero-rho}.
 Otherwise, if $\rho>0$, then it follows from \eqref{eqn:beta-lambda} and \eqref{eqn:rho-equal} that
$$
\EE[U'(W_{I^*}(X))]=\frac{\EE[U'(w_0-X-(1+\rho)m)\II_{\{X\leqslant \tilde{d}\}}]+2\beta^*mF_X(\tilde{d})}{1-(1+\rho)\PP\{X>\tilde{d}\}},
$$
which in turn implies $\tilde{d}> VaR_{\frac{1}{1+\rho}}(X)$. Plugging the above equation and \eqref{eqn:lambda} into  \eqref{eqn:beta-lambda} yields
$$
2\beta^*=\frac{1}{m \rho}\left(U'(w_0-\tilde{d}-(1+\rho)m)-(1+\rho)\EE[U'(w_0-X\wedge\tilde{d}-(1+\rho)m)]\right).
$$
As a result, the optimal solution $I^*$  must  be given by \eqref{eqn:I-star}. The proof is complete.
\end{enumerate}
\end{enumerate}

\bigskip

\noindent
\textbf{Proof of Corollary \ref{Vajda}:} \\

We prove for the case in which $\rho>0$, but note that the proof to the case of $\rho=0$ is similar and indeed simpler. It follows from \eqref{eqn:f-2} and \eqref{derivative:f} that $I_{\lambda^*_m, \beta^*_m}(x)=0$ for $x\leqslant \tilde{d}$ and $I'_{\lambda^*_m, \beta^*_m}(x)=f'_{\lambda^*_m, \beta^*_m}(x)$ increases in $x$ for $x>\tilde{d}$, where we use the fact that $x-I_{\lambda^*_m, \beta^*_m}(x)$ increases in $x$, and $\beta^*_m>0$. Moreover, for $x>\tilde{d}$, taking the derivative of $\frac{I_{\lambda^*_m, \beta^*_m}(x)}{x}$ with respect to $x$ yields $$\Big(\frac{I_{\lambda^*_m, \beta^*_m}(x)}{x}\Big)'=\Big(\frac{f_{\lambda^*_m, \beta^*_m}(x)}{x}\Big)'=\frac{f'_{\lambda^*_m, \beta^*_m}(x)x-f_{\lambda^*_m, \beta^*_m}(x)}{x^2}\geqslant \frac{\int_{\tilde{d}}^x \big(f'_{\lambda^*_m, \beta^*_m}(x)-f'_{\lambda^*_m, \beta^*_m}(y)\big)\dd y}{x^2}\geqslant 0.$$
This completes the proof. \\

\noindent
\textbf{Proof of Theorem \ref{CS:initialwealth}:} \\

We first prove the result {\it assuming} that there exists  $x_0\in\left(0,\mathcal{M}\right]$ such that
\begin{equation}\label{eqn:upcrosscondition}
y_0:= e_{I_1^*}(x_0)=e_{I_2^*}(x_0) \quad\text{and}\quad \phi(x_0-y_0)>\frac{\beta_1^*}{\beta_2^*},
\end{equation}
where $\phi$ is defined in \eqref{phi}. First, we show that $e_{I_1^*}(x)$ up-crosses $e_{I_2^*}(x)$ in a neighbour of $x_0$. To this end, we first note
$$
 e_{I_1^*}'(x_0)- e_{I_2^*}'(x_0)=\frac{1}{1+\frac{2\beta_1^*}{-U''(w_1-x_0+y_0)}}-\frac{1}{1+\frac{2\beta_2^*}{-U''(w_2-x_0+y_0)}}>0,
$$
which in turn implies that there exists an $\epsilon>0$ such that
\[
\left\{\begin{array}{ll} e_{I_1^*}(x)< e_{I_2^*}(x),\quad&x\in(x_0-\epsilon, x_0),\\
e_{I_1^*}(x)>e_{I_2^*}(x),\quad&x\in(x_0,x_0+\epsilon).\end{array}\right.\]
Next, we  show that there exists no $y>x_0$ such that $e_{I_1^*}(y)=e_{I_2^*}(y)$. Otherwise, if  such  $y$ existed, then the increasing property of $\phi(z)$ in $z$, together with the strictly increasing property of $x-e_{I_i^*}(x)$ in $x$, would imply $$\phi(y-e_{I_1^*}(y))=\phi(y-e_{I_2^*}(y))>\phi(x_0-e_{I_2^*}(x_0))>\frac{\beta_1^*}{\beta_2^*},$$ which would in turn yield that $e_{I_1^*}$ up-crosses $e_{I_2^*}$ in a neighbour of $y$. This, however, contradicts the fact that $e_{I_1^*}(x)>e_{I_2^*}(x)$ for $x\in(x_0,x_0+\epsilon)$.

Now define
$$
x_1:=\sup\left\{x\in\left[0, x_0\right): e_{I_2^*}(x)<  e_{I_1^*}(x)\right\}.
$$
If $x_1=0$, then $e_{I_1^*}$ up-crosses $e_{I_2^*}$, which contradicts Lemma \ref{distribution:same}. Thus, we must have $0<x_1<x_0$ because $e_{I_1^*}(x)< e_{I_2^*}(x)$ for all $x\in (x_0-\epsilon, x_0)$. Moreover, it follows readily that
$$
\phi\left(x_1-e_{I_1^*}(x_1)\right)\leqslant \frac{\beta_1^*}{\beta_2^*}.
$$
In the following, we show that there exists no point $x_2\in\left(0, x_1\right)$ such that $e_{I_1^*}(x_2)=  e_{I_2^*}(x_2)$.
Indeed, if such $x_2$  existed, then, noting that $\phi$ is strictly increasing, we would have $$\phi\left(x_2-e_{I_1^*}(x_2)\right)<\frac{\beta_1^*}{\beta_2^*},$$ leading to $ e_{I_1^*}'(x_2)-e_{I_2^*}'(x_2)<0.$ In other words, $e_{I_2^*}$ up-crosses $e_{I_1^*}$ at  $x_2$. This contradicts the fact that $e_{I_2^*}$ up-crosses $e_{I_1^*}$ at $x_1$.  Therefore, we can now conclude that $e_{I_2^*}$ up-crosses $e_{I_1^*}$ twice when \eqref{eqn:upcrosscondition} is satisfied.

Let us now consider the case in which \eqref{eqn:upcrosscondition} is not satisfied.  We study two cases:
\begin{enumerate}[label=Case (\Alph*) ]
	\item \label{upcross:casea} If there exists no $x\in \left(0,\mathcal{M}\right]$ such that $e_{I_1^*}(x)=e_{I_2^*}(x)$, then it is easy to show from  $\EE[e_{I_i^*}(X)]=0$ that $e_{I_1^*}(X)$ and $e_{I_2^*}(X)$ have the same distribution. This contradicts Lemma \ref{distribution:same}.
	\item Otherwise, any $x\in \left(0,\mathcal{M}\right]$ satisfying $e_{I_1^*}(x)=e_{I_2^*}(x)$ must have
	$$\phi(x-e_{I_i^*}(x))\leqslant \frac{\beta_1^*}{\beta_2^*}.
	$$
If the above inequality is {\it always} strict, then the previous analysis  shows that $e_{I_2^*}$ up-crosses $e_{I_1^*}$, which contradicts Lemma \ref{distribution:same}. Hence,  there must exist  $x_3\in\left(0,\mathcal{M}\right]$ such that
	$$y_3:=e_{I_1^*}(x_3)=e_{I_2^*}(x_3)\quad\text{and}\quad\phi(x_3-y_3)=\frac{\beta_1^*}{\beta_2^*}.
	$$
	In this case, we further divide our analysis into three subcases.
	\begin{enumerate}[label=(B.\arabic*)]
	\item \label{upcross:caseb1} If $e_{I_2^*}$ up-crosses $e_{I_1^*}$ at  $x_3$, then the above analysis would imply that no up-crossing occurs before $x_3$. A similar argument indicates that $e_{I_1^*}(X)$ and $e_{I_2^*}(X)$ would have the same distribution, which would not be possible.
	\item Otherwise, if $e_{I_1^*}$ up-crosses $e_{I_2^*}$ at  $x_3$, then  the previous analysis shows that $e_{I_2^*}$ up-crosses $e_{I_1^*}$ twice.
	\item Finally, if no up-crossing happens at $x_3$, then we can simply neglect this single point $x_3$ in the analysis. If there further exists $x_4\in (0,x_3)$ satisfying $e_{I_1^*}(x_4)=e_{I_2^*}(x_4)$, then we  have
		$$\phi(x_4-e_{I_i^*}(x_4))< \frac{\beta_1^*}{\beta_2^*}.
		$$
	In this case, the previous analysis indicates that $e_{I_2^*}$ up-crosses $e_{I_1^*}$ at $x_4$, which contradicts Lemma \ref{distribution:same}. Otherwise, if there exists no $x\in \left(0,x_3\right)$ such that $e_{I_1^*}(x)=e_{I_2^*}(x)$, then it follows from  $\EE[e_{I_i^*}(X)]=0$ that $e_{I_1^*}(X)$ and $e_{I_2^*}(X)$ have the same distribution. This again contradicts Lemma \ref{distribution:same}.
	\end{enumerate}
\end{enumerate}

In summary, we have shown that  $e_{I_2^*}$ up-crosses $e_{I_1^*}$ twice. Because $e_{I_1^*}(X)$ and $e_{I_2^*}(X)$ have the same first two moments, we can easily see that the insurer's profit with the wealthier insured,  $-e_{I_2^*}(X)$,  has less downside risk than the counterpart with the less wealthy insured,  $-e_{I_1^*}(X)$,  when the insurance pricing is actuarially fair. The
proof is  complete. \\

\noindent
\textbf{Proof of Corollary \ref{initialwealth:betaandm}:} \\

(i) It follows from the proof of Theorem \ref{CS:initialwealth} that 
$e_{I_2^*}(0)<e_{I_1^*}(0),$ which is equivalent to $\EE[I^*_1(X)]<\EE[I^*_2(X)]$. Suppose that $e_{I_2^*}$ up-crosses $e_{I_1^*}$ at points $x_0$ and $x_1$ with $0<x_0<x_1$. Then $e_{I_1^*}(x_j)=e_{I_2^*}(x_j)$ for $j=0,1$, which, together with \eqref{tilde:equation}, implies
\begin{eqnarray}\nonumber
&& U'(w_1-x_0+e_{I_1^*}(x_0))-U'(w_2-x_0+e_{I_1^*}(x_0))-2(\beta_{1}^*-\beta_2^*)e_{I_1^*}(x_0) \\ \label{equal:11}
&&= U'(w_1-x_1+e_{I_1^*}(x_1))-U'(w_2-x_1+e_{I_1^*}(x_1))-2(\beta_{1}^*-\beta_2^*)e_{I_1^*}(x_1).
\end{eqnarray}
 Denote  $$L(y):=U'(w_1-y)-U'(w_2-y),\;\;y\in[0, w_1).$$  Then  $$L'(y)=-U''(w_1-y)+U''(w_2-y)>0$$ due to $w_1<w_2$ and $U'''>0$. Recalling that $x-e_{I_i^*}(x)$ and $e_{I_i^*}(x)$ are strictly increasing in $x$, we deduce from \eqref{equal:11} that $\beta_2^*<\beta_1^*.$ \\


(ii) Let us denote $\widetilde{w}_i:=w_i-\EE[I^*_i(X)]$, $i=1,2.$ The following analysis depends on the comparison between $\widetilde{w}_1$ and $\widetilde{w}_2$.

\begin{enumerate}[label=Case (\Alph*) ]
	
	\item  If $\widetilde{w}_1=\widetilde{w}_2$ and there exists  $z\in[0,\mathcal{M}]$ such that $I^*_{1}(z)= I^*_{2}(z)$, then it follows from \eqref{eqn:zero-rho} 	
	that $2(\beta^*_{1}-\beta^*_{2})I^*_{i}(z)=0,$ $i=1,2$.  Recalling that $\beta_2^*<\beta_1^*,$ we have $I^*_{i}(z)=0$, which implies $z=0$. Because $\EE[I^*_1(X)]<\EE[I^*_2(X)]$, we conclude that $I^*_{1}(x)<I^*_{2}(x)$  $\forall x\in(0,\mathcal{M}]$.
	
	\item Otherwise, if $\widetilde{w}_1\neq \widetilde{w}_2$, we can show that either $I^*_{1}(x)<I^*_{2}(x)$  $\forall x> 0$ or $I^*_{1}$ up-crosses $I^*_{2}$. Indeed, if there exists  $z\in(0,\mathcal{M}]$ such that $I^*_{1}(z)= I^*_{2}(z)$, then we have
	\begin{eqnarray*}
	&& U'(\widetilde{w}_1-z+I^*_{1}(z))-U'(\widetilde{w}_1)-(U'(\widetilde{w}_2-z+I^*_{2}(z))-U'(\widetilde{w}_2))\\ \nonumber
	&&= 2 (\beta_1^*-\beta_2^*) I^*_{2}(z)>0.
	\end{eqnarray*}
	Noting that $U'(\widetilde{w}-y)-U'(\widetilde{w})$ is strictly decreasing in $\widetilde{w}$ for any $y>0$ because of $U'''>0$, we  must  have $\widetilde{w}_1<\widetilde{w}_2.$  Furthermore, we have
	$$
		 {I_i^*}'(z)= \frac{1}{1+\frac{2\beta_i^*}{-U''(\widetilde{w}_i-z+I_i^*(z))}} = \frac{1}{1+\frac{U'(\widetilde{w}_i-z+I^*_{i}(z))-U'(\widetilde{w}_i)}{-U''(\widetilde{w}_i-z+I_i^*(z))\times I_i^*(z)}},\;\;i=1,2.
$$
	 Denoting  $$H(w):=\frac{U'(w-y)-U'(w)}{-U''(w-y)},\;\;y\in[0, w),$$  we obtain
	\begin{eqnarray*}
		H'(w)&=& \frac{U'(w-y)-U'(w)}{-U''(w-y)} \Big[\mathcal{P}_U(w-y)+\frac{U''(w-y)-U''(w)}{U'(w-y)-U'(w)}\Big] \\
		&=& H(w)\Big[\mathcal{P}_U(w-y)-\mathcal{P}_U(w-\theta y) \Big]>0,
	\end{eqnarray*}
	where the second equality is due to the mean-value theorem with $\theta\in (0,1)$ and the last inequality is due to the assumption of strict DAP. The strictly increasing property of $H(w)$, together with the fact that $\widetilde{w}_1<\widetilde{w}_2$,  yields  ${I_2^*}'(z)<{I_1^*}'(z),$ which, in turn, implies that $I^*_{1}$ up-crosses $I^*_{2}$ at point $z$.

Otherwise, if $I^*_{1}(x)\neq I^*_{2}(x)$ $\forall x\in(0,\mathcal{M})$, then it follows from $\EE[I^*_1(X)]<\EE[I^*_2(X)]$  that $I^*_{1}(x)<I^*_{2}(x)$  $\forall x> 0$. The proof is  complete.
	\end{enumerate}

\noindent
\textbf{Proof of Theorem \ref{CS:variance}:} \\

If there exists no point $x\in (0,\mathcal{M})$ such that $e_{I_1^*}(x)=e_{I_2^*}(x)$, then $e_{I_1^*}(X)$ and $e_{I_2^*}(X)$ are equal in distribution due to $\EE[e_{I_1^*}(X)]=\EE[e_{I_2^*}(X)]=0$. This contradicts the fact that $var[e_{I_1^*}(X)]=\nu_1<\nu_2=var[e_{I_2^*}(X)]$. Therefore, we must have $e_{I_1^*}(z)=e_{I_2^*}(z)$ for some   $z\in (0,\mathcal{M})$; and thus \eqref{tilde:equation:2} implies  $$\EE[U'(w_0-X+e_{I_2^*}(X))]-\EE[U'(w_0-X+e_{I_1^*}(X))]=2(\beta^*_{1}-\beta^*_{2}) e_{I_1^*}(z).$$ We divide the following proof into two cases by comparing  $\EE[U'(w_0-X+e_{I_1^*}(X))]$ with $\EE[U'(w_0-X+e_{I_2^*}(X))]$.

\begin{enumerate}[label=Case (\Alph*) ]
	
	\item \label{variance:case1} If $\EE[U'(w_0-X+e_{I_1^*}(X))] \neq \EE[U'(w_0-X+e_{I_2^*}(X))]$, then $\beta^*_{1}\neq \beta^*_{2}$. Since $e_{I_1^*}$ is a strictly increasing function,  we have $\mathcal{X}:=\{x\in[0,\mathcal{M}]: e_{I_1^*}(x)=e_{I_2^*}(x)\}=\{z\}$.
Recalling that $var[e_{I_1^*}(X)]<var[e_{I_2^*}(X)]$, we conclude  from Lemma \ref{convex:order} that $e_{I_2^*}$ up-crosses $e_{I_1^*}.$
	\item If $\EE[U'(w_0-X+e_{I_1^*}(X))] = \EE[U'(w_0-X+e_{I_2^*}(X))]$, we have
	\begin{align}\label{case:B:result1}
	U'(w_0-x+e_{I_1^*}(x))-2\beta_{1}^*e_{I_1^*}(x)=U'(w_0-x+e_{I_2^*}(x))-2\beta_{2}^*e_{I_2^*}(x).
	\end{align}
	In this case, we further divide our analysis into two subcases based on the comparison between $\beta^*_{1}$ and $\beta^*_{2}$.
	
	\begin{enumerate}[label=(B.\arabic*)]
		\item If $\beta^*_{1}\neq \beta^*_{2}$,
		then it follows from \eqref{case:B:result1} that $e_{I_1^*}(x)=0$ $\forall x\in \mathcal{X}$. Similar to \ref{variance:case1}, we can show that  $e_{I_2^*}$ up-crosses $e_{I_1^*}.$ 	
		
		\item If $\beta^*_{1}= \beta^*_{2}$, then \eqref{case:B:result1} can be rewritten as
		\begin{align*}
		U'(w_0-x+e_{I_1^*}(x))+2\beta_{1}^*(x-e_{I_1^*}(x))=U'(w_0-x+e_{I_2^*}(x))+ 2\beta_{1}^*(x-e_{I_2^*}(x)).
		\end{align*}
		Note that $U'(w_0-y)+2\beta_1^*y$ is strictly increasing in $y$. Hence, the above equation yields $ x-e_{I_1^*}(x)=x-e_{I_2^*}(x)$ for all $x\in[0,\mathcal{M}]$, which contradicts the fact that $var[e_{I_1^*}(X)]<var[e_{I_2^*}(X)]$.
		
	\end{enumerate}
\end{enumerate}
In summary, we conclude  that $e_{I_2^*}$ up-crosses $e_{I_1^*}$.\\

\noindent
\textbf{Proof of Corollary \ref{Coro1:variance}:} \\

(i) From the proof of Theorem \ref{CS:variance}, we know that
\begin{equation}\label{eqn:comparison-I}
e_{I_2^*}(x)-e_{I_1^*}(x)\left\{
\begin{array}{ll}
<0, &x<z;\\
>0,&x\in(z,\mathcal{M}]
\end{array}\right.
\end{equation}
for some $z\in(0,\mathcal{M})$. Therefore, we have  $$\EE[I^*_1(X)]=-e_{I_1^*}(0)<-e_{I_2^*}(0)=\EE[I^*_2(X)].$$ Furthermore,
 $e_{I_1^*}'(z)-e_{I_2^*}'(z)\leqslant 0.$ Recalling that $$e_{I_1^*}'(x)=\frac{1}{1+\frac{2\beta_1^*}{-U''(w_0-x+e_{I_1^*}(x))}},$$ we have $\beta_2^*\leqslant \beta_1^*.$   As $\beta_1^*=\beta_2^*$  does not hold, as  shown  in the proof of Theorem \ref{CS:variance}, we obtain $\beta_2^*<\beta_1^*$.   \\


(ii) On the one hand, in view of \eqref{eqn:comparison-I}, it follows from $\EE[I^*_1(X)]<\EE[I^*_2(X)]$  that $I_2^*(x)> I_1^*(x)$ $\forall x\geqslant z$. On the other hand, for any $x\in [0,z),$ noting that  $e_{I_2^*}(x)< e_{I_1^*}(x)$, we deduce from  $U'''\geqslant 0$ that $$-U''(w_0-x+e_{I_1^*}(x))\leqslant -U''(w_0-x+e_{I_2^*}(x)).$$ Because $$\beta_2^*<\beta^*_1 \quad\text{and} \quad  e_{I_i^*}'(x)=\frac{1}{1+\frac{2\beta_i^*}{-U''(w_0-x+e_{I_i^*}(x))}},\;\;i=1,2,$$ we have $e_{I_1^*}'(x)<  e_{I_2^*}'(x),$ which is equivalent to ${I_1^*}'(x)<{I_2^*}'(x)$   $\forall x\in [0,z)$. Furthermore, as $I_1^*(0)=I_2^*(0)=0$, it must hold that $I_1^*(x)<I_2^*(x)$ $\forall x\geqslant 0$. The proof is complete. \\

\end{document}